\documentclass[12pt]{article}
\usepackage[letterpaper, margin=1in]{geometry}
\usepackage{amsmath, amsthm, amssymb,amsfonts, graphicx, subfigure,cases}
\usepackage[colorlinks,citecolor=blue,linkcolor=blue, urlcolor=blue, anchorcolor=blue]{hyperref}
\usepackage{braket}
\usepackage[ruled,linesnumbered]{algorithm2e}
\usepackage{enumerate}
\usepackage{booktabs}
\usepackage{tablefootnote}
\usepackage{tikz}
\usetikzlibrary{shapes.geometric}

\newtheorem{lemma}{Lemma}
\newtheorem{theorem}{Theorem}

\newtheorem{fact}{Fact}
\newtheorem{claim}{Claim}
\newtheorem{problem}{Problem}

\theoremstyle{definition}

\newtheorem{remark}{Remark}

\def\norm#1{\left\| #1 \right\|}
\def\proj#1{\ket{#1}\bra{#1}}

\usepackage{cleveref}
\crefname{equation}{Eq.}{Eqs.}
\Crefname{equation}{Eq.}{Eqs.}

\graphicspath{{./figures/}}

\title{Quantum phase discrimination with applications to quantum search on graphs\footnote{Authors listed in alphabetical order}}
\author{Guanzhong Li\thanks{Email: ligzh9@mail2.sysu.edu.cn},~ Lvzhou Li\thanks{Email: lilvzh@mail.sysu.edu.cn (corresponding author)}~ and Jingquan Luo\thanks{Email: luojq25@mail2.sysu.edu.cn}\\
\small{{\it Institute of Quantum Computing and Software,}}
\small{{\it School of Computer Science and Engineering,}}\\
\small {{\it  Sun Yat-sen University, Guangzhou 510006, China}}}

\begin{document}

\maketitle
\begin{abstract}
We study the phase discrimination problem, in which we want to decide whether the eigenphase $\theta\in(-\pi,\pi]$ of a given eigenstate $\ket{\psi}$ with eigenvalue $e^{i\theta}$ is zero or not, using applications of the unitary $U$ provided as a black box oracle.
We propose a quantum algorithm named {\it quantum  phase discrimination(QPD)} for this task, with optimal query complexity $\Theta(\frac{1}{\lambda}\log\frac{1}{\delta})$ to the oracle $U$, where $\lambda$ is the gap between zero and non-zero eigenphases and $\delta$ the allowed one-sided error.
The quantum circuit is simple, consisting of only one ancillary qubit and a sequence of controlled-$U$ interleaved with single qubit $Y$ rotations, whose angles are given by a simple analytical formula.
Quantum phase discrimination could become a fundamental subroutine in other quantum algorithms, as we present two applications to quantum search on graphs:
\begin{enumerate}
    \item \textbf{Spatial search on graphs.} Inspired by the structure of QPD, we
propose a new quantum walk model, and based on them we tackle the  spatial search problem, obtaining a novel quantum search algorithm. For any graph with any number of marked vertices, the quantum algorithm that can find a marked vertex with probability $\Omega(1)$ in total evolution time $ O(\frac{1}{\lambda \sqrt{\varepsilon}})$ and query complexity $ O(\frac{1}{\sqrt{\varepsilon}})$, where $\lambda$ is the gap between the zero and non-zero eigenvalues of the graph Laplacian and $\varepsilon$ is a lower bound on the proportion of marked vertices.
    
    \item \textbf{Path-finding on graphs.} By using QPD, we reduce the query complexity of a path-finding algorithm proposed by Li and Zur [arxiv: 2311.07372] from $\tilde{O}(n^{11})$ to $\tilde{O}(n^8)$, in a welded-tree circuit graph with $\Theta(n2^n)$ vertices.
\end{enumerate}
Besides these two applications, we argue that more quantum algorithms might benefit from QPD.
\end{abstract}

\section{Introduction}
Quantum phase estimation~\cite{phase_estimation} is a fundamental subroutine used widely in quantum computation, such as Shor's order-finding algorithm~\cite{Shor} reinterpreted in~\cite{Mosca98}, quantum approximate counting~\cite{amplitude_amplification}, HHL algorithm~\cite{HHL} for quantum linear-systems problem, quantum walk search on Markov chains~\cite{MNRS}, estimating ground-state energies in quantum chemistry~\cite{molecular05}, quantum Metropolis sampling~\cite{Temme2011}, calculating Betti numbers in topological data analysis~\cite{Lloyd2016}, and many others.
The task of phase estimation is to estimate the eigenphase $\theta$ of a given eigenstate $\ket{\psi}$ of a unitary $U$ with unknown eigenvalue $e^{i\theta}$.
In this work, we study a highly related problem named \textit{phase discrimination} stated formally below.

\begin{problem}\label{def:phase}
    Suppose $\ket{\psi}$ is an eigenvector of a unitary $U$ with corresponding eigenvalue $e^{i\phi}$.
    The task is to distinguish the following two cases:
    \begin{equation*}
        \mathrm{(i)}\, \phi=0; \quad \mathrm{or}\quad \mathrm{(ii)}\,  |\phi|\in [\lambda,\pi].
    \end{equation*}
    The eigenvector $\ket{\psi}$ is provided as a quantum state, and the goal is to minimize the number of applications of the unitary $U$ (and possibly $U^\dagger$ and their controlled versions).
\end{problem}

An algorithm is said to solve Problem~\ref{def:phase} with bounded error $\delta$ if it outputs the correct answer with probability greater than $1-\delta$, and it is said to have one-sided error $\delta$ if it additionally does not err in the case of $\phi=0$.

A trivial approach to solve the above problem with bounded error $\delta$ is to perform quantum phase estimation (QPE)~\cite{phase_estimation} with enough precision.
This would, however, require $O(\frac{1}{\delta\lambda})$ calls to controlled-$U$ and $O(\log(\frac{1}{\delta\lambda}))$ ancillary qubits.
The dependence on $\delta$ can be improved by repeated QPE without intermediate measurement~\cite[Theorem 6]{MNRS}, which uses $O(\log(\frac{1}{\delta})\frac{1}{\lambda})$ calls to controlled-$U$ and $O(\log(\frac{1}{\delta})\log(\frac{1}{\lambda}))$ ancillary qubits.
If intermediate measurement is allowed, we only need $O(\log(\frac{1}{\lambda}))$ ancillary qubits that are reused in each repetition, but this may be impractical when it is used as a subroutine in other quantum algorithms.


\subsection{Quantum phase discrimination}
In this work, we present an efficient and concise quantum algorithm for the phase discrimination problem, and then apply the algorithm to settle two search problems on graphs.
\begin{theorem}\label{lem:phase_discrimination}
    For any $\delta\in(0,1)$, there is a quantum algorithm as shown in Fig.~\ref{fig:phase_discrimination} that solves Problem~\ref{def:phase} with one-sided error $\delta$, using $L = O(\frac{1}{\lambda}\ln\frac{1}{\delta})$ controlled-$U$ and one ancillary qubit. The angle parameters $\{\theta_i\}_{i=0}^{L-1}$ are given by an analytical formula.
\end{theorem}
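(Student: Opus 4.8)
The plan is to collapse the circuit to a single-qubit dynamics, recast the choice of rotation angles as a polynomial-approximation problem, and solve it with Chebyshev polynomials. First I would exploit that $\ket{\psi}$ is an eigenstate: each controlled-$U$, conditioned on the ancilla, acts on $\ket{\psi}$ only through the scalar $e^{i\phi}$, so on $\ket{\cdot}_{\mathrm{anc}}\otimes\ket{\psi}$ it factors as the phase gate $\Phi(\phi)=\mathrm{diag}(1,e^{i\phi})$ on the ancilla tensored with identity on the register, which therefore stays in $\ket{\psi}$ throughout. Interleaving these with the single-qubit $Y$-rotations $R_Y(\theta_i)$, the whole circuit collapses to a product of $2\times2$ unitaries on the ancilla,
$$ W(\phi)=\Phi(\phi)\,R_Y(\theta_{L-1})\cdots\Phi(\phi)\,R_Y(\theta_0), $$
after which we measure the ancilla. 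Writing $a(\phi)$ for the amplitude of the outcome we interpret as ``$\phi=0$'', the task becomes: choose $\{\theta_i\}$ so that $a(0)=1$ (no error when $\phi=0$, which yields the one-sided guarantee) and $|a(\phi)|^2\le\delta$ whenever $\lambda\le|\phi|\le\pi$.

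Second, I would characterize which functions $a(\phi)$ are reachable. By quantum signal processing, the entries of $W(\phi)$ are degree-$L$ polynomials in $e^{i\phi}$; choosing a symmetric (palindromic) angle sequence makes $a$ real and even in $\phi$, so the target reduces to a real degree-$L$ polynomial $q$ in the variable $y=\cos\phi$, subject to $|q(y)|\le1$ on $[-1,1]$, $q(1)=1$, and $|q(y)|\le\sqrt{\delta}$ for $y\in[-1,\cos\lambda]$, the image of the non-zero regime $\{|\phi|\ge\lambda\}$.

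Third, I would construct $q$ explicitly from the Chebyshev polynomial $T_L$. Let $\ell$ be the affine map sending $[-1,\cos\lambda]$ onto $[-1,1]$ and set $q(y)=T_L(\ell(y))/T_L(\ell(1))$. Then $q(1)=1$ by construction; on $[-1,\cos\lambda]$ we have $|\ell|\le1$, hence $|T_L(\ell(y))|\le1$ and $|q|\le 1/T_L(\ell(1))$; and since $\ell(1)=1+\Theta(\lambda^2)$, the growth estimate $T_L(1+\Theta(\lambda^2))=e^{\Theta(L\lambda)}$ shows that taking $L=\Theta(\frac1\lambda\ln\frac1\delta)$ forces $T_L(\ell(1))\ge1/\sqrt{\delta}$. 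This simultaneously delivers the claimed query complexity and the bound $|q|\le\sqrt{\delta}$ on the non-zero regime, while $|q|\le1$ on all of $[-1,1]$ follows from monotonicity of $T_L$ on $[1,\infty)$.

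Finally, I would recover the rotation angles from $q$. Because $q$ is real, bounded by $1$, and has definite parity, QSP guarantees an angle sequence realizing it; the remaining — and main — obstacle is to exhibit these angles in closed form rather than by numerical factorization. Here I expect to follow the fixed-point amplitude-amplification construction of Yoder--Low--Chuang, whose rotation angles are given analytically (via $\mathrm{arccot}$ and inverse-Chebyshev expressions) and which realizes exactly a rescaled Chebyshev response of this type; matching their parameters to $(\lambda,\delta)$ produces the formula for $\{\theta_i\}$. The delicate points are (i) checking that $W(\phi)$ under these angles reproduces precisely $q(\cos\phi)$, including the exact identity $a(0)=1$ on which the one-sided error rests, and (ii) keeping the angle expression genuinely analytical. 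The matching lower bound $\Omega(\frac1\lambda\ln\frac1\delta)$, needed only for the optimality claim and not for this statement, would be treated separately by a standard polynomial-degree argument.
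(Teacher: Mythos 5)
Your proposal is correct and follows essentially the same route as the paper: collapse the circuit to single-qubit dynamics via the eigenvalue action of controlled-$U$, target the Chebyshev ratio $T_L(\cdot)/T_L(\cdot)$ as an approximate delta function, and obtain the analytical angles by matching to the Yoder--Low--Chuang fixed-point-search construction, which the paper formalizes as the quasi-Chebyshev lemma (Lemma~\ref{lem:chebyshev}) with $\theta_n = 2\arctan[\sin(\tfrac{\lambda}{2})\tan(\tfrac{n}{L}\pi)]$. The only cosmetic difference is the variable convention: you map $\cos\phi$ affinely onto $[-1,1]$, while the paper rescales $\cos(\phi/2)$ multiplicatively, and these coincide via the identity $T_{2l}(\sqrt{u}) = T_l(2u-1)$, up to a factor-of-two degree bookkeeping that does not affect the claimed $O(\tfrac{1}{\lambda}\ln\tfrac{1}{\delta})$ query complexity.
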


\begin{figure}[hbt]
	\centering
	\includegraphics[width=0.85\textwidth]{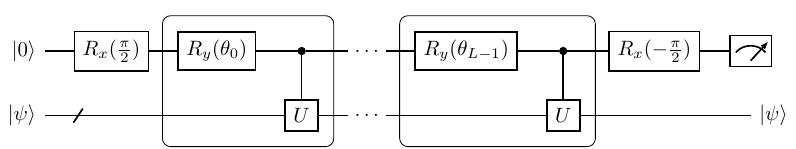}
	\caption{\label{fig:phase_discrimination} Quantum circuit for the phase discrimination problem.}
\end{figure}

We call the quantum algorithm in Theorem~\ref{lem:phase_discrimination} as \textit{quantum phase discrimination} (QPD).
Using QPD, we can not only improve existing quantum algorithms, but also develop new paradigm of quantum search algorithm.
Before delving into these applications of QPD, we first show that its query complexity $O(\frac{1}{\lambda}\ln\frac{1}{\delta})$ achieves optimal scaling.

\begin{theorem}\label{thm:QPD_lower}
    Any quantum algorithm solving Problem~\ref{def:phase} with bounded error $\delta\in(0,1)$ needs $\Omega(\frac{1}{\lambda}\log\frac{1}{\delta})$ applications of $U$ and $U^\dagger$ in total.
\end{theorem}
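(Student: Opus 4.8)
The plan is to use the polynomial method. The essential point is that on a \emph{fixed} eigenstate the oracle enters any algorithm only through the scalar $e^{i\phi}$, so the probability that a $T$-query algorithm accepts is a trigonometric polynomial in $\phi$ whose degree is controlled by $T$. A correct discriminator forces this polynomial to sit near $1$ at $\phi=0$ and near $0$ throughout $|\phi|\in[\lambda,\pi]$, and classical extremal (Chebyshev-type) inequalities show that such a steep descent over a short interval demands degree $\Omega(\tfrac1\lambda\log\tfrac1\delta)$. The virtue of this route is that a single parameter $T$ must simultaneously control \emph{where} the descent happens (the $\tfrac1\lambda$ factor) and \emph{how steep} it is (the $\log\tfrac1\delta$ factor), so the method yields the product bound rather than merely the maximum of the two factors.

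Concretely, I would first fix a hard family: let the system register be the one-dimensional eigenspace on which $U_\phi$ acts as the global phase $e^{i\phi}$ (equivalently $U_\phi=\operatorname{diag}(e^{i\phi},1)$ on a qubit with $\ket{\psi}=\ket{0}$), so that each controlled-$U_\phi^{\pm1}$ is just a phase gate $\operatorname{diag}(1,e^{\pm i\phi})$ on an ancilla. Deferring measurements, a $T$-query algorithm produces a final state whose amplitudes are Laurent polynomials in $z=e^{i\phi}$ with exponents in $[-T,T]$; hence the acceptance probability $p(\phi)=\sum_{b\in A}|a_b(\phi)|^2$ is a real trigonometric polynomial of degree $d\le 2T$. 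Correctness means $p(0)\ge 1-\delta$ and $p(\phi)\le\delta$ for $|\phi|\in[\lambda,\pi]$, with $0\le p\le 1$ throughout. Since the constraint set is symmetric under $\phi\mapsto-\phi$, I replace $p$ by its even part $\tfrac12(p(\phi)+p(-\phi))$, which satisfies the same constraints, and write it as $P(\cos\phi)$ using $\cos(k\phi)=T_k(\cos\phi)$. With $x=\cos\phi\in[-1,1]$ this gives an ordinary polynomial $P$ of degree $\le 2T$ with $P(1)\ge 1-\delta$ and $0\le P(x)\le\delta$ on $[-1,\cos\lambda]$.

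The extremal step is then purely about $P$. The affine map sending $[-1,\cos\lambda]$ onto $[-1,1]$ sends $x=1$ to
\[
y^{*}=1+\frac{2(1-\cos\lambda)}{1+\cos\lambda}>1,
\]
and turns $P$ into a degree-$\le 2T$ polynomial bounded by $\delta$ on $[-1,1]$ whose value at $y^{*}$ is at least $1-\delta$. By the classical Chebyshev growth inequality (a polynomial of degree $n$ with $\sup_{[-1,1]}|q|\le M$ obeys $|q(y)|\le M\,|T_n(y)|$ for $|y|\ge 1$), we get $T_{2T}(y^{*})\ge\frac{1-\delta}{\delta}$. Using $T_n(y)=\cosh\!\big(n\,\operatorname{arccosh}y\big)$ for $y\ge1$ and inverting,
\[
T\ \ge\ \frac{\operatorname{arccosh}\!\frac{1-\delta}{\delta}}{2\,\operatorname{arccosh}(y^{*})}.
\]
It remains to estimate the two $\operatorname{arccosh}$ factors: for $\delta\le\tfrac12$ (the only nontrivial regime) $\operatorname{arccosh}\frac{1-\delta}{\delta}=\Theta(\log\frac1\delta)$, while $\operatorname{arccosh}(y^{*})=\Theta(\lambda)$ since $y^{*}=1+\tfrac{\lambda^2}{2}(1+o(1))$ and $\operatorname{arccosh}(1+u)=\sqrt{2u}\,(1+o(1))$. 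Substituting yields $T=\Omega(\tfrac1\lambda\log\tfrac1\delta)$, and since $T$ counts all applications of $U$ and $U^\dagger$, this is the claimed bound.

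I expect the main obstacle to be the extremal estimate, and specifically making $\operatorname{arccosh}(y^{*})=O(\lambda)$ hold \emph{uniformly}: as $\lambda\to\pi$ one has $1+\cos\lambda\to0$, so $y^{*}\to\infty$ and the denominator blows up, meaning the clean $\Theta(\lambda)$ behaviour (and hence the full $\log\frac1\delta$ gain) is guaranteed only for $\lambda$ bounded away from $\pi$; I would therefore state the estimate for $\lambda\in(0,\pi/2]$ and note the complementary regime separately. A secondary, more technical point is rigorously justifying the degree bound $d\le 2T$ for fully adaptive algorithms with ancillas, controlled oracles, and interleaved $U$/$U^\dagger$ calls, which the deferred-measurement reduction and the phase-kickback structure of the hard instance are designed to handle cleanly; one should also check that symmetrization genuinely preserves all three constraints, which it does because the ``reject'' region is symmetric in $\phi$.
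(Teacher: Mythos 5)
Your proposal is correct in substance, but note how it relates to the paper: the paper does not prove Theorem~\ref{thm:QPD_lower} from first principles. Its entire proof (Section~\ref{subsec:QPD_lower}) is a reduction-plus-citation: the task $\mathrm{dist}_{\lambda,\delta}$ of Mande and de Wolf --- distinguishing $\theta=0$ from $|\theta|\in[\lambda,\pi]$ for $U_\theta = I-(1-e^{i\theta})\ket{0}\bra{0}$, whose eigenvector $\ket{0}$ is fixed and known --- is a special case of Problem~\ref{def:phase}, so their Claim~24 lower bound transfers verbatim. That cited claim is itself proved by exactly the kind of trigonometric-polynomial argument you give, so what you have written is in effect a self-contained reconstruction of the imported result rather than a different theorem-level strategy: your hard instance $\mathrm{diag}(e^{i\phi},1)$ is a two-dimensional version of their $U_\theta$, and your chain (deferred measurement, Laurent-polynomial amplitudes of degree at most $T$, even trigonometric acceptance polynomial of degree at most $2T$, conversion to an algebraic polynomial $P(\cos\phi)$ with $P(1)\ge 1-\delta$ and $0\le P\le\delta$ on $[-1,\cos\lambda]$, affine rescaling, Chebyshev comparison theorem) is sound. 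A pleasant consistency check you could add: your $y^* = 1+2\tan^2(\lambda/2)$ equals $T_2(1/\cos(\lambda/2))$, hence $\mathrm{arccosh}(y^*)=2\,\mathrm{arccosh}(1/\cos(\lambda/2))$, so your bound $T \ge \mathrm{arccosh}(\tfrac{1-\delta}{\delta})\big/\bigl(4\,\mathrm{arccosh}(1/\cos(\lambda/2))\bigr)$ matches the paper's upper bound (cf.\ Eq.~\eqref{eq:L_lower}) up to a factor of $4$, confirming optimality with nearly explicit constants. What your route buys is self-containedness; what the paper's citation buys is brevity and freedom from parameter bookkeeping.

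The two caveats you flag are genuine, and both are inherited from the cited claim rather than defects of your argument. The restriction to $\lambda$ bounded away from $\pi$ is necessary, not cosmetic: at $\lambda=\pi$ a single controlled-$U$ Hadamard test distinguishes $\phi=0$ from $\phi=\pi$ perfectly, so no $\Omega(\log\frac{1}{\delta})$ bound can hold there; accordingly Mande and de Wolf state their claim only for $\lambda\in(0,1/2)$, and the theorem's unrestricted phrasing must be read asymptotically. Similarly, your estimate $\mathrm{arccosh}(\tfrac{1-\delta}{\delta})=\Theta(\log\frac{1}{\delta})$ fails as $\delta\to 1/2$ (the left side tends to $0$, the right side to $\log 2$), so you should state it for, say, $\delta\le 1/3$; this again matches the $\delta\in(0,1/2)$ hypothesis of the cited claim and the usual convention for bounded-error lower bounds.
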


The proof of Theorem~\ref{thm:QPD_lower} is inspired by \cite[Claim 24]{tight_bounds_phase}, where Mande and de Wolf considered a slightly different problem of distinguishing a specific set of unitaries.

\noindent\textbf{Comparison to other trivial methods.}
To the best of our knowledge, there is no literature that discusses the phase discrimination problem specifically.
There are some off-the-shelf techniques that can be used to solve the phase discrimination problem, but the outcomes are unsatisfactory.
As mentioned earlier, the straightforward approach of repeated QPE without intermediate measurement~\cite[Theorem 6]{MNRS} requires $O(\log(\frac{1}{\delta})\log(\frac{1}{\lambda}))$ ancillary qubits.
Besides QPE, we are aware of two other approaches that can solve the phase discrimination problem (details presented in Appendix~\ref{app:other_methods}): (i) Quantum phase processing (QPP) proposed by Wang et al.~\cite{phase_processing}; and (ii) the eigenstate filtering method~\cite{QLSS_20} used in Lin and Tong's quantum linear system solver.
These two methods can be seen as a variant of QSP, and they both need a numerical and computationally intensive step for finding the angle parameters.
We summarize these trivial methods in Table~\ref{tab:summary}, highlighting the advantages of  QPD:
\begin{enumerate}
    \item It uses only $1$ ancillary qubit.
    \item It does not need the application of controlled-$U^\dagger$.
    As noted in \cite[Section 5.1.2]{property_testing_survey}, the assumption of access to $U^\dagger$ may not be reasonable in an adversarial scenario where we only assume access to $U$ as a black box.
    \item It is easy to implement, as the angle parameters are given by a simple analytical formula.
    The computationally intensive step of finding angle parameters in QSP is often a bottleneck to its scalability and practicality.
    The analytical formula in our QPD significantly reduces classical overhead and streamlines circuit design.
\end{enumerate}

\begin{table}[hbt]
\caption{Comparison to other trivial methods.}\label{tab:summary}
\centering
\begin{tabular}{@{}llll@{}}
\toprule
methods & ancillary qubits  & angle parameters & oracle forms\\
\midrule
repeated QPE~\cite{phase_estimation,MNRS} & $O(\log(\frac{1}{\delta})\log(\frac{1}{\lambda}))$\tablefootnote{It can be decreased to $O(\log(\frac{1}{\lambda}))$ if intermediate measurement is allowed (i.e. reuse the ancillary qubits), but this may be impractical when it acts as a subroutine in other quantum algorithms.}   & None  & controlled-$U$  \\
QPP~\cite{phase_processing} & $1$ & Numerical & controlled-$U$, controlled-$U^\dagger$ \\
eigenstate filtering~\cite{QLSS_20} & $2$ & Numerical & controlled-$U$, controlled-$U^\dagger$ \\
QPD, Theorem~\ref{lem:phase_discrimination}	& $1$ & Analytical	& controlled-$U$ \\
\bottomrule
\end{tabular}
\end{table}

We now present two applications of QPD to quantum search on graphs.

\subsection{Application to spatial search on graphs}
We first apply QPD to spatial search on graphs.
Spatial Search is the problem of finding an unknown marked vertex on a graph $G$.
It is an active research direction in quantum computing and one of the most important algorithmic applications of quantum walks.
Quantum walks are classified into discrete-time quantum walks (DTQW) and continuous-time quantum walks (CTQW).
CTQW evolves a Hamiltonian $H$ for some time $t$, i.e. executes $e^{iHt}$, where $t$ can be any positive real number, whereas DTQW evolves the system according to a unitary that complies with the graph topology for a discrete number of steps.

Over the past 20 years, the research on quantum spatial search algorithms can be roughly summarized into two lines~\cite{universal}.
The first line is to search on specific graphs, aiming at designing a quantum algorithm with time $O(\sqrt{N})$ for a given graph with $N$ vertices. It has been investigated on plenty of different graphs such as $d$-dimensional grids~\cite{coins_05}, hypercube graphs~\cite{CG_04,hypercube_09}, strongly regular graphs~\cite{Janmark2014}, complete bipartite graphs~\cite{bipartite_19,PhysRevA.106.052207}, balanced trees~\cite{RN3}, Johnson graphs~\cite{RN4,johnson_discrete}, Hamming graphs~\cite{HG}, Grassmann graphs~\cite{HG}, Erd\H{o}s-Renyi random graphs~\cite{Chakraborty2016}, and so on.
The quantum spatial search algorithms for those graphs can be designed via DTQW or CTQW.
It is worth pointing out deterministic quantum search algorithms have been proposed for most of the mentioned graphs in \cite{wang2025unifying}  via alternating quantum walks.
Note that discrete and continuous walks have essential differences in the quantum setting, so designing a quantum algorithm in one model does not necessarily lead to a quantum algorithm in the other model.
The relationship between DTQW and CTQW can be found in the literature~\cite{Childs2010}.

The second line of research on quantum spatial search is not limited to specific graphs but is based on Markov chains to answer a more general question: Can quantum walks always provide a quadratic speedup over  classical walks for the spatial search problem?
A breakthrough result on this problem was proved in 2020~\cite{quadratic_20}: For any graph, if there exists a classical search algorithm with time $O(T)$, then we can construct  a quantum search algorithm with time $\widetilde{O}(\sqrt{T})$.
This result was proven based on DTQW.
Subsequently, a similar result based on CTQW was proven in~\cite{quadratic_22}.
These two results have provided a more comprehensive understanding on the application of quantum walks to spatial search problems.
However, it should be noted that the research along the second line  cannot replace the one along the first line.
The reason is that despite these elegant results~\cite{quadratic_20, quadratic_22}, when dealing with specific graphs, the optimal quantum search algorithm remains unknown, and we still need to fully utilize topological properties of the graph to design algorithms.

\textbf{Controlled intermittent quantum walk.}
Inspired by the structure of QPD, we propose a new quantum walk model named controlled intermittent quantum walk (CIQW), which interleaves controlled CTQW $e^{iLt}$ with unitary operations that adjust the control signal, where $L$ is the Laplacian matrix of a simple undirected graph $G$.
An illustration of the CIQW $W$ with single control qubit is shown in Fig.~\ref{fig:pre_QW_model}.

\begin{figure}[hbt]
	\centering
	\includegraphics[width=\textwidth]{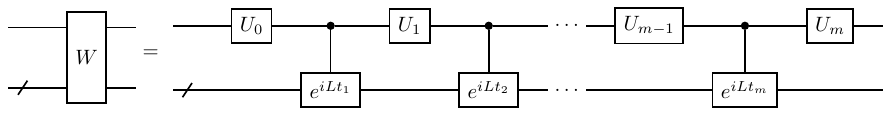}
	\caption{\label{fig:pre_QW_model} Illustration of the CIQW model with single control qubit. }
\end{figure}

This model can be seen as a hybrid of CTQW and DTQW, or more specifically, a continuous version of the MNRS framework~\cite{MNRS} by replacing their controlled DTQW on Markov chain with controlled CTQW $e^{iLt}$.

Based on the CIQW model combined with QPD, we obtain the following result:
\begin{theorem}\label{pre:thm:unknown}
    For any simple undirected graph $G$ with at least $\varepsilon\in(0,1)$ proportion of marked vertices and $\lambda$ being the gap between the zero and non-zero eigenvalues of the graph Laplacian $L$, there is a CIQW-based quantum algorithm that finds a marked vertex with constant success probability, query complexity $O(\frac{1}{\sqrt{\varepsilon}})$ to the oracle that checks whether a vertex is marked or not, and a total of $O(\frac{1}{\lambda \sqrt{\varepsilon}})$ evolution time of the controlled CTQW $e^{iLt}$.
\end{theorem}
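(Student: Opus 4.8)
\section*{Proof proposal}

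The plan is to cast the search as an amplitude-amplification process whose two reflections are supplied, respectively, by the marking oracle and by QPD (Theorem~\ref{lem:phase_discrimination}) applied to the continuous-time walk $e^{iLt}$. Write $N$ for the number of vertices and $M$ for the marked set, so that $|M|/N\ge\varepsilon$, and let $|u\rangle=\frac{1}{\sqrt N}\sum_v|v\rangle$ be the uniform state. For a connected graph $|u\rangle$ spans $\ker L$, the unique eigenspace of the Laplacian with eigenvalue $0$, while every other eigenvalue is at least $\lambda$. The overlap of the starting state with the marked subspace is $\|\Pi_M|u\rangle\|=\sqrt{|M|/N}\ge\sqrt\varepsilon$, so the exact iterate $(2\Pi_M-I)(2|u\rangle\langle u|-I)$ acts on the two-dimensional subspace spanned by $|u\rangle$ and its normalized projection onto the marked subspace as a rotation by $2\arcsin\sqrt{|M|/N}=\Omega(\sqrt\varepsilon)$; hence $O(1/\sqrt\varepsilon)$ rounds rotate $|u\rangle$ onto the marked subspace, after which a measurement returns a marked vertex. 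First I would prepare $|u\rangle$, set up this invariant subspace, and reduce the theorem to implementing the two reflections within the stated budget.

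The core step is to realize the reflection $2|u\rangle\langle u|-I$ about the zero-eigenspace of $L$ using only controlled evolutions of $e^{iLt}$. Choosing $t=\pi/\mu_{\max}$, where $\mu_{\max}$ is the largest Laplacian eigenvalue, sends the eigenvalue $0$ to eigenphase $0$ and every nonzero eigenvalue into $[\lambda t,\pi]$, so the phase gap seen by QPD is exactly $\lambda t$. Running the QPD circuit on a fresh ancilla, applying a $Z$ gate to the ancilla, and uncomputing the circuit (the uncomputation uses $e^{-iLt}$, available as backward evolution) yields an operator acting as $+1$ on $\ker L$ and as $-1$, up to error, on its orthogonal complement. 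Since QPD uses $O(\frac{1}{\lambda t}\ln\frac1\delta)$ controlled-$e^{iLt}$, each of evolution time $t$, one such reflection costs evolution time $O(\frac1\lambda\ln\frac1\delta)$, independent of $\mu_{\max}$, while the marking reflection $2\Pi_M-I$ costs a single oracle query. Interleaving the two $O(1/\sqrt\varepsilon)$ times gives query count $O(1/\sqrt\varepsilon)$ and evolution time $O(\frac1{\lambda\sqrt\varepsilon}\ln\frac1\delta)$, which collapses to the claimed $O(\frac1{\lambda\sqrt\varepsilon})$ once a constant $\delta$ is justified.

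The main obstacle will be controlling the error of the approximate reflection across the $O(1/\sqrt\varepsilon)$ rounds. Since QPD errs on $(\ker L)^\perp$ by leaking amplitude of order $\sqrt\delta$ into the ancilla, a naive union bound over rounds yields a lost probability of order $\delta/\sqrt\varepsilon$, which would force $\delta=O(\sqrt\varepsilon)$ and hence an extra logarithmic factor in the evolution time. To obtain the clean bound with constant $\delta$, I would instead argue that the errors do not accumulate coherently: the one-sided guarantee of Theorem~\ref{lem:phase_discrimination} makes the reflection exact on $\ker L$, so the leaked amplitude lives in the ancilla-$\ket{1}$ garbage subspace, and tracking it through the invariant two-dimensional subspace should bound the total loss. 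Making this robustness argument precise, in the spirit of approximate-reflection analyses in the MNRS framework, is the crux of the proof. Finally I would treat the disconnected case, in which $\ker L$ is spanned by the uniform states on the individual components: the QPD reflection is then about the whole kernel rather than about $|u\rangle$ alone, and I would verify that amplification still drives $|u\rangle$ into the marked subspace, using that the marked proportion is counted globally.
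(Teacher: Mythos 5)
Your first two paragraphs follow the paper's route almost exactly: taking $U=e^{iLt}$ with $t=\pi/\lambda_{\max}$, running QPD on an ancilla, flipping the ancilla phase, and uncomputing is precisely the paper's approximate reflection $R(\delta)$ (Lemma~\ref{lem:approx_ref}), and interleaving it with the phase oracle is the paper's Grover-style scheme. You also correctly isolate the real difficulty: in a flat iteration the per-round error must be $O(\sqrt{\varepsilon})$, which costs an extra $\log(1/\varepsilon)$ factor in the evolution time.

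The gap is your proposed escape from that difficulty. The claim that the errors ``do not accumulate coherently'' because the leaked amplitude sits in the ancilla-$\ket{1}$ subspace is not correct: after the uncomputation step the residual is a generic error vector, not one supported on $\ket{1}$. The paper's own calculation inside Lemma~\ref{lem:approx_ref} gives $R(\delta)\ket{0}\ket{\phi} = -\ket{0}\ket{\phi} + \ket{\delta_\phi}\ket{\phi}$, where $\ket{\delta_\phi}$ is an \emph{arbitrary} ancilla vector of norm at most $\delta$; in particular it can overlap $\ket{0}$, and the subsequent rounds are unitary, so the residuals from different rounds can realign and add constructively. Hence for the flat scheme the worst-case deviation after $k$ rounds is $\Theta(k\delta)$, one-sidedness notwithstanding, and you are forced back to $\delta=O(\sqrt{\varepsilon})$. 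Moreover, the state during amplification is mostly \emph{orthogonal} to $\ket{\pi}$, so the exactness of $R(\delta)$ on $\ker L$ does not protect it. The paper escapes not by a sharper analysis of the flat iteration but by a structurally different algorithm: MNRS recursive amplitude amplification (Algorithm~\ref{alg:recur_known}, Lemma~\ref{lem:recur_known}), where $A_i$ calls $A_{i-1}$ three times plus one reflection $R(\beta_i)$ with $\beta_i=\frac{9}{2\pi^3}\gamma/i^2$. There the error recursion $\tilde{e}_{i+1}=4\beta_{i+1}\bar{\varphi}_i+3\tilde{e}_i$ keeps each level's error proportional to the current angle $\bar{\varphi}_i$, so the accumulated relative error is $\sum_i\beta_i=O(\gamma)$, while the cost recursion $C(i)=3C(i-1)+c_1\log(1/\beta_i)+c_2$ solves to $O\left(\frac{1}{\sqrt{p_M}}(c_1\log(\frac{1}{\gamma})+c_2)\right)$ with no $\log(1/\varepsilon)$ factor. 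A second omission: the theorem only assumes a \emph{lower bound} $\varepsilon\le p_M$, and a flat iteration calibrated to $\varepsilon$ overshoots when $p_M\gg\varepsilon$; the paper needs the extra layer of Lemma~\ref{lem:recur_unknown} (Algorithm~\ref{alg:recur_unkonwn}), which runs $A_i$ for increasing $i$ with intermediate measurements. (Your worry about disconnected graphs does not arise: the paper's setting is connected graphs, where $\ker L$ is one-dimensional.)
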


Given a graph with Laplacian matrix $L$ and marked set $M$, our CIQW-based algorithm starts from the initial state $\ket{\pi}$, the uniform superposition of all the vertices, and then performs the CIQW and the oracle $e^{i\pi \Pi_M}$ alternately.
The oracle $e^{i\pi \Pi_M}$ multiplies a relative phase of $(-1)$ to vertices $\ket{v}$ in $M$.
An illustration of the algorithm is shown in Fig.~\ref{fig:QWS}.

\begin{figure}[hbt]
	\centering
	\includegraphics[width=0.8\textwidth]{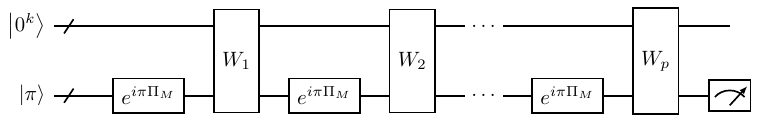}
	\caption{\label{fig:QWS} Illustration of the CIQW-based algorithm. }
\end{figure}

The main idea of our CIQW-based algorithm is to mimic Grover's search algorithms, or more precisely, to implement Grover's iteration $e^{i\pi\ket{\pi}\bra{\pi}} \cdot e^{i\pi \Pi_M}$.
As the oracle $e^{i\pi \Pi_M}$ is already provided, the key is to implement approximately the reflection $e^{i\pi\ket{\pi}\bra{\pi}}$ around the initial state $\ket{\pi}$ using CIQW.
This is where QPD comes into play.
Since $\ket{\pi}$ is the only eigenstate of $L$ with eigenvalue $0$, we can use QPD to distinguish $\ket{\pi}$ from the other eigenstates of $L$, and then construct an approximate reflection around $\ket{\pi}$.

A comparison of our result (Theorem~\ref{pre:thm:unknown}) with some existing quantum algorithms for spatial search is summarized in Table~\ref{tab:summary_spatial_search}.
For a random walk with transition matrix $P$, let $s$ be its stationary distribution, i.e. $s P =s$.
The expected number of steps before the random walk hits a marked vertex, starting from the stationary distribution $s$, is its \textit{hitting time}, denoted by $\mathrm{HT}$.
The gap between $1$ and the other eigenvalues of the transition matrix $P$ is denoted by $\delta\in(0,1)$.
In the electric network model, a random walk takes place on a simple undirected graph with non-negative weighted edges, and its transition probability is proportional to the edge weight.
The quantity $C_\sigma = W\cdot R_\sigma$, where $W$ is the total weight of the graph and $R_\sigma$ is the effective resistance between $\sigma$ and the set of marked vertices~\cite{belovs2013quantum}.
It is known that reversible Markov chains are equivalent to random walks on weighted undirected graphs~\cite{levin2017markov}, and $2C_\sigma =\mathrm{HT}$ when $\sigma=s$~\cite{belovs2013quantum}, and $\frac{1}{\varepsilon} \leq \mathrm{HT} \leq \frac{1}{\delta \varepsilon}$~\cite{unified}.

\begin{table}[hbt]
\caption{Comparison of different quantum spatial search algorithms.}\label{tab:summary_spatial_search}%
\centering
\begin{tabular}{@{}llll@{}}
\toprule
models & initial state  & update cost & checking cost \\
\midrule
AGJK~\cite{quadratic_20} & $\ket{s}$ of $P$   & $\sqrt{\mathrm{HT}}$ steps  & $\sqrt{\mathrm{HT}}$  \\
MNRS~\cite{MNRS} & $\ket{s}$ of $P$ & $\frac{1}{\sqrt{\delta\varepsilon}}$ steps & $\frac{1}{\sqrt{\varepsilon}}$ \\
electric network~\cite{belovs2013quantum,unified} & arbitrary $\ket{\sigma}$ & $\sqrt{C_{\sigma}}$ steps & $\sqrt{C_{\sigma}}$ \\
ACNR~\cite{quadratic_22} & $\ket{s}$ of $P$ & $\sqrt{\mathrm{HT}}$ time & $\sqrt{\mathrm{HT}}$ \\
this work, Theorem~\ref{pre:thm:unknown}	& $\ket{\pi}$ of $L$ & $\frac{1}{\lambda \sqrt{\varepsilon}}$ time	& $\frac{1}{\sqrt{\varepsilon}}$ \\
\bottomrule
\end{tabular}
\end{table}

From Table~\ref{tab:summary_spatial_search}, our algorithm is different from existing quantum spatial search algorithms, and it may add a new perspective to the growing body of quantum search algorithms.
Below are some more remarks on the features of our algorithm:
\begin{enumerate}
    \item It gives a concrete quantum search algorithm for any graph.
    The results of AGJK~\cite{quadratic_20}, MNRS~\cite{MNRS} and ACNR~\cite{quadratic_22} are based on a classical random walk with transition matrix $P$, and the electric network~\cite{belovs2013quantum,unified} assumes an undirected graph with weighted edges.
    Our result gives an explicit quantum algorithm, based only on the Laplacian matrix $L$ of the given graph.

    \item It may achieve better query complexity.
    The results of AGJK~\cite{quadratic_20} and ACNR~\cite{quadratic_22} are state of the art.
    The query complexity to the Check operation by AGJK~\cite{quadratic_20} is $\tilde{O}(\sqrt{{\rm HT}})$, which could be worse than our $1/\sqrt{\varepsilon}$, since $1/\varepsilon \leq {\rm HT}$.
    For example, in the case of $n$-cycle\footnote{ $n$-cycle is an $n$-vertex graph consisting of a single cycle and every vertex has exactly two edges incident with it.} with a single marked vertex, ${1}/{\varepsilon}=n$ while ${\rm HT}=\Theta(n^2)$ (See Appendix~\ref{app:n_cycle} for details). Thus, our algorithm greatly reduces the query complexity in certain cases. 
    The query complexity of the CTQW algorithm of ACNR~\cite{quadratic_22} is not given explicitly, but since the overall expected run-time of their algorithm is $\tilde{O}(\sqrt{\rm HT})$ and the evolving CTQW Hamiltonian $H_{P(s)}$ depends on the interpolated Markov chain $P(s)=(1-s)P+sP'$, where $s\in (0,1)$, in which the absorbing walk $P'$ requires knowledge of the marked vertices, the overall query complexity also scale as $\tilde{O}(\sqrt{\rm HT})$.

    \item The Hamiltonian in our CIQW model is different from that of ACNR~\cite{quadratic_22}, which takes the complicated form of $H = i[U_{P(s)}^\dagger S U_{P(s)}, I\otimes |0\rangle \langle 0|]$, where $U_{P(s)} \ket{x,0} = \sum_{y\in V} \sqrt{P(s)_{xy}} \ket{x,y}$, $\Pi_0 = I \otimes \ket{0}\bra{0}$, $S$ is the swap operation, and $[A,B] = AB-BA$ is the commutator.
    It is remarked in \cite{one_marked_CTQW} (surrounding Eq.~(45) there) that this particular construction of $H$ quadratically amplifies the spectral gap of $P$ and is crucial for subsequent quantum speedup of spatial search.

    \item Our update cost ($\frac{1}{\lambda \sqrt{\varepsilon}}$ time) is incomparable to that ($\frac{1}{\sqrt{\delta\varepsilon}}$ steps) of MNRS~\cite{MNRS}, because the eigenvalue gap $\lambda$ of graph Laplacian $L$ and the eigenvalue gap $\delta$ of transition matrix $P$ are incomparable in most cases.
    In addition, DTQW based on transition matrix $P$ is incomparable to CTQW $e^{iLt}$ with graph Laplacian $L$.
\end{enumerate}

\subsection{Application to path-finding on graphs}
In a recent paper by Li and Zur, a welded tree circuit graph $G$ with $\Theta(n2^n)$ vertices is constructed~\cite[Section 5.3]{multi_electric}.
For the convenience of the reader, the detailed definition of $G$ is shown in Appendix~\ref{app:circuit_graph}.
Based on this graph, they consider the following path-finding problem.
\begin{problem}[Problem 5.1 in \cite{multi_electric}]\label{def:path-finding}
    Given an adjacency list oracle $O_G$ to the welded tree circuit graph $G$, the goal is to find an $s$-$t$ path in $G$, where $s$ is the only vertex in $G$ with degree $2$ and $t$ is the only vertex in $G$ with degree $1$.
\end{problem}

It is shown in \cite[Theorem 5.6]{multi_electric} that any classical algorithm making at most $2^{n/6}$ queries to $O_G$ solves Problem~\ref{def:path-finding} with probability $2^{-O(n)}$, whereas there exists a quantum algorithm that solves Problem~\ref{def:path-finding} with success probability $1- O(\delta)$ and $O(n^{11}\log(n/\delta))$ queries~\cite[Theorem 5.2]{multi_electric}.
Thus, an exponential speedup of the path-finding problem on this graph is provided.

Their quantum algorithm relies on a subroutine (see Lemma~\ref{lem:phase_filter} in Section~\ref{sec:path}) that performs QPE on the quantum walk operator $U$ to approximate the projection to the desired $1$-eigenspace of $U$.
We replace QPE with our QPD, leading to a reduction in query complexity by a factor of $\widetilde{O}(n^3)$.
Overall, we obtain the following result.
\begin{theorem}
    There exists a quantum algorithm that solves Problem~\ref{def:path-finding} with success probability $1- O(\delta)$ and $O(n^{8}\log(n)\log(n/\delta))$ queries.
\end{theorem}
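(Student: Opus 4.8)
The plan is to isolate the single bottleneck in Li and Zur's algorithm and swap out its engine. Their path-finding algorithm is built from two ingredients: cheap quantum-walk steps on the circuit graph $G$, and an expensive phase-filtering subroutine (Lemma~\ref{lem:phase_filter}) that uses QPE on the walk operator $U$ to approximate the projection $\Pi$ onto the $1$-eigenspace of $U$. Every query to $O_G$ in their $\tilde{O}(n^{11})$ bound is incurred inside this filter, so it suffices to reimplement the filter with fewer controlled-$U$ applications while keeping the rest of their analysis verbatim. First I would record, from \cite{multi_electric}, the exact interface the filter must satisfy: on any state $\sum_j c_j\ket{\psi_j}$ expanded in eigenstates $\ket{\psi_j}$ of $U$ with eigenphases $\phi_j$, it must (coherently, up to an ancilla measurement) output a state whose overlap with the non-$1$-eigenspace is at most some target error $\delta'$, where the relevant spectral gap $\lambda$ between eigenphase $0$ and the nearest nonzero $\phi_j$ is $\Omega(1/\mathrm{poly}(n))$ on $G$.

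The key observation is that this filter is exactly Problem~\ref{def:phase} for the unitary $U$ with gap $\lambda$: we must decide, per eigenstate, whether $\phi_j=0$ or $|\phi_j|\in[\lambda,\pi]$. Hence I would apply QPD (Theorem~\ref{lem:phase_discrimination}) in place of QPE. Concretely, run the QPD circuit of Fig.~\ref{fig:phase_discrimination} on the working register with its single ancilla and then measure the ancilla. By linearity the circuit acts eigenstate-by-eigenstate, and the one-sided error guarantee is precisely what the filter needs: eigenstates with $\phi_j=0$ steer the ancilla deterministically to the ``accept'' outcome (so the $1$-eigenspace is preserved exactly), while eigenstates with $|\phi_j|\ge\lambda$ land in the accept outcome with probability at most $\delta'$. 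Conditioning on the accept outcome therefore realizes the same approximate projector $\Pi$ with leakage $\le\delta'$ that their QPE filter provided, and it does so using only one ancilla and no controlled-$U^\dagger$, so it is a drop-in replacement for the subroutine.

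It then remains to propagate the cost through their unchanged analysis. A QPE filter achieving error $\delta'$ over gap $\lambda$ costs $\tilde{O}(\frac{1}{\lambda\,\delta'})$ applications of controlled-$U$, because the slow $1/k$ tail of the phase-estimation kernel forces the precision factor to enter as $1/\delta'$; QPD achieves the same by Theorem~\ref{lem:phase_discrimination} with $O(\frac{1}{\lambda}\log\frac{1}{\delta'})$ applications, replacing the factor $1/\delta'$ by $\log\frac{1}{\delta'}$. With their parameter choice $\delta'=1/\mathrm{poly}(n)$, which contributes the factor $\tilde{O}(n^3)$ inside the original $\tilde{O}(n^{11})$ bound, this substitution removes that $n^3$ and leaves only a logarithmic factor, yielding $O(n^{8}\log(n)\log(n/\delta))$ queries while the success probability $1-O(\delta)$ is unaffected.

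The main obstacle I anticipate is not the complexity bookkeeping but verifying that QPD's coherent, ancilla-measured action meets the filter's guarantee in the metric Li and Zur actually use downstream (operator-norm closeness of the realized map to $\Pi$, or fidelity of the post-measurement state), and that feeding QPD's approximate projection into their subsequent walk steps does not disturb error accumulation across the $\mathrm{poly}(n)$ rounds of the outer algorithm. I would check that the per-eigenstate suppression $\le\delta'$ translates into the same operator-norm error bound they assume, and confirm that using only controlled-$U$ (no $U^\dagger$) is compatible with how $U$ is constructed from $O_G$, so that the query count per filter call is genuinely $O(\frac{1}{\lambda}\log\frac{1}{\delta'})$ queries to $O_G$.
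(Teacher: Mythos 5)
You identify the right bottleneck (the phase filter of Lemma~\ref{lem:phase_filter}) and the right substitution (QPD in place of QPE), which is indeed the paper's high-level move. But your reduction rests on a premise that is false, or at least unestablished and not needed by Li and Zur: that the walk operator $U_{\cal AB}$ has an actual spectral gap $\lambda=\Omega(1/\mathrm{poly}(n))$ between eigenphase $0$ and the nearest nonzero eigenphase, so that the filter ``is exactly Problem~\ref{def:phase} with gap $\lambda$.'' Neither Lemma~\ref{lem:phase_filter} nor its improvement assumes any gap in the spectrum of $U_{\cal AB}$; the walk operator can have nonzero eigenphases arbitrarily close to $0$, and the input state generally has nonzero overlap with those eigenspaces, which QPD does not suppress. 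What makes the argument work is a gap that is \emph{effective on the input state}, not on the operator: since $\ket{\psi}-\sqrt{p}\ket{\varphi}=(I-\Pi_{\cal A})\ket{\phi}$ with $\Pi_{\cal B}\ket{\phi}=\ket{\phi}$, the effective spectral gap lemma (Lemma~\ref{lem:spectral_gap}) bounds the mass of $\ket{\psi}$ on eigenspaces with small nonzero phase: $\|\Lambda_\epsilon(I-\Pi_{\cal A})\ket{\phi}\|\leq \frac{\epsilon}{2}\|\ket{\phi}\|$. The paper's proof of Lemma~\ref{lem:phase_filter_improve} therefore sets the QPD threshold \emph{artificially} to $\lambda=\epsilon/D$ with $D\geq \|\ket{\phi}\|/\sqrt{p}$, and splits the spectrum three ways: $\theta_j=0$ (QPD acts as the identity), $0<|\theta_j|\leq \sqrt{p}\epsilon/\|\ket{\phi}\|$ (QPD gives no guarantee, but the leaked mass is at most $p\epsilon^2/4$ by Lemma~\ref{lem:spectral_gap}), and $|\theta_j|>\sqrt{p}\epsilon/\|\ket{\phi}\|$ (QPD suppression applies). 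Without this three-way split your per-eigenstate argument silently fails on the middle band, and conditioning on the accept outcome need not produce a state close to $\ket{\varphi}$.

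Two further points do not survive scrutiny. First, the cost bookkeeping: the QPE filter's cost is not $\tilde{O}(\frac{1}{\lambda\delta'})$ for a graph gap $\lambda$; by Lemma~\ref{lem:phase_filter} it is precision $O(\frac{p\epsilon^2}{\|\ket{\phi}\|})$, i.e.\ $O(\frac{\|\ket{\phi}\|}{p\epsilon^2})$ calls, and the improvement to $O\bigl(\log(\frac{1}{\sqrt{p}\epsilon})\frac{\|\ket{\phi}\|}{\sqrt{p}\epsilon}\bigr)$ saves the factor $\tilde{O}(n^3)$ because \emph{both} the $\epsilon$-dependence ($1/\epsilon^2\to 1/\epsilon$, worth $n^2$ here) and the $p$-dependence ($1/p\to 1/\sqrt{p}$, worth $n$) improve --- not because a factor $1/\delta'$ becomes $\log(1/\delta')$. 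Your final exponent matches the paper, but the route to it does not. Second, the obstacle you flag at the end (translating per-eigenstate suppression into the guarantee used downstream) is real work, not a check: the paper must show $p'\in p\times[1,1+\frac{3}{4}\epsilon^2]$ and then convert this into $\|\ket{\psi'}-\ket{\varphi}\|^2 = 2(1-\sqrt{p/p'})\leq \epsilon^2$ for the renormalized post-measurement state, using $\braket{\varphi|\psi}=\sqrt{p}$ and the fact that $\ket{0}\ket{\varphi}$ is exactly fixed by the QPD circuit. That computation, together with Lemma~\ref{lem:spectral_gap}, is the actual content of Lemma~\ref{lem:phase_filter_improve}; your proposal leaves both pieces missing.
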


\textbf{Other potential applications of QPD.} QPD can also be used for filtering out undesired eigenstate~\cite{zihao}, with the same benefit that parameters in the quantum circuit are given analytically compared to existing method~\cite{QLSS_20}.
Technically, the effect of QPD followed by a projection $\ket{0}\bra{0}$ on the ancillary qubit is approximately a projection to the $e^{i0}$-eigenspace of the unitary $U$.
This leads to its applications in various filtering or projection-based quantum algorithms, such as quantum linear system solver~\cite{QLSS_20,QLSS_22}, ground state preparation~\cite{Lin2020nearoptimalground,ground_state_22}, estimating normalized Betti numbers~\cite{QTDA_PRX_Quantum_24} in quantum topological data analysis~\cite{Lloyd2016}, and perhaps many others.

\subsection{Paper organization}
The rest of this paper is organized as follows.
In Section~\ref{sec:phase_discrimination}, we formally present QPD (Theorem~\ref{res:lem:phase_discrimination}) and its proof.
The proof of the lower bound on phase estimation (Theorem~\ref{thm:QPD_lower}) is presented in Section~\ref{subsec:QPD_lower}.
We then discuss in detail two applications of QPD, one to spatial search on graphs (Section~\ref{sec:CIQW}), and the other to path-finding on graphs (Section~\ref{sec:path}).
Finally, we conclude and point out some future work in Section~\ref{sec:discussion}.

\section{Quantum phase discrimination}\label{sec:phase_discrimination}

Our QPD is formally presented in the theorem below, and it features analytical angle parameters which have also appeared in fixed-point quantum search~\cite{fixed_point,quasi_chebyshev}.
The relation between these two seemingly different tasks from the perspective of function approximation is discussed in Remark~\ref{rem:relation}.

\begin{theorem}[Theorem~\ref{lem:phase_discrimination}, formal]\label{res:lem:phase_discrimination}
    Assume $\ket{\psi}$ is an eigenvector of the unitary $U$ such that $U\ket{\psi} =e^{i\phi}\ket{\psi}$ and $\phi\in(-\pi,\pi]$.
    Consider the quantum circuit $C(U,\lambda ,L)$ shown in Fig.~\ref{res:fig:phase_discrimination}, where $\lambda \in (0,\pi)$, $L$ is an odd integer, and $\theta_n = 2\arctan[\sin(\frac{\lambda }{2})\tan(\frac{n}{L}\pi)]$ for $n \in\{0,\dots,L-1\}$.    
    Then the final state $\ket{w}\ket{\psi}$ satisfies
    \begin{equation}\label{eq:0_omega}
        \left| \braket{0|w} \right| = \left| \frac{T_L\left({\cos(\frac{\phi}{2})}/{\cos(\frac{\lambda }{2})}\right)} {T_L({1}/{\cos(\frac{\lambda }{2})})} \right|,
    \end{equation}
    where $T_L(x)$ is the Chebyshev polynomial of the first kind~\footnote{The polynomial $T_L(x)$ satisfies the recurrence relation $T_0(x)=1, T_1(x)=x, T_{n+1}(x) = 2x T_n(x) -T_{n-1}(x)$, and has an explicit formula $T_L(x) = \cos(L\arccos(x))$ for $|x|\leq 1$; $T_L(x) = \cosh(L\,\mathrm{arccosh}(x))$ for $x \geq 1$; $T_L(x) = (-1)^L \cosh(L\,\mathrm{arccosh}(-x))$ for $x \leq -1$.}.
    
    Furthermore, if $\delta\in(0,1)$ is the allowable one-sided error, then by setting $L\geq \frac{\ln(2/\delta)}{\lambda/2}$, we have:
    \begin{subnumcases}{}
        \ket{w}=\ket{0}, \quad \ \, {\rm if}\, \phi=0; \label{eq:0_omega_1}\\
        \left| \braket{0|w} \right| \leq \delta, \quad {\rm if}\, \left|\phi\right|\geq \lambda. \label{eq:0_omega_delta} 
    \end{subnumcases}
\end{theorem}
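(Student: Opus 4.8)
The plan is to exploit the fact that $\ket{\psi}$ is an eigenvector, so the entire circuit reduces to single-qubit dynamics on the ancilla. First I would note that each controlled-$U$, acting while the system register stays in $\ket{\psi}$, implements $\mathrm{diag}(1,e^{i\phi})$, which up to the global phase $e^{i\phi/2}$ equals the $Z$-rotation $\mathrm{diag}(e^{-i\phi/2},e^{i\phi/2})$. Since these global phases accumulate to an overall $e^{iL\phi/2}$ that does not affect $|\braket{0|w}|$, the ancilla evolution is exactly a quantum-signal-processing sequence: an alternation of the $\phi$-dependent $Z$-rotation (the ``signal'') with the fixed $Y$-rotations $R_Y(\theta_n)$ (the ``processing''). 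Consequently $\braket{0|w}=\bra{0}M\ket{0}$, where $M$ is the resulting single-qubit product, and standard QSP bookkeeping shows this entry is a degree-$L$ polynomial in $\cos(\tfrac{\phi}{2})$ of definite (odd) parity.

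The heart of the argument, and the step I expect to be the main obstacle, is to identify this polynomial with the Chebyshev ratio in \eqref{eq:0_omega}. I would approach it in one of two equivalent ways. The direct route tracks the two amplitudes $\bra{0}M_k\ket{0}$ and $\bra{1}M_k\ket{0}$ of the partial products and proves by induction on $k$ that, under the substitution $x=\cos(\tfrac{\phi}{2})/\cos(\tfrac{\lambda}{2})$, they are proportional to $T_k(x)$ and to a companion Chebyshev polynomial; the specific choice $\theta_n=2\arctan[\sin(\tfrac{\lambda}{2})\tan(\tfrac{n}{L}\pi)]$ is exactly what makes the cross terms collapse into the recurrence $T_{k+1}=2xT_k-T_{k-1}$. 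The cleaner but less self-contained route is to invoke the fixed-point-search / QSP correspondence~\cite{fixed_point,quasi_chebyshev}: the target is the degree-$L$ polynomial equal to $1$ at $\cos(\tfrac{\phi}{2})=1$ and minimizing the sup-norm on the band $|\cos(\tfrac{\phi}{2})|\le\cos(\tfrac{\lambda}{2})$, whose optimal QSP angles are known in closed form and coincide with the stated $\theta_n$. Either way, matching the degree, parity, the $L$ roots, and the normalization $\braket{0|w}=1$ at $\phi=0$ pins down the polynomial uniquely and yields \eqref{eq:0_omega}.

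Given \eqref{eq:0_omega}, the two cases follow from elementary properties of $T_L$. When $\phi=0$ the numerator and denominator coincide, so $|\braket{0|w}|=1$ and hence $\ket{w}=\ket{0}$ exactly, establishing \eqref{eq:0_omega_1} and the one-sidedness of the error. When $|\phi|\ge\lambda$, the argument $\cos(\tfrac{\phi}{2})/\cos(\tfrac{\lambda}{2})$ lies in $[0,1]$ (because $\cos$ is decreasing and nonnegative on $[0,\tfrac{\pi}{2}]$), so $|T_L|\le1$ there, while the denominator satisfies $T_L(1/\cos(\tfrac{\lambda}{2}))=\cosh\!\big(L\,\mathrm{arccosh}(\sec(\tfrac{\lambda}{2}))\big)\ge \tfrac12 e^{L\,\mathrm{arccosh}(\sec(\lambda/2))}$, giving $|\braket{0|w}|\le 2e^{-L\,\mathrm{arccosh}(\sec(\lambda/2))}$. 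Finally I would use the inequality $\mathrm{arccosh}(\sec(\tfrac{\lambda}{2}))\ge \tfrac{\lambda}{2}$ — equivalent to $\cos(x)\cosh(x)\le 1$, which holds throughout $(0,\tfrac{\pi}{2})$ — so that the choice $L\ge \ln(2/\delta)/(\lambda/2)$ forces $|\braket{0|w}|\le\delta$, establishing \eqref{eq:0_omega_delta}.
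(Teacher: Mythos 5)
Your overall skeleton --- reduce the circuit to single-qubit dynamics on the ancilla, identify the resulting $\ket{0}$-amplitude with the ratio $T_L\left(\cos(\frac{\phi}{2})/\cos(\frac{\lambda}{2})\right)/T_L\left(1/\cos(\frac{\lambda}{2})\right)$, then finish with elementary Chebyshev estimates --- is the same as the paper's, and your concluding estimates (numerator bounded by $1$, denominator bounded below via $\cosh(y)\geq e^y/2$ and $\mathrm{arccosh}(\sec(\lambda/2))\geq\lambda/2$) are correct and equivalent to the paper's Eqs.~\eqref{eq:numer_1}--\eqref{eq:L_lower_relax}. Your ``route (b)'' is also exactly what the paper does: it invokes the quasi-Chebyshev lemma (Lemma~\ref{lem:chebyshev}, with full proof in~\cite{quasi_chebyshev}), which states that the angles $\theta_n=2\arctan[\sin(\frac{\lambda}{2})\tan(\frac{n}{L}\pi)]$ produce precisely the polynomial $T_L(x/\gamma)/T_L(1/\gamma)$ with $\gamma=\cos(\frac{\lambda}{2})$.

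The genuine gap is in your ``direct route'' (a), which you present as a self-contained alternative. The induction you describe --- that the partial-product amplitudes are proportional to $T_k$ and a companion Chebyshev polynomial, with cross terms collapsing into $T_{k+1}=2xT_k-T_{k-1}$ --- is false. Writing $a_k$ for the $\ket{0}$-amplitude of the $k$-th partial product, the recurrence one actually obtains is $a_{k+1}=x(1+e^{-i\theta_k})a_k-e^{-i\theta_k}a_{k-1}$ with $a_0=1$, $a_1=x$ (Eq.~\eqref{eq:a_L_recur}); its coefficients are $k$-dependent and complex, and the intermediate polynomials are \emph{not} Chebyshev. Concretely, $a_2=(1+e^{-i\theta_1})x^2-e^{-i\theta_1}$ is proportional to $T_2(x/\gamma)=2x^2/\gamma^2-1$ only if $e^{i\theta_1}=2/\gamma^2-1$, which is impossible since the right-hand side exceeds $1$ in modulus when $\gamma<1$. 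Only the \emph{final}, degree-$L$ polynomial collapses to the Chebyshev form; this collapse happening only at the last step is exactly why the lemma is nontrivial and why, as the paper notes, its complete proof appeared only recently in~\cite{quasi_chebyshev} (the original derivation in~\cite{fixed_point} being incomplete). So your proposal is correct only insofar as it falls back on route (b), i.e.\ on citing the same external lemma the paper cites. A further small point: at $\phi=0$ you deduce $\ket{w}=\ket{0}$ from $\left|\braket{0|w}\right|=1$, which determines $\ket{w}$ only up to a phase; the paper instead computes this case directly, $\ket{w}=R_z\left(-\sum_{n}\theta_n\right)\ket{0}=\ket{0}$, using the antisymmetry $\theta_{L-n}=-\theta_n$, which both fixes the phase and establishes Eq.~\eqref{eq:0_omega} at $\phi=0$ --- a case the recurrence derivation cannot reach, since decoupling it requires dividing by $\sin^2(\frac{\phi}{2})$.
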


\begin{figure}[hbt]
	\centering
	\includegraphics[width=0.85\textwidth]{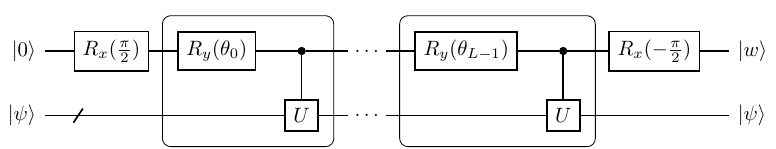}
	\caption{\label{res:fig:phase_discrimination} Quantum circuit $C(U,\lambda ,L)$ of QPD.}
\end{figure}

The function $\left| \braket{0|w} \right|$ of $\phi$ is an approximation of the ``delta'' function that takes value $1$ at $\phi=0$ and $0$ for $\phi\neq 0$.
For an intuitive illustration, we take $\lambda =\frac{\pi}{8}, \delta=0.1$ and $L=17$ such that $L\geq \frac{\ln(2/\delta)}{\lambda/2}$ as an example, and the function graph of $\left| \braket{0|w} \right|$ for $\phi\in(-\pi,\pi]$ is shown in Fig.~\ref{fig:zero_omega}.

\begin{figure}[hbt]
	\centering
	\includegraphics[width=0.65\textwidth]{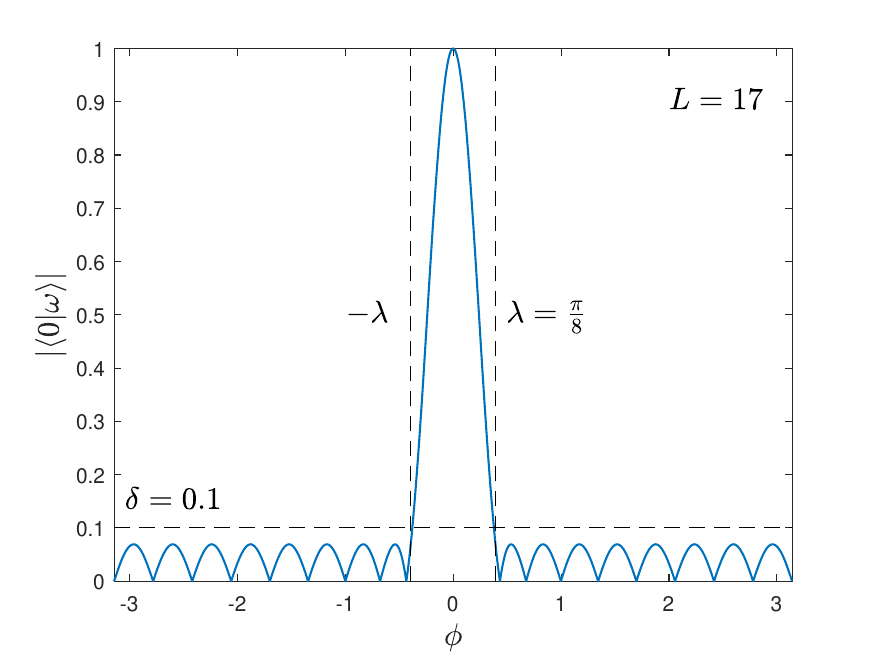}
	\caption{\label{fig:zero_omega} Function graph of $\left| \braket{0|w} \right|$ for $\phi\in(-\pi,\pi]$.
    The expression is shown by Eq.~\eqref{eq:0_omega},
    where $\lambda =\frac{\pi}{8}, \delta=0.1$ and $L=17$ such that $L\geq \frac{\ln(2/\delta)}{\lambda/2}$.
    The dashed lines show that $\left| \braket{0|w} \right| \leq 0.1$ when $|\phi|\geq \frac{\pi}{8}$.
    Note that $\left| \braket{0|w} \right|=1$ when $\phi=0$.}
\end{figure}

One may notice that the height of the small peaks in Fig.~\ref{fig:zero_omega}, i.e. the actual maximum value of $\left|\braket{0|w}\right|$ when $|\phi|\geq \lambda $, is smaller than the allowable error $\delta$, and is $1/T_L(1/\cos(\lambda /2))$ by Eq.~\eqref{eq:0_omega} and Eq.~\eqref{eq:numer_1} below.
This is because the solution $y$ to the desired equality $1/T_y(1/\cos(\lambda /2)) = \delta$ is usually not an integer, but we want $y$ to be an odd integer $L$.
The issue can be solved as follows, at the expense of a more complicated expression of $\theta_n$ used in the quantum circuit $C(U,\lambda ,L)$.
After setting the odd number $L$ to satisfy $1/T_L(1/\cos(\lambda /2)) \leq \delta$, we can reduce $\lambda $ to $\lambda '$ such that $1/T_L(1/\cos(\lambda '/2)) = \delta$, by the fact that $T_L(x)$ is increasing for $x\in[1,\infty)$.
Using the definition $T_y(x) = \cosh(y\,\mathrm{arccosh}(x))$ and the identity $\cosh^2(x) -1 = \sinh^2(x)$,
we have:
\begin{align}
& 1/T_L(1/\cos(\lambda '/2)) = \delta \\
\Leftrightarrow & 1/\delta = \cosh(L\,\mathrm{arccosh}(1/\cos(\frac{\lambda '}{2}))) \\
\Leftrightarrow & \cosh(\frac{1}{L}\,\mathrm{arccosh}(1/\delta)) = 1/\cos(\frac{\lambda '}{2}) \\
\Leftrightarrow & \sin(\frac{\lambda '}{2}) = \sqrt{1-\frac{1}{\cosh^2(\frac{1}{L}\,\mathrm{arccosh}(1/\delta))}}\\
\Leftrightarrow & \sin(\frac{\lambda '}{2}) =\tanh(\frac{1}{L}\,\mathrm{arccosh}(1/\delta)).
\end{align}
As mentioned earlier, this makes the expression of $\theta_n = 2\arctan[\sin(\frac{\lambda '}{2})\tan(\frac{n}{L}\pi)]$ more complicated.

\subsection{Proof of Theorem~\ref{res:lem:phase_discrimination}}
At the heart of Theorem~\ref{res:lem:phase_discrimination} as well as the fixed-point quantum search~\cite{fixed_point} lies the following quasi-Chebyshev lemma, whose full proof has only recently been shown in~\cite{quasi_chebyshev}.

\begin{lemma}\label{lem:chebyshev}
    Suppose $\gamma \in (0,1]$. For any given odd number $L$, consider the odd polynomial $a^\gamma_L(x)$ of degree $L$ defined by the following recurrence relation:
\begin{align}
a^\gamma_0(x) &= 1,\ a^\gamma_1(x)=x, \\
a^\gamma_{n+1}(x) &= x(1+e^{-i\theta_n}) a^\gamma_n(x) -e^{-i\theta_n} a^\gamma_{n-1}(x), \label{eq:a_L_recur}
\end{align}
    where the angles are 
\begin{equation}\label{eq:phi_n_def}
\theta_n = 2\arctan\left( \sqrt{1-\gamma^2} \tan(\frac{n}{L}\pi) \right), \ n\in\{1,\dots,L-1\}.
\end{equation}
    Then the explicit formula of $a^\gamma_L(x)$ is
\begin{equation}\label{eq:a_gamma_L}
    a^\gamma_L(x) = T_L(x/\gamma) / T_L(1/\gamma).
\end{equation}
\end{lemma}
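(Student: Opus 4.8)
The plan is to identify $a^\gamma_L(x)$ as a polynomial of degree $L$ and then pin it down completely by matching it to $T_L(x/\gamma)/T_L(1/\gamma)$ through its leading coefficient and its $L$ roots, since two degree-$L$ polynomials agreeing in these data must coincide. Before that I would rewrite \eqref{eq:a_L_recur} in a symmetric form: writing $\theta_n = 2\arctan(t_n)$ with $t_n = \sqrt{1-\gamma^2}\tan(\tfrac{n}{L}\pi)$, one has $1+e^{-i\theta_n} = \tfrac{2}{1+it_n}$ and $e^{-i\theta_n} = \tfrac{1-it_n}{1+it_n}$, so the recurrence becomes $(1+it_n)\,a^\gamma_{n+1}(x) + (1-it_n)\,a^\gamma_{n-1}(x) = 2x\,a^\gamma_n(x)$. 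Two structural facts drive the argument: the angles sit at equally spaced points $\phi_n := \tfrac{n}{L}\pi$, so the data live at roots of unity $e^{i\phi_n}$; and they are anti-symmetric, $t_{L-n}=-t_n$ hence $\theta_{L-n}=-\theta_n$, which is what will force the complex recurrence to output a \emph{real} $a^\gamma_L$. As a sanity check, $\gamma=1$ gives $t_n=0$ and recovers the plain Chebyshev recurrence with $a^1_L=T_L$.

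The concrete milestones are as follows. First, an easy induction gives $\deg a^\gamma_n = n$ with the correct parity ($a^\gamma_n$ even/odd as $n$ is even/odd), matching $T_L(x/\gamma)$ for odd $L$. Second, tracking leading coefficients through the recurrence yields $\mathrm{lc}(a^\gamma_L) = 2^{L-1}/\prod_{k=1}^{L-1}(1+it_k)$, so the leading coefficients agree exactly when $\prod_{k=1}^{L-1}(1+it_k) = \gamma^L\,T_L(1/\gamma)$; pairing $k$ with $L-k$ and using $t_{L-k}=-t_k$ collapses the left side to the real product $\prod_{k=1}^{(L-1)/2}\bigl(1-\gamma^2\sin^2\phi_k\bigr)/\cos^2\phi_k$, and this finite identity can be verified directly from $T_L(\cos\psi)=\cos(L\psi)$ together with standard finite trigonometric product evaluations. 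Third, I would locate the $L$ zeros of $a^\gamma_L$ at $x=\gamma\cos\tfrac{(2j-1)\pi}{2L}$, the images under $x\mapsto x/\gamma$ of the roots of $T_L$, by passing to the transfer matrices $M_n = \bigl(\begin{smallmatrix} x(1+e^{-i\theta_n}) & -e^{-i\theta_n}\\ 1 & 0\end{smallmatrix}\bigr)$ and showing that the ordered product $M_{L-1}\cdots M_1$, evaluated at these special $x$, annihilates the first coordinate of the initial vector $(x,1)^\top$. Combining the three milestones forces $a^\gamma_L = T_L(x/\gamma)/T_L(1/\gamma)$.

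The hard part will be the third milestone: showing that the product of the $n$-dependent transfer matrices collapses to the behavior of the $L$-th Chebyshev map. The difficulty is intrinsic, because $\theta_n$ depends on $L$ through $\phi_n=\tfrac{n}{L}\pi$, so $\{a^\gamma_n\}$ obeys no fixed three-term Chebyshev recurrence; indeed one checks that the natural guess $T_n(x/\gamma)/T_n(1/\gamma)$ fails to satisfy the recurrence for $n<L$, so no naive induction on the closed form is available and the intermediate $a^\gamma_n$ are genuinely complex. The telescoping must instead use the two global facts in tandem: the root-of-unity structure of the $\phi_n$ to sum the accumulated rotations into a single angle $L\psi$ after the substitution $x=\gamma\cos\psi$, and the anti-symmetry $\theta_{L-n}=-\theta_n$ to cancel the residual phase and leave the real Chebyshev value. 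This is precisely the phenomenon underlying the fixed-point search phases of \cite{fixed_point}, whose rigorous justification is the content of \cite{quasi_chebyshev}; I would therefore either invoke that collapse directly, or reproduce it in the $SU(2)$/reflection-product picture in which $M_{L-1}\cdots M_1$ is a product of generalized Grover rotations whose matrix elements are known to be Chebyshev polynomials.
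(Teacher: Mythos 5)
First, be aware of what the paper actually does with this statement: it gives no proof of Lemma~\ref{lem:chebyshev} at all. The lemma is imported as a known result, with the paper explicitly stating that its full proof ``has only recently been shown in~\cite{quasi_chebyshev}''. So your blind attempt is not being measured against an in-paper argument but against a citation, and the real question is whether your proposal constitutes an independent proof. It does not.

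Your scaffolding is logically sound and your preparatory steps check out: the rewriting $(1+it_n)\,a^\gamma_{n+1}(x) + (1-it_n)\,a^\gamma_{n-1}(x) = 2x\,a^\gamma_n(x)$ is correct, degree and parity follow by induction (using that $t_n$ is finite for odd $L$), and the leading-coefficient identity $\prod_{k=1}^{L-1}(1+it_k) = \gamma^L T_L(1/\gamma)$ is a genuine finite trigonometric identity, verifiable by pairing $k\leftrightarrow L-k$, using $\prod_{k=1}^{(L-1)/2} 2\cos(k\pi/L)=1$ for odd $L$, and the factorization $T_L(w) = 2^{L-1} w \prod_{j=1}^{(L-1)/2}\bigl(w^2-\cos^2\tfrac{(2j-1)\pi}{2L}\bigr)$. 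The problem is your third milestone: locating the $L$ roots of $a^\gamma_L$ at $\gamma\cos\tfrac{(2j-1)\pi}{2L}$ is not a reduction of the lemma, it essentially \emph{is} the lemma (given the leading coefficient, the roots determine everything), and your treatment of it is not a proof. You offer two exits. Exit (i), ``invoke that collapse directly'' from~\cite{quasi_chebyshev}, cites a reference that proves the full closed form $a^\gamma_L = T_L(x/\gamma)/T_L(1/\gamma)$ outright; if you do that, milestones 1--3 are redundant and your attempt reduces to exactly what the paper does, namely quote~\cite{quasi_chebyshev}. Exit (ii), the appeal to an $SU(2)$/reflection-product picture ``whose matrix elements are known to be Chebyshev polynomials,'' begs the question: that fact is standard only for products of rotations with \emph{constant} phases, and you yourself correctly observe two sentences earlier that the varying phases $\theta_n$ destroy any fixed three-term Chebyshev recurrence and make the intermediate $a^\gamma_n$ genuinely complex. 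Establishing the Chebyshev collapse for this specific $n$-dependent phase sequence is precisely the open content of the lemma, so exit (ii) assumes what is to be proven. As written, the attempt therefore has a genuine gap at its core step, and the only way to close it offered in the proposal is the same citation the paper itself relies on.
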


\begin{remark}[connection between QPD and fixed-point quantum search]\label{rem:relation}
    From the perspective of function approximation, the two seemingly different tasks of phase discrimination and fixed-point quantum search are both deeply connected to the odd polynomial $a_L^\gamma(x)$ shown in Eq.~\eqref{eq:a_gamma_L}.
    In the former task, we want to approximate the ``delta" function.
    This can be achieved by letting $x=\cos(\phi/2)$ and $\gamma=\cos(\lambda/2)$ in $a_L^\gamma(x)$ and setting $L\geq 2\ln(2/\delta)/\lambda$, and we will soon prove that the desired function $f(\phi):= a_L^\gamma(x)$ satisfies $f(0)=1$ and $|f(\phi)| \leq \delta$ when $|\phi|\leq \lambda$.
    In the latter task, we want to approximate the ``sign" function that has value $1$ (i.e. the highest success probability achievable) whenever $\lambda>0$ (i.e. the marked elements are non-empty).
    Suppose a lower bound $w^2$ on the proportion $\lambda^2$ of marked elements is known in advance. 
    Let $x=\sqrt{1-\lambda^2}$, and $\gamma=\sqrt{1-w^2}$ in $a_L^\gamma(x)$, and set $L \geq \ln(2/\delta)/w$, then it can be shown similarly~\cite{quasi_chebyshev} that the desired function $P(\lambda) := \sqrt{1-a_L^\gamma(x)^2}$ satisfies $P(0)=0$ and $P(\lambda)\geq \sqrt{1-\delta^2}$ as long as $\lambda > w$.
\end{remark}

We now prove the correctness of Theorem~\ref{res:lem:phase_discrimination} using Lemma~\ref{lem:chebyshev}.
Recall that the three single-qubit operations that correspond to the rotations around the $x,y,z$ axes respectively by angle $\theta$ on the Bloch sphere are as follows:
\begin{equation}\label{eq:rotation_axis_def}
R_x(\theta) = 
\begin{bmatrix}
\cos(\frac{\theta}{2}) & -i\sin(\frac{\theta}{2}) \\
-i\sin(\frac{\theta}{2}) & \cos(\frac{\theta}{2})
\end{bmatrix},
R_y(\theta) =
\begin{bmatrix}
\cos(\frac{\theta}{2}) & -\sin(\frac{\theta}{2}) \\
\sin(\frac{\theta}{2}) & \cos(\frac{\theta}{2})
\end{bmatrix},
R_z(\theta) =
\begin{bmatrix}
e^{-i\theta/2} & 0 \\
0 & e^{i\theta/2}
\end{bmatrix}.
\end{equation}
Since $U\ket{\psi} = e^{i\phi}\ket{\psi}$, after applying controlled-$U$ to $\ket{b}\ket{\psi}$ for $b\in\{0,1\}$, we obtain $e^{ib\phi}\ket{b}\otimes\ket{\psi}$.
By the definition of $R_z(\theta)$ shown in Eq.~\eqref{eq:rotation_axis_def}, we know it is equivalent to applying $e^{i\frac{\phi}{2}} R_z(\phi)$ to the first qubit and leaves the state $\ket{\psi}$ in the second register unchanged.
Therefore, after applying the quantum circuit $C(U,\lambda ,L)$ shown in Fig.~\ref{res:fig:phase_discrimination} to the initial state $\ket{0}\ket{\psi}$, we obtain the final state $\ket{w}\ket{\psi}$, where
\begin{equation}\label{eq:omega_1}
\ket{w} = e^{iL\frac{\phi}{2}}\cdot R_x(-\frac{\pi}{2}) \cdot  \prod_{n=0}^{L-1} \big( R_z(\phi) \cdot R_y(\theta_n) \big) \cdot R_x(\frac{\pi}{2}) \ket{0}.
\end{equation}
It can be easily verified that the following two identities hold:
\begin{align}
R_x(-\frac{\pi}{2}) \cdot R_z(\phi) \cdot R_x(\frac{\pi}{2}) &= R_y(\phi), \label{eq:xzx_y}\\
R_x(-\frac{\pi}{2}) \cdot R_y(\theta) \cdot R_x(\frac{\pi}{2}) &= R_z(-\theta). \label{eq:xyx_z}
\end{align}
Thus by inserting $R_x(\frac{\pi}{2}) \cdot R_x(-\frac{\pi}{2}) =I$ to Eq.~\eqref{eq:omega_1}, and using Eqs.~\eqref{eq:xzx_y}, \eqref{eq:xyx_z}, we have:
\begin{equation}\label{eq:omega_2}
    \ket{w} = e^{iL\frac{\phi}{2}} \cdot \prod_{n=0}^{L-1} \big( R_y(\phi) \cdot R_z(-\theta_n) \big) \ket{0}.
\end{equation}
Our motivation for switching the rotation axes of $R_z(\phi)$ and $R_y(\theta_n)$ is that the new $R_y(\phi)$ and $R_z(\theta_n)$ are closer to the formalism in fixed-point quantum search~\cite{fixed_point,quasi_chebyshev}, where $R_y(\phi)$ corresponds to the unitary $\mathcal{A}$ that prepares a linear combination of marked states and unmarked states, and $R_z(-\theta_n)$ corresponds to the phase oracle that multiplies a relative phase shift to the marked states.

We first consider the case where $\phi=0$.
Thus $R_y(\phi)=I$, and Eq.~\eqref{eq:omega_2} becomes
\begin{equation}
    \ket{w} = R_z\left(-\sum_{n=0}^{L-1}\theta_n\right) \ket{0} = \exp(i\sum_{n=0}^{L-1}\frac{\theta_n}{2}) \ket{0} =\ket{0},
\end{equation}
where we have used the fact that $\theta_{L-n}=-\theta_n$ for $n\in \{1,\dots,L-1\}$ by the definition of $\theta_n$.
Therefore, we obtain Eq.~\eqref{eq:0_omega_1}.

We then consider the case where $\phi\neq 0$.
Denote by $A(\theta_n) := R_y(\phi)\cdot e^{-i\frac{\theta_n}{2}} R_z(-\theta_n)$ the product of the two rotations in Eq.~\eqref{eq:omega_2}.
Using their matrix expression shown in Eq.~\eqref{eq:rotation_axis_def}, we know:
\begin{equation}\label{eq:A_theta_mat}
    A(\theta) =
    \begin{bmatrix}
        \cos(\frac{\phi}{2}) & -e^{-i\theta} \sin(\frac{\phi}{2}) \\
        \sin(\frac{\phi}{2}) & e^{-i\theta} \cos(\frac{\phi}{2})
    \end{bmatrix}.
\end{equation}
Let $a_0:=1, b_0:=0$, and $[a_n, \sin(\frac{\phi}{2}) b_n]^T := A(\theta_{n-1}) \cdots A(\theta_0) \ket{0}$.
Then we have
\begin{equation}\label{eq:0_omega_a_L}
    \left| \braket{0|w} \right| = |a_L|.
\end{equation}
From Eq.~\eqref{eq:A_theta_mat}, we obtain the following recurrence relation of $a_n, b_n$:
\begin{align}
a_{n+1} &= a_n\cos(\frac{\phi}{2})  -b_n e^{-i\theta_n} \sin^2(\frac{\phi}{2}), \label{eq:an_bn_1} \\
b_{n+1} &= a_n +b_n e^{-i\theta_n} \cos(\frac{\phi}{2}), \label{eq:an_bn_2}
\end{align}
where $n\in\{0,\dots,L-1\}$.
To decouple the above recurrence relation to obtain the recurrence relation of $a_n$, we first use the linear combination of the above equations, i.e. $\eqref{eq:an_bn_1} \cdot \frac{\cos(\phi/2)}{\sin^2(\phi/2)} + \eqref{eq:an_bn_2}$, to eliminate $b_n$ and obtain an expression of $b_{n+1}$ regarding $a_n$ and $a_{n+1}$,
and then let $n \mapsto (n-1)$, obtaining
\begin{equation}\label{eq:bn}
    b_n = \frac{a_{n-1} - a_n \cos(\frac{\phi}{2})}{\sin^2(\frac{\phi}{2})},
\end{equation}
where $n\in\{1,\dots,L\}$.
Substituting Eq.~\eqref{eq:bn} into Eq.~\eqref{eq:an_bn_1}, and letting $x:=\cos(\frac{\phi}{2})$, we obtain the recurrence relation of $a_n$:
\begin{equation}
a_{n+1} = x (1+e^{-i\theta_n}) a_n -e^{-i\theta_n} a_{n-1}, \ n\in\{1,\dots,L-1\},
\end{equation}
where the first two terms are $a_0=1$ and $a_1=x$.
Note that $a_1 \equiv x$ regardless of the value of $\theta_0$, and we let $\theta_0 = 2\arctan[\sin(\frac{\lambda }{2})\tan(\frac{0}{L}\pi)] =0$ for a succinct expression.

We are now only one step away from Eq.~\eqref{eq:0_omega}.
Let
\begin{equation}
    \gamma = \cos(\frac{\lambda }{2}) \in (0,1).
\end{equation}
Then $\sqrt{1-\gamma^2} = \sin(\frac{\lambda }{2})$, and our choice of $\theta_n = 2\arctan[\sin(\frac{\lambda }{2})\tan(\frac{n}{L}\pi)]$ for $n\in\{1,\dots,L-1\}$ coincides with Eq.~\eqref{eq:phi_n_def} in Lemma~\ref{lem:chebyshev}.
Thus by Eq.~\eqref{eq:a_gamma_L} and Eq.~\eqref{eq:0_omega_a_L}, we have:
\begin{equation}\label{eq:0_omega_2}
    \left| \braket{0|w} \right| = \left| \frac{T_L\left({\cos(\frac{\phi}{2})}/{\cos(\frac{\lambda }{2})}\right)} {T_L({1}/{\cos(\frac{\lambda }{2})})} \right|.
\end{equation}
Since the RHS of Eq.~\eqref{eq:0_omega_2} equals $1$ when $\phi=0$, Eq.~\eqref{eq:0_omega_2} also holds for the case where $\phi=0$ (we have proved $\left| \braket{0|w} \right|=1$ in this case).
Thus Eq.~\eqref{eq:0_omega} is proved.

Finally, we prove Eq.~\eqref{eq:0_omega_delta}, i.e. $\left| \braket{0|w} \right| \leq \delta$, when $\left|\phi\right|\geq \lambda $ and $L\geq \frac{\ln(2/\delta)}{\lambda/2}$.
By the fact that $|T_L(x)| \leq 1$ for $|x|\leq 1$, we know
\begin{equation}\label{eq:numer_1}
    \left| T_L\left({\cos(\frac{\phi}{2})}/{\cos(\frac{\lambda }{2})}\right) \right| \leq 1, \ \mathrm{for}\, |\phi|\in [\lambda , \pi].
\end{equation}
Combining Eq.~\eqref{eq:numer_1} with Eq.~\eqref{eq:0_omega_2}, it suffices to show that $T_L({1}/{\cos(\frac{\lambda }{2})}) \geq 1/\delta$.
By the fact that $T_L(x) = \cosh(L\,\mathrm{arccosh}(x))$ for $x\geq 1$, it is equivalent to $\cosh(L\,\mathrm{arccosh}(1/\cos(\lambda /2))) \geq 1/\delta$.
Since $\cosh(x)$ is increasing for $x \geq 0$, it is further equivalent to
\begin{equation}\label{eq:L_lower}
L \geq \frac{\mathrm{arccosh}(1/\delta)}{\mathrm{arccosh}(1/\cos(\lambda /2))}.
\end{equation}
Thus to prove Eq.~\eqref{eq:0_omega_delta}, it suffices to show that the RHS of Eq.~\eqref{eq:L_lower} has the following upper bound:
\begin{equation}\label{eq:L_lower_relax}
    \frac{\mathrm{arccosh}(1/\delta)}{\mathrm{arccosh}(1/\cos(\lambda /2))} \leq \frac{\ln(2/\delta)}{\lambda /2}.
\end{equation}
The numerator is upper bounded by $\mathrm{arccosh}(1/\delta) = \ln(1/\delta+\sqrt{1/\delta^2-1}) \leq \ln(2/\delta)$, where we use the definition  $\mathrm{arccosh}(x) = \ln(x+\sqrt{x^2-1})$ for $x \geq 1$ in the first equality.
The denominator is lower bounded by $\mathrm{arccosh}(1/\cos(\lambda /2)) = \ln(\frac{1+\sin(\lambda /2)}{\cos(\lambda /2)}) \geq \lambda /2$, where the last inequality uses the fact that $f(x):= \ln(\frac{1+\sin(x)}{\cos(x)}) - x$ satisfies $f(x)\geq 0$ for $x\in[0,\pi/2)$, which follows from $f(0)=0$ and $f'(x) = 1/\cos(x) -1 \geq 0$ for $x\in[0,\pi/2)$.
This proves Eq.~\eqref{eq:L_lower_relax}, and thus we have finished the proof of Theorem~\ref{res:lem:phase_discrimination}.

\subsection{proof of Theorem~\ref{thm:QPD_lower}}\label{subsec:QPD_lower}

For a value $\theta\in(-\pi,\pi]$, Mande and de Wolf~\cite{tight_bounds_phase} define the unitary $U_\theta$ (with dimension greater than $1$) as
\begin{equation}
	U_\theta := I-(1-e^{i\theta})\ket{0}\bra{0}.
\end{equation}
Note that $\ket{0}$ is the eigenvector of $U_\theta$ with eigenphase $\theta$.
The task $\mathrm{dist}_{\lambda,\delta}$ (cf. Definition~5 in \cite{tight_bounds_phase}, with symbols renamed) is to distinguish between $\theta=0$ and $|\theta|\in [\lambda,\pi]$ with probability at least $1-\delta$.
Using a variant of the polynomial method with trigonometric polynomials, it is proved in \cite[Claim 24]{tight_bounds_phase} that for $\delta, \lambda \in (0,1/2)$, every quantum algorithm for $\mathrm{dist}_{\lambda,\delta}$ needs
\begin{equation}\label{eq:lower_bound}
    \Omega\left(\frac{1}{\lambda}\log\frac{1}{\delta}\right)
\end{equation}
applications of controlled-$U_\theta$ and controlled-$U_\theta^\dagger$ in total.
As the eigenvector $\ket{0}$ of $U_\theta$ is \textit{fixed}, the task $\mathrm{dist}_{\lambda,\delta}$ is a special case of the phase discrimination problem, and thus Eq.~\eqref{eq:lower_bound} is also a lower bound on the phase discrimination problem.

\section{Application to spatial search on graphs}\label{sec:CIQW}
Spatial Search is the problem of finding an unknown marked vertex on a graph $G$. When designing algorithms for this problem, it is generally assumed that there is an oracle checking whether a given vertex is the marked one, and the algorithm should invoke this oracle as few times as possible. In quantum computing, the standard oracle works as follows: 
\begin{equation}\label{eq:standard_oracle_def}
    O_M\ket{b}\ket{v}=\ket{b \oplus f(v)}\ket{v},
\end{equation}
for $b \in \{0, 1\}$ and $v \in V(G)$, where the Boolean function $f(v)=1$ iff $v$ is in the marked set $M$.
We will also use the phase oracle $e^{i\pi \Pi_M}$, which works as: 
\begin{equation}\label{eq:general_oracle_def}
    e^{i\pi \Pi_M} \ket{v} =(-1)^{f(v)}\ket{v}.
\end{equation}
The phase oracle $e^{i\pi \Pi_M}$ can be implemented by the standard oracle $O_M$ due to the well-known phase kick-back effect: $O_M\ket{v}\ket{-}=e^{i\pi \Pi_M}\ket{v}\ket{-}$, where $\ket{-}=(\ket{0}-\ket{1})/\sqrt{2}$.

We will consider spatial search on a simple undirected connected graph $G$, where ``simple” means the graph has no loops or multiple edges between any two vertices.
Denote by $V$ the vertex set and $E$ the edge set of $G$.
The Laplacian matrix of $G$ is $L = D-A$, where $D$ is the diagonal matrix with $D_{jj} = \mathrm{deg}(j)$, the degree of vertex $j$, and $A$ is the $(0,1)$ adjacency matrix of $G$, where $A_{ij}=1$ if and only if $(i,j)\in E$.
The Laplacian $L$ of a simple undirected graph $G$ is a symmetric matrix since $D$ and $A$ are both symmetric, and thus $L$ is diagonalizable.

The spectrum of $L$ has the following nice properties:
(i) All the eigenvalues of $L$ are non-negative and bounded above by $N := |V|$~\cite{Laplacian_distinct}.
(ii) $0$ is a simple eigenvalue (i.e. with multiplicity one) of $L$ if $G$ is connected, and the corresponding eigenvector is the uniform superposition of all vertices $\ket{\pi}:= \frac{1}{\sqrt{N}}\sum_{v\in V}\ket{v}$~\cite{Laplacian_spectral}.

\subsection{Controlled intermittent quantum walks}
Suppose a graph $G$ has the Laplacian matrix $L$.
The CIQW on $G$ has the state space $\mathcal{K}\otimes\mathcal{H}$, where $\mathcal{K}$ consisting of $k$ qubits (we only need $k=1$ in our spatial search algorithm) denotes the ancillary space that stores the control signal, and  $\mathcal{H} = \mathrm{span} \{ \ket{v} : v\in V(G) \}$  spanned by all vertices of the graph $G$ is the walking space.
A CIQW (with $m$ intermittent steps), denoted by $W$, is defined as follows:
\begin{equation}\label{eq:CIQW_evolution}
    W := \prod_{j=1}^{m} \left[ (U_{j} \otimes I)\cdot \Lambda_k(e^{iLt_j}) \right]
    \cdot (U_0 \otimes I),
\end{equation}
where $U_j$ for $j\in\{0,1,\dots,m\}$ are unitary operators changing the control signal in ancillary space $\mathcal{K}$,
and 
$\Lambda_k(e^{iLt_j})=\sum_{l} \ket{l}\bra{l}\otimes  e^{ilLt_j}$ denotes the controlled unitary transformation that applies $(e^{iLt_j})^l$ to $\mathcal{H}$, controlled by  $l\in\{0,1,\dots,2^k-1\}$  in the ancillary space $\mathcal{K}$.
We denote $\prod_{j=1}^{m} A_j := A_m A_{m-1} \cdots A_1$ (rather than $A_1 A_2 \cdots A_m$), since $\mathcal{H}$ consists of column vectors and the rightmost $A_1$ should be applied first.
An illustration of the CIQW $W$ is shown in Fig.~\ref{fig:QW_model}.

\begin{figure}[hbt]
	\centering
	\includegraphics[width=\textwidth]{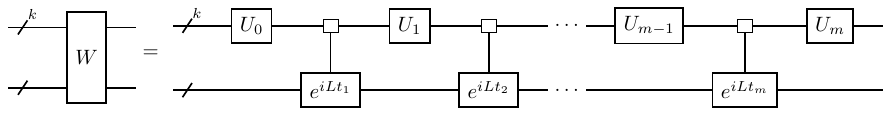}
	\caption{\label{fig:QW_model} Illustration of the CIQW model. }
\end{figure}
As shown in Fig.~\ref{fig:QW_model}, a CIQW performs a controlled CTQW $\Lambda_k(e^{iLt_1})$ for a period of time under a certain control signal generated by $U_0$, then adjusts the control signal by $U_1$ and performs a controlled CTQW $\Lambda_k(e^{iLt_2})$ for another period of time under the new control signal, repeating the operation in this way.
Aside from the query complexity to the oracles, we are also concerned with the total evolution time, which is the sum of all the time $\{t_j\}$ appearing in the CTQW $e^{iLt_j}$.


\subsection{Spatial search via CIQW}
We propose a CIQW-based algorithm for spatial search.
For any general graph with any set of marked vertices, our algorithm can find a marked vertex with bounded-error:
\begin{theorem}[Theorem~\ref{pre:thm:unknown} restatement]\label{thm:unknown}
    For any graph with any number of marked vertices,  a CIQW-based quantum algorithm can find a marked vertex with probability $\Omega(1)$ in total evolution time $ O(\frac{1}{\lambda \sqrt{\varepsilon}})$ and  query complexity $ O(\frac{1}{\sqrt{\varepsilon}})$, where $\lambda$ is the gap between the zero and non-zero eigenvalues of the graph Laplacian and $\varepsilon$ is a lower bound on the proportion of marked vertices.
\end{theorem}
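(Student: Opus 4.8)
The plan is to realize Grover's search with the CIQW playing the role of the reflection about the uniform state $\ket{\pi}$. Recall that $\ket{\pi}$ is the unique eigenvector of $L$ with eigenvalue $0$, and write $\ket{\pi} = \sin\alpha\,\ket{M} + \cos\alpha\,\ket{M^\perp}$, where $\ket{M}$ is the normalized projection of $\ket{\pi}$ onto the marked subspace; by assumption $\sin^2\alpha = \norm{\Pi_M\ket{\pi}}^2 \ge \varepsilon$, so $\alpha = \Omega(\sqrt{\varepsilon})$. Starting from $\ket{\pi}$ and alternately applying the phase oracle $e^{i\pi\Pi_M} = I-2\Pi_M$ (a reflection about $\ket{M^\perp}$, already provided) and an approximate reflection $R_\pi$ about $\ket{\pi}$, the state rotates within $\mathrm{span}\{\ket{M},\ket{M^\perp}\}$ by angle $\approx 2\alpha$ per iteration, so after $O(1/\alpha)=O(1/\sqrt{\varepsilon})$ iterations its overlap with the marked subspace is $\Omega(1)$, which is measured to output a marked vertex. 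This already accounts for the $O(1/\sqrt{\varepsilon})$ queries to the oracle.

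The heart of the argument is to implement $R_\pi$ via CIQW and QPD. First I would fix an evolution granularity $t \le \pi/N$, so that every eigenphase $\mu_k t$ of $U:=e^{iLt}$ lies in $(-\pi,\pi]$; the unique $0$-eigenphase belongs to $\ket{\pi}$, while every other eigenvalue satisfies $|\mu_k t|\ge \lambda t$. Running the circuit $C(U,\lambda t,L')$ of Theorem~\ref{res:lem:phase_discrimination} with a single ancilla qubit then separates $\ket{\pi}$ from the rest: it sends $\ket{0}\ket{\pi}\mapsto\ket{0}\ket{\pi}$ \emph{exactly} by \eqref{eq:0_omega_1}, whereas for any eigenvector $\ket{\psi_k}$ with $k\neq 0$ the ancilla overlap obeys $|\braket{0|w}|\le\delta$ by \eqref{eq:0_omega_delta}. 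Sandwiching a reflection $2\ket{0}\bra{0}-I$ on the ancilla between this circuit and its inverse yields an operator acting as $+1$ on $\ket{\pi}$ (exactly) and as $-1+O(\delta)$ on the orthogonal complement, i.e. an $O(\delta)$-approximation of $R_\pi = 2\ket{\pi}\bra{\pi}-I$. The key point is that $C(U,\lambda t,L')$ uses $L'=O\big(\tfrac{1}{\lambda t}\ln\tfrac1\delta\big)$ controlled-$U$'s, each contributing evolution time $t$, so the total evolution time of one $R_\pi$ is $O(L't) = O\big(\tfrac1\lambda\ln\tfrac1\delta\big)$, independent of the granularity $t$ and of $N$.

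It remains to control how the $O(\delta)$ imperfection propagates. Because QPD is exact on $\ket{\pi}$, the approximate reflection fixes the initial state perfectly and perturbs only the complementary direction of the Grover plane (a combination of nonzero eigenvectors), leaking $O(\delta)$ amplitude into the ancillary $\ket{1}$ branch per use. A telescoping/hybrid bound on the operator-norm distance between the ideal iterate and its QPD approximation shows the deviations add up to $O(\delta/\sqrt{\varepsilon})$ over the $O(1/\sqrt{\varepsilon})$ iterations, so it suffices to take $\delta$ small enough; multiplying the per-reflection evolution time by the iteration count then gives total evolution time $O\big(\tfrac1\lambda\ln\tfrac1\delta\big)\cdot O\big(\tfrac1{\sqrt{\varepsilon}}\big)=O\big(\tfrac{1}{\lambda\sqrt{\varepsilon}}\big)$ and $O(1/\sqrt{\varepsilon})$ oracle queries. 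The main obstacle I anticipate is exactly this error-propagation step: a naive choice $\delta=\Theta(\sqrt{\varepsilon})$ introduces an extra $\ln(1/\varepsilon)$ factor, so to obtain the clean bound one must exploit the robustness of the (fixed-point) amplification—arguing that composing $O(1/\sqrt{\varepsilon})$ approximate reflections does not destroy the coherent Grover rotation and that a suitably controlled $\delta$ keeps $\ln(1/\delta)$ from dominating. Making this quantitative is where the technical weight of the proof lies.
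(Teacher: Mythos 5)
Your overall architecture matches the paper's: Grover iteration with the phase oracle, plus an approximate reflection about $\ket{\pi}$ built by sandwiching an ancilla reflection between the QPD circuit $C(U,\lambda t,L')$ and its inverse, with per-reflection evolution time $O(\frac{1}{\lambda}\ln\frac{1}{\delta})$ (the paper's Lemma~\ref{lem:approx_ref}, with $U=e^{iLt_0}$, $t_0=\pi/\lambda_N$). However, there is a genuine gap at exactly the point you flag as ``where the technical weight of the proof lies'': you never supply the mechanism that removes the $\ln(1/\varepsilon)$ factor. With a flat Grover iteration, the hybrid/telescoping bound you invoke is essentially tight --- the $O(\delta)$ per-step perturbations on the orthogonal complement can accumulate coherently over $\Theta(1/\sqrt{\varepsilon})$ iterations --- so one really is forced to take $\delta=O(\sqrt{\varepsilon})$, and no amount of ``robustness'' of the plain iteration rescues the clean bound. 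The paper's resolution is structural, not a refined error estimate: it replaces the flat iteration by the MNRS \emph{recursive} amplitude amplification (Algorithm~\ref{alg:recur_known}, Lemma~\ref{lem:recur_known}), in which the amplified angle triples at each of $t=O(\log(1/p_M))$ recursion levels, the approximate reflection at level $i$ is run with error tolerance $\beta_i=\frac{9}{2\pi^3}\gamma/i^2$ (a constant, independent of $\varepsilon$, up to the $1/i^2$ decay), and the cost recursion $C(i)=3C(i-1)+c_1\log(1/\beta_i)+c_2$ sums the $\log(1/\beta_i)=O(\log i+\log(1/\gamma))$ terms against geometric weights $3^{-i}$, yielding $C=O\bigl(\frac{1}{\sqrt{p_M}}(c_1\log\frac{1}{\gamma}+c_2)\bigr)$ with no $\log(1/\varepsilon)$. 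Without this (or an equivalent fixed-point-style scheme carried out quantitatively), your argument only proves the weaker bound $O\bigl(\frac{1}{\lambda\sqrt{\varepsilon}}\log\frac{1}{\varepsilon}\bigr)$ on the evolution time.

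A second, smaller gap: you assume iterating $O(1/\alpha)=O(1/\sqrt{\varepsilon})$ times yields $\Omega(1)$ overlap, but $\varepsilon$ is only a \emph{lower bound} on $p_M=\|\Pi_M\ket{\pi}\|^2$. If $p_M\gg\varepsilon$, running for $\Theta(1/\sqrt{\varepsilon})$ iterations overshoots the target angle and the final overlap can be $o(1)$. The paper handles this with a separate argument (Lemma~\ref{lem:recur_unknown}, Algorithm~\ref{alg:recur_unkonwn}): it runs the recursive operators $A_i$ for $i=1,\dots,t_{\max}$, measuring against $\Pi_M$ after each, and shows via Claim~\ref{claim:delta} that the conditioning on failed measurements perturbs the state by at most $\frac{\pi}{12}+\frac{3\gamma}{4}$, so the success amplitude at the correct level $t\leq t_{\max}$ remains at least $\frac{1}{2}-\frac{\pi}{12}-\frac{5\gamma}{4}$. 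Your mention of fixed-point amplification gestures in the right direction but is not carried out; as written, both the unknown-$p_M$ issue and the log-factor removal remain unproved claims rather than proof steps.
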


The inverse linear dependence on the eigenvalue gap $\lambda$ of the graph Laplacian $L$ is perhaps optimal, as the inverse process of a CIQW based quantum search algorithm can be seen as a special case of preparing the ground state of a Hamiltonian $H$ (note that $\ket{\pi}$ is the ground state of $L$ from which the search algorithm starts with), the task of which is proven by Tong et al.~\cite{ground_state_22, Lin2020nearoptimalground} to have a lower bound of $\Omega(1/\Delta)$ on the number of applications of $e^{iH}$, where $\Delta$ is the gap between the ground-state energy of $H$ with the rest of its spectrum.

\textbf{Proof overview.}
Given a graph with Laplacian matrix $L$ and marked set $M$, our CIQW-based algorithm starts from the initial state $\ket{\pi}$, the uniform superposition of all the vertices (also the eigenstate of $L$ with eigenvalue $0$), and then performs the CIQW in Fig.~\ref{fig:QW_model} and the oracle $e^{i\pi \Pi_M}$ alternately.
The main idea is to mimic Grover's search algorithms, or more precisely, to implement Grover's iteration $e^{i\pi\ket{\pi}\bra{\pi}} \cdot e^{i\pi \Pi_M}$.
As the oracle $e^{i\pi \Pi_M}$ is already provided by Eq.~\eqref{eq:general_oracle_def}, the key is to implement approximately the reflection $e^{i\pi\ket{\pi}\bra{\pi}}$ around the initial state $\ket{\pi}$ using CIQW.

For general graphs with $N$ vertices, its Laplacian eigenvalues $0 =\lambda_1 < \lambda_2 \leq \cdots \leq \lambda_{N} \leq N$ can be irrational numbers.
For example, the Laplacian eigenvalues of the $n$-vertex cycle are $4\sin(\frac{k\pi}{n})^2$ for $k=0,1,\dots,(n-1)$, since its Laplacian matrix $L=D-A=2I-A$ and the eigenvalues of $A$ are $2\cos(\frac{2\pi k}{n})$ as shown in Appendix~\ref{app:n_cycle}.

The most straightforward approach to implement an approximation of the reflection $e^{i\pi\ket{\pi}\bra{\pi}}$ is to use phase estimation on $e^{iLt}$, but that would require $O(\log(\frac{1}{\delta}) \log(\frac{\lambda_{N}}{\lambda_2}))$ ancillary qubits~\cite[Theorem 6]{MNRS}, where $\delta$ is the error of approximation.
Instead, we will use CIQW shown in Fig.~\ref{fig:QW_model} with $k=1$ to implement the approximate reflection, where a key subroutine is QPD shown in Theorem~\ref{res:lem:phase_discrimination}.
The analysis of the evolution time $t_1 = O(\frac{1}{\lambda_2} \log(\frac{1}{\delta}) )$ to approximate the reflection $e^{i\pi\ket{\pi}\bra{\pi}}$ with error $\delta$ is shown in Section~\ref{subsec:reflection}.
The reduction of ancillary qubit numbers from $O(\log(\frac{1}{\delta}) \log(\frac{\lambda_{N}}{\lambda_2}))$ to only $1$ means that the only signal processing unitaries we need now are single qubit rotations, which makes the implementation much simpler compared to repeated QPE that requires quantum Fourier transform.
As an analytical formula gives the rotation angles, we are also free from the complicated numerical calculations required in other methods shown in Table~\ref{tab:summary}.

After the above treatment, Grover's iteration is implemented with approximation error $\delta$, and a direct approach to finding a marked vertex with constant probability is to iterate the approximate Grover's iteration for $O(1/\sqrt{\varepsilon})$ times, where $\varepsilon$ is the proportion of marked vertices, and set the approximation error to $\delta = O(\sqrt{\varepsilon})$.
However, this would incur a multiplicative logarithmic factor of $O(\log(1/\varepsilon))$ in the total cost, since $t_1 = O(\frac{1}{\lambda_2} \log(\frac{1}{\delta}) )$.
To address this issue, we use recursive amplitude amplification with approximate reflection introduced in the MNRS framework~\cite{MNRS}.
We first consider the case where the proportion of marked vertices is given (Lemma~\ref{lem:recur_known} in Section~\ref{subsec:general}), and then deal with the general case where only a lower bound on the proportion of marked vertices is known (Lemma~\ref{lem:recur_unknown} in Section~\ref{subsec:general}).
We correct a small mistake in the original error analysis (see Remark~\ref{remark:why_pi_4} in Appendix~\ref{app:1} for details), and give the revised full proof in Appendices~\ref{app:1} and \ref{app:2}.

\subsection{Approximate reflection}\label{subsec:reflection}
We now show how to implement an approximation $R(\delta)$ of the reflection $\mathrm{ref}(\pi) = 2\ket{\pi}\bra{\pi}-I$ with one-sided error $\delta$ (stated formally in Lemma~\ref{lem:approx_ref} below), based on our QPD (Theorem~\ref{res:lem:phase_discrimination}). 
The meaning of one-sided error is that $R(\delta) \ket{0}\ket{\pi} = \ket{0}\ket{\pi}$, and $\| (R(\delta)-I\otimes\mathrm{ref}(\pi)) \ket{0}\ket{\psi} \| \leq \delta$ if $\braket{\psi|\pi}=0$.
Note that $R(\delta)$ is a CIQW (Fig.~\ref{fig:QW_model}) with $O\left( \log(\frac{1}{\delta}) \frac{\lambda_{N}}{\lambda_2} \right)$ intermittent steps and $1$ ancillary qubit, from Fig.~\ref{fig:R_delta} and Eq.~\eqref{eq:L_in_R_delta}.

\begin{lemma}\label{lem:approx_ref}
    Suppose that the eigenvector corresponding to the only zero eigenphase of  unitary $U$ is $\ket{\pi}$, and that the gap between the remaining eigenphases and zero is $\lambda $, i.e. $U = \ket{\pi}\bra{\pi} + \sum_{\phi: |\phi|\geq \lambda } e^{i\phi} \ket{\phi}\bra{\phi}$.
    Then the quantum circuit $R(\delta)$ shown in Fig.~\ref{fig:R_delta} uses only one ancillary qubit and $L$ controlled-$U$ and controlled-$U^\dagger$, where $L$ is an odd number such that $L \geq \frac{\ln(4/\delta)}{\lambda/2}$, to approximate the reflection $\mathrm{ref}(\pi) := 2\ket{\pi}\bra{\pi}-I$.
    To be more precise,
    \begin{numcases}{}
        R(\delta) \ket{0}\ket{\pi} = \ket{0}\ket{\pi}; \label{eq:R_0_pi}\\
        \| (R(\delta)+I) \ket{0}\ket{\psi} \| \leq \delta, \ \mathrm{if}\, \braket{\psi|\pi}=0. \label{eq:R_0_psi}
    \end{numcases}
\end{lemma}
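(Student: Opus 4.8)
The plan is to realize $R(\delta)$ as a conjugation of the QPD circuit by a single-qubit phase flip on the ancilla. Concretely, writing $C:=C(U,\lambda,L)$ for the circuit of Theorem~\ref{res:lem:phase_discrimination} and $Z=2\ket{0}\bra{0}-I$ for the Pauli-$Z$ acting on the ancillary qubit, I would set $R(\delta):=C^\dagger\,(Z\otimes I)\,C$, which is exactly the circuit of Fig.~\ref{fig:R_delta} and uses $L$ controlled-$U$ gates (inside $C$) together with $L$ controlled-$U^\dagger$ gates (inside $C^\dagger$). To calibrate the error I would choose the odd integer $L\geq \frac{\ln(4/\delta)}{\lambda/2}$, so that Theorem~\ref{res:lem:phase_discrimination} applied with allowable one-sided error $\delta/2$ guarantees both $C\ket{0}\ket{\pi}=\ket{0}\ket{\pi}$ and that, for every eigenvector $\ket{\phi}$ with $|\phi|\geq\lambda$, the output ancilla $\ket{w_\phi}$ defined by $C\ket{0}\ket{\phi}=\ket{w_\phi}\ket{\phi}$ satisfies $|\braket{0|w_\phi}|\leq\delta/2$.

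Equation~\eqref{eq:R_0_pi} is then immediate: since $C\ket{0}\ket{\pi}=\ket{0}\ket{\pi}$ and $Z\ket{0}=\ket{0}$, applying $C^\dagger$ returns $\ket{0}\ket{\pi}$, so $R(\delta)\ket{0}\ket{\pi}=\ket{0}\ket{\pi}$ exactly. For Equation~\eqref{eq:R_0_psi} I would first treat a single eigenvector $\ket{\phi}$ with $|\phi|\geq\lambda$. Writing $\ket{w_\phi}=\alpha_\phi\ket{0}+\beta_\phi\ket{1}$ with $\alpha_\phi=\braket{0|w_\phi}$, the algebraic identity $Z\ket{w_\phi}=-\ket{w_\phi}+2\alpha_\phi\ket{0}$ holds, and applying $C^\dagger$ (using $C^\dagger\ket{w_\phi}\ket{\phi}=\ket{0}\ket{\phi}$ and that $C^\dagger\ket{0}\ket{\phi}$ is a unit vector) yields $R(\delta)\ket{0}\ket{\phi}=-\ket{0}\ket{\phi}+2\alpha_\phi\,C^\dagger\ket{0}\ket{\phi}$, i.e. $(R(\delta)+I)\ket{0}\ket{\phi}=2\alpha_\phi\,C^\dagger\ket{0}\ket{\phi}$, a vector of norm $2|\alpha_\phi|\leq\delta$.

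To finish I would extend this to an arbitrary $\ket{\psi}$ with $\braket{\psi|\pi}=0$ by decomposing $\ket{\psi}=\sum_\phi c_\phi\ket{\phi}$ over eigenvectors of $U$ with $|\phi|\geq\lambda$, so that $(R(\delta)+I)\ket{0}\ket{\psi}=\sum_\phi 2c_\phi\alpha_\phi\,C^\dagger\ket{0}\ket{\phi}$. The key observation is that the vectors $\{C^\dagger\ket{0}\ket{\phi}\}$ are orthonormal, since the states $\ket{0}\ket{\phi}$ are mutually orthogonal and $C^\dagger$ is unitary; hence the squared norm decouples as $\sum_\phi 4|c_\phi|^2|\alpha_\phi|^2\leq\delta^2\sum_\phi|c_\phi|^2=\delta^2$, which gives Equation~\eqref{eq:R_0_psi}. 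The one place that needs care—and the only step beyond bookkeeping—is precisely this last estimate: a naive triangle inequality over the eigenphases would add the per-eigenvector errors linearly and could blow up by the number of distinct eigenphases, whereas exploiting the orthonormality of $\{C^\dagger\ket{0}\ket{\phi}\}$ makes the errors combine in quadrature and keeps the bound at $\delta$ independently of the spectrum of $U$.
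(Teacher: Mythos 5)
Your proof is correct and follows essentially the same route as the paper: run the QPD circuit $C(U,\lambda,L)$ at one-sided error $\delta/2$ (which is exactly why $L \geq \ln(4/\delta)/(\lambda/2)$), sandwich it with an ancilla phase flip, verify the two eigenvector cases via the identity $Z\ket{w_\phi} = -\ket{w_\phi} + 2\braket{0|w_\phi}\ket{0}$, and then combine the per-eigenvector errors in quadrature rather than by a naive triangle inequality. The only (immaterial) differences are that the paper's circuit is $R_z(-\pi)\,C^\dagger\,R_z(\pi)\,C = (Z\otimes I)\,C^\dagger\,(Z\otimes I)\,C$ rather than your $C^\dagger\,(Z\otimes I)\,C$ (the extra outer $Z$ acts trivially on every term that matters), and the paper gets the quadrature bound by writing each error term as $\ket{\delta_\phi}\ket{\phi}$ with mutually orthogonal second registers, which is the same orthogonality you obtain through unitarity of $C^\dagger$.
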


\begin{figure}[hbt]
	\centering
	\includegraphics[width=0.7\textwidth]{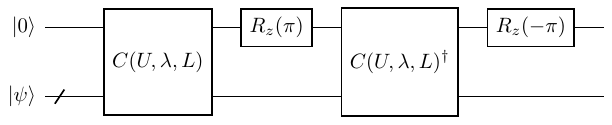}
	\caption{\label{fig:R_delta} The quantum circuit $R(\delta)$ that approximates the reflection $\mathrm{ref}(\pi) = 2\ket{\pi}\bra{\pi}-I$ with one-sided error $\delta$, where the module $C(U,\lambda ,L)$ is shown in Fig.~\ref{res:fig:phase_discrimination}, and $L$ is an odd number such that $L\geq \frac{\ln(4/\delta)}{\lambda/2}$.  }
\end{figure}

\begin{proof}
We first consider the case where the initial state is $\ket{0}\ket{\pi}$.
Since $\ket{\pi}$ is the eigenvector of $U$ whose eigenphase is zero, by Lemma~\ref{res:lem:phase_discrimination}, we have $C(U,\lambda ,L) \ket{0}\ket{\pi} = \ket{0}\ket{\pi}$.
Thus, the effect of $R(\delta)$ on $\ket{0}\ket{\pi}$ can be calculated as follows:
\begin{align}
    R(\delta) \ket{0}\ket{\pi} &= R_z(-\pi) \cdot C(U,\lambda ,L)^\dagger \cdot R_z(\pi)  \ket{0}\ket{\pi} \\
    &= R_z(-\pi) \cdot C(U,\lambda ,L)^\dagger (-i)\ket{0}\ket{\pi} \\
    &= R_z(-\pi) (-i) \ket{0}\ket{\pi} \\
    &= \ket{0}\ket{\pi},
\end{align}
which proves Eq.~\eqref{eq:R_0_pi}.

We then consider the case where the initial state is $\ket{0}\ket{\phi}$ such that $U\ket{\phi} = e^{i\phi} \ket{\phi}$ and$|\phi|\geq \lambda $.
Since $L\geq \frac{\ln(2/(\delta/2))}{\lambda/2}$, by Lemma~\ref{res:lem:phase_discrimination}, it follows that $C(U,\lambda ,L) \ket{0}\ket{\phi} = (a\ket{0}+b\ket{1}) \ket{\phi}$, where $|a|\leq \delta/2$ and $|a|^2+|b|^2=1$.
Thus, the effect of $R(\delta)$ on $\ket{0}\ket{\phi}$ can be calculated as follows:
\begin{align}
	R(\delta) \ket{0}\ket{\phi} &= R_z(-\pi) \cdot C(U,\lambda ,L)^\dagger \cdot R_z(\pi) (a\ket{0}+b\ket{1}) \ket{\phi} \\
	&= R_z(-\pi) \cdot C(U,\lambda ,L)^\dagger (-ia\ket{0}+ib\ket{1}) \ket{\phi} \\
	&= R_z(-\pi) \cdot C(U,\lambda ,L)^\dagger (ia\ket{0}+ib\ket{1} -2ia\ket{0}) \ket{\phi} \\
	&= R_z(-\pi) i \ket{0}\ket{\phi} -R_z(-\pi) \cdot C(U,\lambda ,L)^\dagger 2ia\ket{0} \ket{\phi}\\
	&= -\ket{0}\ket{\phi} +\ket{\delta_\phi}\ket{\phi},
\end{align}
where $\| \ket{\delta_\phi} \| = 2|a| \leq \delta$.

Finally, we consider the case where $\braket{\psi|\pi}=0$.
Then we can expand $\ket{\psi}$ as $\ket{\psi} = \sum_{|\phi|\geq \lambda } c_\phi \ket{\phi}$, where $\sum_{|\phi|\geq \lambda } |c_\phi|^2 = 1$.
Using linearity of $R(\delta)$, we have:
\begin{align}
	R(\delta) \ket{0}\ket{\psi} &= \sum_{|\phi|\geq \lambda } c_\phi R(\delta)\ket{0}\ket{\phi}\\
	&= \sum_{|\phi|\geq \lambda } c_\phi (-\ket{0}\ket{\phi} +\ket{\delta_\phi}\ket{\phi}) \\
	&= -\ket{0} \sum_{|\phi|\geq \lambda } c_\phi \ket{\phi} +\sum_{|\phi|\geq \lambda } c_\phi \ket{\delta_\phi}\ket{\phi}  \\
	&= -\ket{0} \ket{\psi} +\ket{\delta},
\end{align}
where $\| \ket{\delta} \| = \sqrt{\sum_{|\phi|\geq \lambda } |c_\phi|^2 \|\ket{\delta_\phi} \|^2 } \leq \delta \sqrt{\sum_{|\phi|\geq \lambda } |c_\phi|^2} = \delta$.
Thus $\| (R(\delta)+I) \ket{0}\ket{\psi} \| = \| \ket{\delta} \|  \leq \delta$, which proves Eq.~\eqref{eq:R_0_psi}.
We have now finished the proof of Lemma~\ref{lem:approx_ref}.
\end{proof}

Recall that the eigenphases of $U =e^{iLt_0}$, where $t_0 = \frac{\pi}{\lambda_{N}}$, are $0=\frac{\lambda_1\pi}{\lambda_N} <\frac{\lambda_2\pi}{\lambda_{N}} \leq \cdots \leq \frac{\lambda_{N}\pi}{\lambda_{N}} = \pi$.
Thus, the eigenphase gap is $\lambda  = \frac{\pi\lambda_2}{\lambda_{N}}$.
By the above Lemma~\ref{lem:approx_ref}, we can implement the approximate reflection $R(\delta)$ with one-sided error $\delta$ using only one ancillary qubit and $2L$ CTQW controlled-$e^{iLt_0}$, where $L$ is lower bounded by
\begin{equation}\label{eq:L_in_R_delta}
    L \geq \frac{\log(4/\delta)}{\pi\lambda_2/(2\lambda_N)} 
    = O\left(\frac{\lambda_{N}}{\lambda_2} \log(\frac{1}{\delta})\right).
\end{equation}
The total evolution time is
\begin{equation}\label{eq:t_2}
    t_1 := 2Lt_0 = O\left(\frac{1}{\lambda_2} \log(\frac{1}{\delta}) \right).
\end{equation}

\subsection{Search algorithms}\label{subsec:general}
We now prove Theorem~\ref{thm:unknown}.
Recall from Eq.~\eqref{eq:general_oracle_def} the effect of the phase oracle $e^{i\pi \Pi_M}$.
Since we have shown how to implement the one-sided-error approximation $R(\delta)$ of the reflection $\mathrm{ref}(\pi) = 2\ket{\pi}\bra{\pi}-I$ with evolution time $t_1 = O(\frac{1}{\lambda_2} \log(\frac{1}{\delta}) )$,
a straightforward approach to achieve search on general graphs is to combine it with Grover's search algorithm, where $R(\delta)$ is iterated for $O(1/\sqrt{\varepsilon})$ times, assuming the proportion of marked vertices is $\varepsilon = \| \Pi_M \ket{\pi} \|^2$.
To succeed with constant probability, we need to set $\delta = O(\sqrt{\varepsilon})$ in each application of $R(\delta)$, therefore incurring a multiplicative factor $\log(1/\delta)= O(\log(1/\varepsilon))$ in the total cost.

To remove the additional log factor, we will use the recursive amplitude amplification with approximate reflection shown in~\cite{MNRS}.
We first consider the case where the proportion of marked vertices is given, and then deal with the general case where only a lower bound on the proportion of marked vertices is known.

\subsubsection*{Case I: $p_M = \| \Pi_M \ket{\pi} \|^2$ is given.}

\begin{lemma}\label{lem:recur_known}
    Let $t$ be the integer such that $\bar{\varphi}_t := 3^t\arcsin(\sqrt{p_M}) \in[\pi/6,\pi/2]$, where $p_M := \| \Pi_M \ket{\pi} \|^2 >0$ is given.
    Let $\gamma \in(0,1)$.
    Consider the operator $A_t$ defined recursively in Algorithm~\ref{alg:recur_known}.
    Then the final state of $A_t$ applying to $\ket{\pi}\ket{0^t}$ has a constant overlap with the marked vertices:
    \begin{equation}\label{eq:known_success}
        \| \Pi_{M}\otimes I_2^{\otimes t} \cdot A_t \cdot  \ket{\pi}\ket{0^t} \|
        \geq \frac{1}{2} (1-\gamma).
    \end{equation}
    Furthermore, assume that the cost of $\mathrm{ref}(\mathcal{M}^\perp) = e^{i\pi \Pi_M}$ is $c_2$, and the cost of the approximate reflection $R(\beta)$ is $c_1 \log(\frac{1}{\beta})$.
    Then the cost of $A_t$ is
    \begin{equation}\label{eq:known_cost}
        C = O\left(\frac{1}{\sqrt{p_M}}(c_1\log(\frac{1}{\gamma})+c_2) \right).
    \end{equation}
\end{lemma}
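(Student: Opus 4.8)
The plan is to prove Lemma~\ref{lem:recur_known} by the recursive amplitude amplification of MNRS~\cite{MNRS}, using the one-sided approximate reflection $R(\beta)$ from Lemma~\ref{lem:approx_ref} in place of the exact reflection $\mathrm{ref}(\pi)$, and the exact oracle $e^{i\pi\Pi_M}$ as the reflection around the marked subspace. Writing $\varphi_0:=\arcsin(\sqrt{p_M})$ and $\varphi_k:=3^k\varphi_0$, the operator $A_k$ of Algorithm~\ref{alg:recur_known} is built so that, up to a global sign, $A_k = A_{k-1}\,\tilde R_k\,A_{k-1}^\dagger\, e^{i\pi\Pi_M}\, A_{k-1}$ on the base state $\ket{b}:=\ket{\pi}\ket{0^{k-1}}$, i.e. one Grover step wrapped around $A_{k-1}$, where $\tilde R_k=R(\beta_k)$ is the approximate reflection around $\ket{\pi}$ at a precision $\beta_k$ to be fixed. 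I would first carry out the idealized analysis: with exact reflections, the composite $A_{k-1}\,\mathrm{ref}(\pi)\,A_{k-1}^\dagger$ is the reflection around $A_{k-1}\ket{b}$, so each level maps the overlap angle $\varphi_{k-1}\mapsto 3\varphi_{k-1}$. After $t$ levels the state $A_t\ket{\pi}\ket{0^t}$ sits at angle $\bar\varphi_t=3^t\varphi_0\in[\pi/6,\pi/2]$, giving marked overlap $\sin\bar\varphi_t\ge\sin(\pi/6)=\tfrac12$, which is Eq.~\eqref{eq:known_success} in the error-free case. The remaining work is to show the approximation costs only an additive $O(\gamma)$.

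The error analysis is the crux and the main obstacle, and it is precisely the point that must be handled carefully (the subtlety flagged in Remark~\ref{remark:why_pi_4}). The naive operator-norm recursion $\epsilon_k\le 3\epsilon_{k-1}+\beta_k$ telescopes to $\epsilon_t\le\sum_k 3^{t-k}\beta_k$, which would force $\beta_k=O(\gamma\sqrt{p_M})$ and reintroduce the unwanted $\log(1/p_M)$ factor. This bound is far too pessimistic: because $R(\beta_k)$ has one-sided error, the deviation $\|(R(\beta_k)-\mathrm{ref}(\pi))\ket{0}\ket{\chi}\|$ is weighted by the norm of the component of $\ket{\chi}$ orthogonal to $\ket{\pi}$. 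In the recursive step the reflection acts on $A_{k-1}^\dagger e^{i\pi\Pi_M}A_{k-1}\ket{b}\approx \ket{b}-2\sin\varphi_{k-1}\,A_{k-1}^\dagger\ket{M}$, whose orthogonal-to-$\ket{\pi}$ part has norm $O(\sin\varphi_{k-1})=O(3^{k-1}\varphi_0)$. Hence a single level-$k$ reflection injects error only $O(\beta_k\,3^{k-1}\varphi_0)$, and since level $k$ is executed $3^{t-k}$ times inside $A_t$, its total contribution is $O(\beta_k\,3^{t-1}\varphi_0)=O(\beta_k\,\bar\varphi_t)=O(\beta_k)$. Summing over levels yields a final error $O(\sum_{k=1}^t\beta_k)$, so it suffices to guarantee $\sum_k\beta_k=O(\gamma)$ to recover Eq.~\eqref{eq:known_success} with constant $\tfrac12(1-\gamma)$.

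For the cost I would count oracle and reflection calls level by level: $e^{i\pi\Pi_M}$ and $R(\beta_k)$ each appear $3^{t-k}$ times at level $k$, so the total cost is $\sum_{k=1}^t 3^{t-k}\big(c_1\log(1/\beta_k)+c_2\big)$. The oracle part is immediately $O(c_2\sum_k 3^{t-k})=O(c_2\,3^t)=O(c_2/\sqrt{p_M})$, using $3^t=\Theta(1/\varphi_0)=\Theta(1/\sqrt{p_M})$. For the reflection part I would choose the precision schedule $\beta_k=\Theta(\gamma\,3^{-k})$, which satisfies $\sum_k\beta_k=O(\gamma)$ by the geometric series and makes the rarely-used top levels the most precise. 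Then $\log(1/\beta_k)=O\big(k+\log(1/\gamma)\big)$, and the decisive computation is $\sum_{k=1}^t 3^{t-k}\log(1/\beta_k)=O\big(3^t\log(1/\gamma)\big)+O\big(3^t\sum_{k\ge1}k\,3^{-k}\big)=O\big(3^t\log(1/\gamma)\big)$, where the key fact is that $\sum_{k\ge1}k\,3^{-k}$ converges to a constant, so no extra logarithmic factor appears. This gives the reflection cost $O(c_1\,3^t\log(1/\gamma))=O\big(c_1\log(1/\gamma)/\sqrt{p_M}\big)$, and combining it with the oracle part yields Eq.~\eqref{eq:known_cost}. I expect the delicate steps to be the orthogonal-component weighting of the one-sided error (to beat the naive $3^{t-k}\beta_k$ bound) and the matching $3^{-k}$ precision schedule; the rest is bookkeeping that I would relegate to the appendix, together with the careful tracking of the additive constants so that the overlap lands at exactly $\tfrac12(1-\gamma)$.
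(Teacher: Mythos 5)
Your plan has the same skeleton as the paper's proof (Appendix~\ref{app:1}): MNRS-style recursive amplification, a fresh-error term at level $k$ weighted by the orthogonal component $O(\sin 2\varphi_{k-1})=O(3^{k-1}\varphi_0)$ of the state the one-sided reflection acts on, a total error of $O(\sum_k\beta_k)$ forcing only $\sum_k\beta_k=O(\gamma)$, and the cost sum $\sum_{k=1}^t 3^{t-k}\bigl(c_1\log(1/\beta_k)+c_2\bigr)$ with $3^t\leq \pi/(2\sqrt{p_M})$. Your schedule $\beta_k=\Theta(\gamma 3^{-k})$ differs from the paper's $\beta_i=\frac{9}{2\pi^3}\gamma/i^2$ (Claim~\ref{claim:e_i_tilde}), but both give $\sum_k\beta_k=O(\gamma)$ and both keep $\sum_k 3^{t-k}\log(1/\beta_k)=O(3^t\log(1/\gamma))$ because $\sum_k k\,3^{-k}$ (respectively $\sum_k 3^{-k}\log k$) converges; the cost part of your argument is correct as stated.

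The gap is in how you combine the per-level errors. You justify the factor $3^{t-k}$ by saying the level-$k$ reflection ``is executed $3^{t-k}$ times inside $A_t$,'' each execution injecting $O(\beta_k 3^{k-1}\varphi_0)$. That per-use bound holds only for the copy of $A_k$ fed with the base state $\ket{\pi}\ket{0^{k-1}}$: the copies of $A_k$ sitting inside steps 3 and 5 of higher levels act on states far from the base state. Tracing the ideal two-dimensional evolution, the level-$1$ reflections inside $A_t$ act on states whose angle from the unmarked axis grows along the circuit up to $\Theta(3^t\varphi_0)=\Theta(\bar{\varphi}_t)=\Theta(1)$, so their orthogonal-to-$\ket{\pi}$ components are $\Theta(1)$, not $O(\varphi_0)$; a literal hybrid argument over all uses then yields only $O\bigl(3^t\sum_k\beta_k\bigr)$, i.e.\ exactly the naive bound you set out to beat, off by a factor $\Theta(1/\sqrt{p_M})$. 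What actually rescues the bound---and what the paper proves---is the conjugation structure: since $R(\beta_k)$ fixes $\ket{0}\ket{\pi}$ exactly (Lemma~\ref{lem:approx_ref}), the combined operator of steps 3--5, $A_{k-1}\cdot(\text{step 4})\cdot A_{k-1}^\dagger$, is a one-sided $\beta_k$-approximate reflection about the \emph{actual} state $\ket{\phi_{k-1}}=A_{k-1}\ket{\pi}\ket{0^{k-1}}$, no matter how imperfect $A_{k-1}$ is (Facts~\ref{fact:E_i_phi_i1} and \ref{fact:E_i_psi}). Hence each level injects fresh error $\leq\beta_k|\sin 2\varphi_{k-1}|$ exactly once (Claim~\ref{claim:varphi_i1_3}), and the previously accumulated \emph{amplitude} error propagates through the angle-tripling step with Lipschitz constant 3, $|\sin 3A-\sin 3B|\leq 3|\sin A-\sin B|$ (Fact~\ref{fact:sin_3A}); unrolling this recursion gives the same arithmetic $\sum_k 3^{t-k}\beta_k\bar{\varphi}_{k-1}=O(\sum_k\beta_k)$ that you wrote down, but for a different reason. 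Moreover, that Lipschitz bound is valid only while $\varphi_i\leq\pi/4$, so the induction must carry the invariant $\varphi_i\leq\pi/4$ alongside the error bound---this is precisely why the paper shrinks MNRS's window $[\pi/4,3\pi/4]$ to $[\pi/6,\pi/2]$ (Remark~\ref{remark:why_pi_4}). This invariant and the conjugation argument are the load-bearing parts of the proof, not the ``bookkeeping'' you defer; without them, the mechanism you describe does not establish Eq.~\eqref{eq:known_success}.
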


Although Lemma~\ref{lem:recur_known} is almost the same as Lemma~1 in Ref.~\cite{MNRS}, the total number of ancillary qubits is reduced from $$O(\log(\lambda_{N}/\lambda_2)\sum_{i=1}^{t} \log(1/\beta_i) ) = O(\log({\lambda_{N}}/{\lambda_2}) \log({1}/{p_M}) \log({\log({1}/{p_M}})/{\gamma}) ),$$ where $t=O(\log(1/p_M))$, to $O(\log(1/p_M))$, since our construction of the approximate reflection $R(\beta)$ shown in Lemma~\ref{lem:approx_ref} uses only one ancillary qubit.
Note also that we let $t$ such that $3^t\arcsin(\sqrt{p_M}) \in[\pi/6,\pi/2]$ instead of $3^t\arcsin(\sqrt{p_M}) \in[\pi/4,3\pi/4]$.
We make this minor modification so that the proof becomes more rigorous (see Remark~\ref{remark:why_pi_4} in Appendix~\ref{app:1} for details), at the small expense of reducing the lower bound on success amplitude from $1/\sqrt{2}-\gamma$ to $1/2-\gamma/2$.
Our revised proof of Lemma~\ref{lem:recur_known} is deferred to Appendix~\ref{app:1}.

\RestyleAlgo{ruled}
\begin{algorithm}[hbt]
    \caption{Recursive operator $A_i$.}\label{alg:recur_known}
    \begin{description}
    \item[1.] Apply $A_{i-1}$ to $\mathcal{H} \otimes \mathcal{K}_1 \otimes \cdots\otimes\mathcal{K}_{i-1}$.
    \item[2.] Apply $\mathrm{ref}(\mathcal{M}^\perp) := e^{i\pi \Pi_M}$ (phase oracle in Eq.~\eqref{eq:general_oracle_def}) to $\mathcal{H}$, which adds phase shift $(-1)$ to marked vertices.
    \item[3.] Apply $A_{i-1}^\dagger$ to $\mathcal{H} \otimes \mathcal{K}_1 \otimes \cdots\otimes\mathcal{K}_{i-1}$.
    
    \item[4.] Apply $R(\beta_i)$ to $\mathcal{H} \otimes \mathcal{K}_i$ conditioned on $\ket{0^{i-1}}$ in registers $\mathcal{K}_1 \otimes \cdots\otimes\mathcal{K}_{i-1}$ with $\beta_i=\frac{9}{2\pi^3}\gamma/i^2$, and apply $(2 \ket{0^{i-1}}\bra{0^{i-1}} - I)$ to $\mathcal{K}_1 \otimes \cdots\otimes\mathcal{K}_{i-1}$.
    \item[5.] Apply $A_{i-1}$ to $\mathcal{H} \otimes \mathcal{K}_1 \otimes \cdots\otimes\mathcal{K}_{i-1}$.
    \end{description}
Note: $A_0 = I$, and each register $\mathcal{K}_i$ consists of $1$ qubit initialized to $\ket{0}$.
\end{algorithm}

\subsubsection*{Case II: a lower bound $\varepsilon$ on $p_M$ is known.}

Using Lemma~\ref{lem:recur_known}, we can show the following Lemma~\ref{lem:recur_unknown}, which is parallel to \cite[Lemma~2]{MNRS}, but the success probability there is lower bounded by $1/12-3\gamma$.
For completeness, we also present the proof of Lemma~\ref{lem:recur_unknown} in Appendix~\ref{app:2}.

\begin{lemma}\label{lem:recur_unknown}
    Let $t_\mathrm{max}$ be the integer such that $3^{t_\mathrm{max}}\arcsin(\sqrt{\varepsilon}) \in [\pi/6,\pi/2]$,
    where $\varepsilon < p_M = \| \Pi_M \ket{\pi} \|^2$.
    Then for $\gamma <\frac{4}{5}(\frac{1}{2} -\frac{\pi}{12}) \approx 0.19$, the search process $S$ shown in Algorithm~\ref{alg:recur_unkonwn} outputs a marked vertex with probability greater than $(\frac{1}{2} -\frac{\pi}{12} -\frac{5\gamma}{4})^2$.
    Furthermore, assume that the cost of $\mathrm{ref}(\mathcal{M}^\perp) = e^{i\pi \Pi_M}$ is $c_2$, and the cost of the approximate reflection $R(\beta)$ is $c_1 \log(\frac{1}{\beta})$.
    Then the cost of $S$ is
    \begin{equation}\label{eq:unknown_cost}
        C = O\left(\frac{1}{\sqrt{\varepsilon}}(c_1\log(\frac{1}{\gamma})+c_2) \right).
    \end{equation}    
\end{lemma}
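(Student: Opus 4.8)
The plan is to reduce the unknown-proportion case to the known case already settled in Lemma~\ref{lem:recur_known}, the only missing ingredient being the correct recursion depth. Write $\varphi:=\arcsin(\sqrt{p_M})$ for the unknown true angle, so that the recursion of Algorithm~\ref{alg:recur_known} run to depth $t$ produces the angle $\bar\varphi_t=3^t\varphi$. First I would show that a good depth $t^\ast\in\{0,1,\dots,t_\mathrm{max}\}$, meaning $\bar\varphi_{t^\ast}\in[\pi/6,\pi/2]$, always exists: the ratio of the endpoints of the target interval is exactly $(\pi/2)/(\pi/6)=3$, so a single tripling cannot jump over it, and since $\varepsilon<p_M$ gives $\arcsin(\sqrt\varepsilon)<\varphi$ together with $3^{t_\mathrm{max}}\arcsin(\sqrt\varepsilon)\ge\pi/6$, we get $\bar\varphi_{t_\mathrm{max}}\ge\pi/6$, which forces $t^\ast\le t_\mathrm{max}$.

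Since $p_M$, and hence $t^\ast$, is unknown, the search $S$ sweeps the candidate depths $t=0,1,\dots,t_\mathrm{max}$: for each it runs $A_t$ on $\ket{\pi}\ket{0^t}$, measures the walk register, and uses the oracle $O_M$ of Eq.~\eqref{eq:standard_oracle_def} to test whether the outcome is marked, stopping at the first marked vertex. The success probability of $S$ is then at least the success probability of the run at depth $t^\ast$, where Lemma~\ref{lem:recur_known} applies verbatim and guarantees a marked amplitude of at least $\sin(\bar\varphi_{t^\ast})\ge\sin(\pi/6)=\tfrac12$ minus the accumulated approximate-reflection error. Tracking these losses is where the constant $\tfrac{\pi}{12}$ (the worst-case deficit of $\sin(\bar\varphi_{t^\ast})$ below its maximum over the admissible interval, as in the MNRS level analysis) and the term $\tfrac{5\gamma}{4}$ (the reflection error) enter, yielding the claimed amplitude $\tfrac12-\tfrac{\pi}{12}-\tfrac{5\gamma}{4}$; the hypothesis $\gamma<\tfrac45(\tfrac12-\tfrac{\pi}{12})$ is precisely what keeps this positive so that squaring gives a genuine probability bound.

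The step I expect to be the main obstacle is the error bookkeeping for the approximate reflections. Although $R(\beta_i)$ from Lemma~\ref{lem:approx_ref} is applied $3^{t-i}$ times inside $A_t$, its contribution to the final amplitude must be shown to add up only as $\sum_i\beta_i$ rather than with the naive multiplicity $3^{t-i}$, which would be exponentially large. The one-sided error of $R(\beta_i)$ is what makes this possible: by Eq.~\eqref{eq:R_0_pi} it fixes $\ket{0}\ket{\pi}$ exactly and perturbs only the orthogonal complement, so amplitude amplification keeps the state in the relevant two-dimensional invariant subspace and the per-level perturbations combine additively. With the schedule $\beta_i=\tfrac{9}{2\pi^3}\gamma/i^2$ the series converges, $\sum_{i\ge 1}\beta_i=\tfrac{9\gamma}{2\pi^3}\cdot\tfrac{\pi^2}{6}=\tfrac{3\gamma}{4\pi}=O(\gamma)$, which is the reason for the $1/i^2$ weighting; propagating this bound through the recursion gives the $O(\gamma)$ amplitude loss used above.

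For the cost I would solve the recurrence $\mathrm{cost}(A_t)=3\,\mathrm{cost}(A_{t-1})+c_2+c_1\log(1/\beta_t)$ with $A_0=I$. Since $\log(1/\beta_t)=O(\log\tfrac1\gamma+\log t)$ is dominated by the geometric weight, this solves to $\mathrm{cost}(A_t)=O\!\big(3^t(c_1\log\tfrac1\gamma+c_2)\big)$, and summing the sweep over $t=0,\dots,t_\mathrm{max}$ is a geometric series dominated by its largest term $O\!\big(3^{t_\mathrm{max}}(c_1\log\tfrac1\gamma+c_2)\big)$. Finally, the definition of $t_\mathrm{max}$ gives $3^{t_\mathrm{max}}\arcsin(\sqrt\varepsilon)\le\pi/2$, and $\arcsin(x)\ge x$ yields $3^{t_\mathrm{max}}=O(1/\sqrt\varepsilon)$, establishing $C=O\!\big(\tfrac1{\sqrt\varepsilon}(c_1\log\tfrac1\gamma+c_2)\big)$, as claimed.
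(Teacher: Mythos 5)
There is a genuine gap, and it stems from a misreading of Algorithm~\ref{alg:recur_unkonwn}. The algorithm prepares $\ket{\pi}\ket{0^{t_\mathrm{max}}}$ \emph{once}; after each failed binary measurement it applies the next operator $A_{i+1}$ to the \emph{collapsed} post-measurement state (in the paper's notation, $\ket{\psi_{i+1}} = A_{i+1}\ket{v_i^\perp,0}$), and it never re-prepares $\ket{\pi}\ket{0^t}$. Your proposal instead analyzes a restart-based sweep (``for each it runs $A_t$ on $\ket{\pi}\ket{0^t}$'') and on that basis asserts that Lemma~\ref{lem:recur_known} ``applies verbatim'' at the good depth $t^\ast$. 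For the actual algorithm this assertion is false: at iteration $t^\ast$ the input state is $\ket{v_{t^\ast-1}^\perp,0}$, not $\ket{\pi}\ket{0^{t^\ast}}$, so Lemma~\ref{lem:recur_known} gives no direct guarantee. The entire technical core of the paper's proof is devoted to controlling exactly this drift: it bounds $\delta_i := \| \ket{\psi_i}-\ket{\phi_i}\|$, the distance between the conditioned state and the ideal state $\ket{\phi_i}=A_i\ket{\pi}\ket{0^i}$, via the recurrence $\delta_{i+1}\leq 3\delta_i + 4\varphi_0\sum_{k}3^k\beta_k + \varphi_0$ (Facts~\ref{fact:delta_i1_1}--\ref{fact:delta_i1_3}), concluding $\delta_t \leq \frac{\pi}{12}+\frac{3\gamma}{4}$ (Claim~\ref{claim:delta}). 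Note this deviation is a \emph{constant}, not a vanishing error; it is precisely where the $\frac{\pi}{12}$ and part of the $\frac{5\gamma}{4}$ in the lemma's bound come from. Your attribution of $\frac{\pi}{12}$ to a ``worst-case deficit of $\sin(\bar\varphi_{t^\ast})$ below its maximum'' is incorrect, and nothing in your proposal supplies the conditioned-state analysis, so the success-probability claim for Algorithm~\ref{alg:recur_unkonwn} remains unproven. (Your third paragraph, on why the reflection errors inside $A_t$ accumulate as $\sum_i\beta_i$, re-derives intuition that belongs to Lemma~\ref{lem:recur_known}, which you may cite as given; it does not address the inter-iteration issue.)

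Two parts of your proposal are sound and essentially match the paper: the existence of a depth $t^\ast\leq t_\mathrm{max}$ with $3^{t^\ast}\arcsin(\sqrt{p_M})\in[\pi/6,\pi/2]$, and the cost analysis (solving $C(t)=3C(t-1)+c_1\log(1/\beta_t)+c_2$, summing the geometric series, and using $3^{t_\mathrm{max}}=O(1/\sqrt{\varepsilon})$). It is also true that a \emph{modified}, restart-based algorithm would admit your simpler analysis with an even better constant $\bigl(\tfrac{1}{2}(1-\gamma)\bigr)^2$, since in this paper's setting $\ket{\pi}$ is merely the uniform superposition and is cheap to re-prepare (unlike the MNRS setting, whose structure Algorithm~\ref{alg:recur_unkonwn} inherits). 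But the lemma as stated is about Algorithm~\ref{alg:recur_unkonwn} itself, and to prove it you must either supply an argument in the spirit of Claim~\ref{claim:delta} or explicitly change the algorithm --- which would be proving a different statement.
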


\RestyleAlgo{ruled,linesnumbered}
\begin{algorithm}[hbt]
\caption{Search process $S$.}\label{alg:recur_unkonwn}
Prepare the initial state $\ket{\pi}\ket{0^{t_\mathrm{max}}} \in \mathcal{H} \otimes \mathcal{K}_1 \otimes \cdots\otimes\mathcal{K}_{t_\mathrm{max}}$\;
\eIf{$t_\mathrm{max}=0$}{Measure $\mathcal{H}$ in the computational basis and output the result\;}{
$i \gets 1$\;
 \While{$i\leq t_\mathrm{max}$}{
  Apply $A_{i}$ to $\mathcal{H} \otimes \mathcal{K}_1 \otimes \cdots\otimes\mathcal{K}_{i}$\;
  Measure $\mathcal{H}$ according to $\Pi_M$, which can be done by applying the standard oracle $O_M$ (Eq.~\eqref{eq:standard_oracle_def}) to $\mathcal{H}$ and one ancillary qubit, and then measuring the ancillary qubit\;
  \If{successful}{
   Measure $\mathcal{H}$ in the computational basis, output the result, and stop\;
   }
  $i \gets i+1$\;
 } 
}
\end{algorithm}


Recall from Eq.~\eqref{eq:t_2} that the evolution time of implementing $R(\beta)$ is $O(\frac{1}{\lambda_2} \log(\frac{1}{\beta}))$.
Thus, the constant $c_1$ in Lemma~\ref{lem:recur_unknown} is $c_1 =O(\frac{1}{\lambda_2})$.
The number $p$ of invocation to the phase oracle $e^{i\pi\Pi_M}$ and the total evolution time $T$ of the CTQW $e^{iLt}$ are as follows:
\begin{equation}
    p = O\left( {1}/{\sqrt{\varepsilon}} \right), \quad
    T = O\left( {1}/(\lambda_2\sqrt{\varepsilon}) \right),
\end{equation}
where $\lambda_2$ is the second-smallest eigenvalue of the graph Laplacian, and $\varepsilon$ is a lower bound on the proportion of marked vertices.
This proves Theorem~\ref{thm:unknown}.

\section{Application to path-finding on graphs}\label{sec:path}

In a recent paper by Li and Zur~\cite{multi_electric}, an ingenious $3$-regular graph based on welded trees is constructed and an exponential speedup of the path-finding problem on this welded tree circuit graph is provided. 
Specifically, they show that a quantum algorithm can solve the path-finding problem with success probability $1-O(\delta)$ and $O(n^{11}\log(n/\delta))$ queries to the adjacency list oracle~\cite[Theorem 5.2]{multi_electric}, while any classical algorithm needs to make $2^{\Omega(n)}$ queries~\cite[Theorem 5.6]{multi_electric}.
Here, $n$ is a parameter that determines the size of the welded tree circuit graph containing $\Theta(n2^n)$ vertices.
We now briefly demonstrate how to improve the query complexity from $O(n^{11}\log(n/\delta))$ to $O(n^8\log(n)\log(n/\delta))$ with QPD (Theorem~\ref{res:lem:phase_discrimination}).

The quantum algorithm for the path-finding problem on the welded tree circuit graph relies on the following Lemma~\ref{lem:phase_filter}, which originates from Lemma~8 in Ref.~\cite{pid19} and Lemma~10 in Ref.~\cite{AP22}.
It tells us that by applying phase estimation~\cite{phase_estimation} to a unitary operator $U$ and an initial state $\ket{\psi}$, and then post-selecting on the $0$-phase, we can approximately project $\ket{\psi}$ to the $1$-eigenspace of the unitary $U$.

\begin{lemma}[Lemma~2.12 in Ref.~\cite{multi_electric}] \label{lem:phase_filter}
    Define the unitary $U_{\cal AB} = (2\Pi_{\cal A} - 1)(2\Pi_{\cal B} - 1)$ acting on a Hilbert space ${\cal H}$ for projectors $\Pi_{\cal A},\Pi_{\cal B}$ onto some subspaces ${\cal A}$ and ${\cal B}$ of ${\cal H}$ respectively.
    Let $\ket{\psi} = \sqrt{p}\ket{\varphi} + (I - \Pi_{\cal A})\ket{\phi}$ be a normalised quantum state such that the normalised vector $\ket{\varphi}$ satisfies $U_{\cal AB}\ket{\varphi} = \ket{\varphi}$ and $\ket{\phi}$ is a (unnormalised) vector satisfying $\Pi_{\cal B}\ket{\phi} = \ket{\phi}$.
    Then performing phase estimation on the state $\ket{\psi}$ with operator $U_{\cal AB}$ and precision $\delta$ outputs ``$0$'' with probability $p' \in [p,p + \frac{17\pi^2\delta\norm{\ket{\phi}}}{16}]$, leaving a state $\ket{\psi'}$ satisfying
    $ \frac{1}{2}\norm{\proj{\psi'} - \proj{\varphi}}_1 \leq \sqrt{\frac{17\pi^2\delta\norm{\ket{\phi}}}{16p}}. $
    Consequently, when the precision is $O\left(\frac{p\epsilon^2}{\norm{\ket{\phi}}}\right)$, the resulting state $\ket{\psi'}$ satisfies
    $ \frac{1}{2}\norm{\proj{\psi'} - \proj{\varphi}}_1 \leq \epsilon. $
\end{lemma}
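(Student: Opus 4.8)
The plan is to exploit the special spectral structure of the two-reflection operator $U_{\cal AB} = (2\Pi_{\cal A}-1)(2\Pi_{\cal B}-1)$. By Jordan's lemma, the space ${\cal H}$ decomposes into one- and two-dimensional subspaces jointly invariant under $\Pi_{\cal A}$ and $\Pi_{\cal B}$; on each two-dimensional block $U_{\cal AB}$ acts as a rotation with a conjugate pair of eigenvalues $e^{\pm i\beta_j}$, while one-dimensional blocks give eigenvalues $\pm 1$. The eigenvalue-$1$ (phase-$0$) eigenspace is exactly $({\cal A}\cap{\cal B})\oplus({\cal A}^\perp\cap{\cal B}^\perp)$, and $\ket{\varphi}$ lies in it by assumption. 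First I would write $\ket{\psi}$ in this eigenbasis: the term $\sqrt{p}\ket{\varphi}$ sits purely in the phase-$0$ eigenspace, and I would show the error term $(I-\Pi_{\cal A})\ket{\phi}$ has \emph{no} phase-$0$ component. Indeed, writing $\ket{\phi}=\sum_j\phi_j\ket{b_j}$ in an orthonormal basis of ${\cal B}$ adapted to the Jordan blocks, one finds $(I-\Pi_{\cal A})\ket{b_j}$ has norm $\sin(\beta_j/2)$ and lies entirely in the nonzero-phase sector. This suppression factor $\sin(\beta_j/2)\to 0$ as $\beta_j\to 0$ is the key structural fact.

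Next I would model phase estimation as the isometry mapping an eigenstate of eigenphase $\beta$ to $\ket{\Phi(\beta)}_{\rm anc}$ tensored with the same eigenstate, where $\ket{\Phi(\beta)}$ is the standard phase register output. Post-selecting the ancilla on ``$0$'' acts as multiplication by $\langle 0|\Phi(\beta)\rangle$ on each eigencomponent. Since $\ket{\varphi}$ has $\beta=0$ and $0$ is on the estimation grid, $\langle 0|\Phi(0)\rangle=1$, so the phase-$0$ part passes through exactly and contributes probability $p$; because the residual error part is orthogonal to $\ket{\varphi}$ in the system register, there is no interference and $p'\geq p$. The nonzero-phase components leak into the ``$0$'' outcome with probability $P_{\rm PE}(0\mid\beta)=|\langle 0|\Phi(\beta)\rangle|^2$, for which I would invoke the standard Dirichlet-kernel tail bound of phase estimation at precision $\delta$. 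Summing over blocks, the leakage probability is $\sum_j|\phi_j|^2\sin^2(\beta_j/2)\,P_{\rm PE}(0\mid\beta_j)$; combining the $\sin^2(\beta_j/2)$ suppression with the $1/\beta_j^2$ growth of the kernel bound cancels the small-angle divergence and yields $p'-p\leq\frac{17\pi^2\delta\norm{\ket{\phi}}}{16}$, the claimed upper bound.

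It then remains to control the output state. Writing the post-measurement system state as $\ket{\psi'}=\frac{1}{\sqrt{p'}}(\sqrt{p}\ket{\varphi}+\ket{s})$, where $\ket{s}$ is the surviving error part, the argument above shows $\ket{s}$ is a combination of nonzero-phase eigenstates and is therefore orthogonal to $\ket{\varphi}$, with $\norm{\ket{s}}^2=p'-p$. Hence the overlap is exactly $|\braket{\varphi|\psi'}|^2=p/p'$, and for pure states $\frac12\norm{\proj{\psi'}-\proj{\varphi}}_1=\sqrt{1-|\braket{\varphi|\psi'}|^2}=\sqrt{(p'-p)/p'}\leq\sqrt{(p'-p)/p}$. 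Substituting the bound on $p'-p$ gives $\frac12\norm{\proj{\psi'}-\proj{\varphi}}_1\leq\sqrt{\frac{17\pi^2\delta\norm{\ket{\phi}}}{16p}}$, and the final consequence follows by demanding this be at most $\epsilon$, i.e.\ choosing precision $\delta=O(p\epsilon^2/\norm{\ket{\phi}})$.

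I expect the main obstacle to be the quantitative leakage estimate: making the cancellation between the $\sin^2(\beta_j/2)$ projection factor and the phase-estimation tail rigorous across all Jordan blocks simultaneously, and pinning down the precise constant $17\pi^2/16$ from the chosen form of the kernel tail bound. The structural steps (Jordan decomposition, exactness of the phase-$0$ passage, orthogonality of the residual) are routine, and the trace-distance conversion is a standard pure-state identity.
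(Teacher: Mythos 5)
First, a point of orientation: this paper never proves Lemma~\ref{lem:phase_filter} at all; it is quoted from Li and Zur~\cite{multi_electric} (who trace it to~\cite{pid19,AP22}), and what the paper actually proves is the QPD-based replacement, Lemma~\ref{lem:phase_filter_improve}, in Section~\ref{subsec:phase_filter_proof}. Judged against that proof, your route is structurally very close: both arguments decompose $\ket{\psi}$ in the eigenbasis of $U_{\cal AB}$, observe that the phase-$0$ component $\sqrt{p}\ket{\varphi}$ passes through exactly while the residual $(I-\Pi_{\cal A})\ket{\phi}$ has no phase-$0$ part, use orthogonality to get $p' = p + (\text{leakage})$ with no cross term, bound the leakage by playing a small-phase suppression against a tail bound for the filter, and finish with the pure-state identity $\frac{1}{2}\norm{\proj{\psi'}-\proj{\varphi}}_1 = \sqrt{1-p/p'} \leq \sqrt{(p'-p)/p}$. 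The difference is the middle ingredient: the paper invokes the cumulative effective spectral gap lemma (Lemma~\ref{lem:spectral_gap}) with a phase threshold $\sqrt{p}\epsilon/\norm{\ket{\phi}}$ and analyzes QPD, whereas you use the per-block Jordan structure ($(I-\Pi_{\cal A})\ket{b_j}$ has norm $\sin(\beta_j/2)$ and lies in the $\pm\beta_j$ block) and analyze QPE. Your Jordan facts, including the characterization of the phase-$0$ space as $({\cal A}\cap{\cal B})\oplus({\cal A}^\perp\cap{\cal B}^\perp)$, are correct.

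One thing you should make explicit, because it changes the complexion of your result: the per-block cancellation proves strictly more than the lemma states. The block weight $|\phi_j|^2\sin^2(\beta_j/2)$ times the Dirichlet-kernel bound $P_{\rm PE}(0\mid\pm\beta_j) \leq 1/(N^2\sin^2(\beta_j/2))$ (with $N \sim 1/\delta$ the number of controlled applications of $U_{\cal AB}$) gives total leakage at most $\norm{\ket{\phi}}^2/N^2 = O(\delta^2\norm{\ket{\phi}}^2)$, quadratically better than the claimed $O(\delta\norm{\ket{\phi}})$. The stated bounds then follow whenever $\delta\norm{\ket{\phi}} = O(1)$, and are vacuous in the opposite regime --- a case split you should write down. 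The linear-in-$\delta$ form in the literature comes from a triangle-inequality estimate of $\norm{\Lambda_{c\delta}\ket{\psi}}$, which produces a cross term of order $\sqrt{p}\,\delta\norm{\ket{\phi}}$; your orthogonality observation eliminates exactly that term. The consequence is worth stating outright: your analysis shows that textbook QPE with precision $O(\sqrt{p}\epsilon/\norm{\ket{\phi}})$, i.e.\ $O(\norm{\ket{\phi}}/(\sqrt{p}\epsilon))$ applications of $U_{\cal AB}$, already achieves trace distance $\epsilon$, matching the paper's QPD-based improvement in Eq.~\eqref{eq:U_calls_improve} up to the logarithmic factor. So either verify the convention behind ``outputs $0$'' and the constant $17\pi^2/16$ (if acceptance means a window of outcomes rather than the register reading exactly $0$, your exact cancellation must be redone), or claim the stronger bound; as written, your proof indicates that the original lemma is simply not tight, and that most of the gap between Lemma~\ref{lem:phase_filter} and Lemma~\ref{lem:phase_filter_improve} is slack in the original QPE analysis rather than an inherent cost of phase estimation.
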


Using QPD (Theorem~\ref{res:lem:phase_discrimination}), we obtain the following Lemma~\ref{lem:phase_filter_improve} which is an improvement of Lemma~\ref{lem:phase_filter}.
The proof of Lemma~\ref{lem:phase_filter_improve} is deferred to Section~\ref{subsec:phase_filter_proof}.

\begin{lemma}[Improved Lemma~\ref{lem:phase_filter}] \label{lem:phase_filter_improve}
    Let $U_{\cal AB}$, $\ket{\psi}$, $\ket{\varphi}$, $\ket{\phi}$ and $p$ be as in Lemma~\ref{lem:phase_filter}.
    Suppose $p$ has a known lower bound $\bar{p}$ and $\frac{\| \ket{\phi} \|}{\sqrt{p}}$ has a known upper bound $D$.
    For any allowable error $\epsilon\in (0,1)$, let $U= U_{\cal AB}$, $\lambda = \frac{\epsilon}{D}$ and set the odd number $L$ such that $L\geq  \ln(\frac{2\sqrt{2}}{\sqrt{\bar{p}}\epsilon}) \frac{2D} {\epsilon}$.
    Then by applying the quantum circuit $C(U,\lambda,L)$ shown in Fig.~\ref{res:fig:phase_discrimination} to the state $\ket{0}\ket{\psi}$ and then measuring the first qubit, we obtain $\ket{0}$ with probability $p' \in p \times [1,1+\frac{3}{4}\epsilon^2]$, leaving the second register in a state $\ket{\psi'}$ satisfying $\| \ket{\psi'} -\ket{\varphi} \| \leq \epsilon $.
\end{lemma}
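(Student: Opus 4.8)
The plan is to reduce everything to the per-eigenstate behavior of QPD (Theorem~\ref{res:lem:phase_discrimination}) through the spectral decomposition of the product of two reflections $U_{\cal AB}=(2\Pi_{\cal A}-1)(2\Pi_{\cal B}-1)$. By Jordan's lemma, ${\cal H}$ splits into one- and two-dimensional $U_{\cal AB}$-invariant subspaces; writing $\{\ket{\mu_k}\}$ for an eigenbasis with $U_{\cal AB}\ket{\mu_k}=e^{i\phi_k}\ket{\mu_k}$, I would expand $\ket{\psi}=\sum_k c_k\ket{\mu_k}$. First I would verify that $(I-\Pi_{\cal A})\ket{\phi}$ is orthogonal to the $(+1)$-eigenspace $({\cal A}\cap{\cal B})\oplus({\cal A}^\perp\cap{\cal B}^\perp)$ of $U_{\cal AB}$ — a short calculation using $\ket{\phi}\in{\cal B}$ and $(I-\Pi_{\cal A})\ket{\phi}\in{\cal A}^\perp$ — so that $\sqrt{p}\ket{\varphi}$ is \emph{exactly} the projection of $\ket{\psi}$ onto the $(+1)$-eigenspace, i.e.\ $\sum_{\phi_k=0}|c_k|^2=p$.

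Since $C(U,\lambda,L)$ acts as $\ket{0}\ket{\mu_k}\mapsto\ket{w_k}\ket{\mu_k}$ with $|\braket{0|w_k}|=g(\phi_k)$ given by Eq.~\eqref{eq:0_omega}, linearity yields the probability of measuring $\ket{0}$ on the ancilla as $p'=\sum_k|c_k|^2 g(\phi_k)^2$. The zero-phase terms contribute exactly $p$ (there $\ket{w_k}=\ket{0}$ exactly, by Eq.~\eqref{eq:0_omega_1}), so $p'\ge p$ and the residual $\ket{\mathrm{err}}:=\sum_{\phi_k\neq0}c_k\braket{0|w_k}\ket{\mu_k}$ is orthogonal to $\ket{\varphi}$ with $\norm{\ket{\mathrm{err}}}^2=p'-p$. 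The core of the upper bound on $p'$ is an effective-resistance-type estimate
\[
  \sum_{\phi_k\neq0}\frac{|c_k|^2}{\sin^2(\phi_k/2)}\le\norm{\ket{\phi}}^2,
\]
which I would derive block-by-block: in a two-dimensional Jordan block with principal angle $\theta_j$ (so $\phi_j=2\theta_j$), the unit vector $\ket{b_j}\in{\cal B}$ satisfies $(I-\Pi_{\cal A})\ket{b_j}=\sin\theta_j\,\ket{a_j^\perp}$ for a unit vector $\ket{a_j^\perp}\in{\cal A}^\perp$, so with $\ket{\phi}=\sum_j\beta_j\ket{b_j}$ the eigencoefficients obey $|c_j^+|^2+|c_j^-|^2=|\beta_j|^2\sin^2(\phi_j/2)$; summing over blocks gives the bound (the $\phi=\pi$ components from ${\cal B}\cap{\cal A}^\perp$ are absorbed with $\sin^2(\pi/2)=1$).

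Granting this, I would split $p'-p=\sum_{\phi_k\neq0}|c_k|^2 g(\phi_k)^2$ into the ranges $0<|\phi_k|<\lambda$ and $|\phi_k|\ge\lambda$. On the first range, monotonicity of $T_L$ on $[1,\infty)$ gives $g(\phi)\le g(0)=1$ and $\sin^2(\phi/2)\le(\lambda/2)^2$, hence $\sin^2(\phi/2)g(\phi)^2\le\lambda^2/4$; combining with the effective-resistance bound and $\norm{\ket{\phi}}^2\le pD^2$, $D^2\lambda^2=\epsilon^2$ yields $\le p\epsilon^2/4$. On the second range, the choice $L\ge\ln\!\big(\tfrac{2\sqrt2}{\sqrt{\bar p}\,\epsilon}\big)\tfrac{2D}{\epsilon}=\tfrac{\ln(2/\delta')}{\lambda/2}$ with $\delta'=\sqrt{\bar p}\,\epsilon/\sqrt2$ makes $g(\phi)\le\delta'$ by Eq.~\eqref{eq:0_omega_delta}, giving $\le\bar p\,\epsilon^2/2\le p\epsilon^2/2$. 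Together $p'-p\le\tfrac34 p\epsilon^2$, i.e.\ $p'\in p\times[1,1+\tfrac34\epsilon^2]$. Finally, writing $\ket{\psi'}=(\sqrt{p}\ket{\varphi}+\ket{\mathrm{err}})/\sqrt{p'}$ and using $\ket{\mathrm{err}}\perp\ket{\varphi}$, a direct computation collapses $\norm{\ket{\psi'}-\ket{\varphi}}^2$ to $2(1-\sqrt{p/p'})$, which the bound on $p'$ keeps below $\tfrac34\epsilon^2\le\epsilon^2$.

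The hard part will be the effective-resistance estimate: it genuinely needs the two-reflection (Jordan) structure and the per-block identification of how much weight $(I-\Pi_{\cal A})\ket{\phi}$ places at eigenphase $\phi_k$, quantified by the factor $\sin^2(\phi_k/2)$. Once that geometric step is secured, the remainder is bookkeeping, and I expect the QPD parameters fixed in the statement to produce precisely the two summands $p\epsilon^2/4$ and $p\epsilon^2/2$ whose total is $\tfrac34 p\epsilon^2$.
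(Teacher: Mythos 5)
Your proposal is correct, and its global skeleton matches the paper's proof: spectral decomposition of $U_{\cal AB}$, the per-eigenvector action of $C(U,\lambda,L)$ from Theorem~\ref{res:lem:phase_discrimination}, a split of $p'$ into zero-phase, small-phase and large-phase contributions (with the identical QPD parameter $\delta'=\sqrt{\bar p/2}\,\epsilon$ for the large phases), and the same endgame $\norm{\ket{\psi'}-\ket{\varphi}}^2 = 2(1-\sqrt{p/p'})$. The genuine difference is how the small-phase mass is controlled. The paper invokes the effective spectral gap lemma (Lemma~\ref{lem:spectral_gap}, cited from~\cite{state_conversion}) as a black box, applying it twice: at threshold $0$ to get $\norm{\Lambda_0\ket{\psi}}^2=p$, and at the $p$-dependent threshold $\frac{\sqrt{p}\epsilon}{\norm{\ket{\phi}}}$ to bound the middle of a three-way split by $p\epsilon^2/4$. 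You instead prove, via Jordan's lemma, the per-eigenvalue refinement
\begin{equation*}
\sum_{\phi_k\neq 0}\frac{|c_k|^2}{\sin^2(\phi_k/2)}\;\leq\;\norm{\ket{\phi}}^2,
\end{equation*}
and your block computation is sound: $(I-\Pi_{\cal A})\ket{b_j}=\sin\theta_j\ket{a_j^\perp}$, equal weight $1/2$ on the two rotation eigenvectors with $\phi_j=2\theta_j$, the $\pi$-phase components from ${\cal B}\cap{\cal A}^\perp$ entering with $\sin^2(\pi/2)=1$, and the orthogonality of $(I-\Pi_{\cal A})\ket{\phi}$ to $({\cal A}\cap{\cal B})\oplus({\cal A}^\perp\cap{\cal B}^\perp)$, which recovers $\norm{\Lambda_0\ket{\psi}}^2=p$ without citation. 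Your estimate is strictly stronger than Lemma~\ref{lem:spectral_gap} (which follows from it by restricting the sum to $|\phi_k|\leq\Theta$ and using $\sin^2(\phi_k/2)\leq(\Theta/2)^2$), so your route buys self-containedness and a cleaner two-way split at the fixed threshold $\lambda=\epsilon/D$, where your bound $\frac{\lambda^2}{4}\norm{\ket{\phi}}^2\leq\frac{p\epsilon^2}{4}$ uses $\norm{\ket{\phi}}\leq\sqrt{p}D$; the cost is carrying out the Jordan-lemma setup that the cited lemma packages away. One last detail: you claim $2(1-\sqrt{p/p'})\leq\frac34\epsilon^2$ directly, which is indeed valid since $(1-\frac38\epsilon^2)^2(1+\frac34\epsilon^2)\leq 1$, and is marginally sharper than the paper's chain $p/p'\geq 1-\frac34\epsilon^2\geq(1-\frac{\epsilon^2}{2})^2$ yielding $\epsilon^2$.
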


From the condition $L\geq  \ln(\frac{2\sqrt{2}}{\sqrt{\bar{p}}\epsilon}) \frac{2D} {\epsilon}$ in Lemma~\ref{lem:phase_filter_improve}, we can see that if a tight lower bound $\bar{p} = \Theta(p)$ on $p$ and a tight upper bound $D =\Theta(\frac{\| \ket{\phi} \|}{\sqrt{p}})$ on $\frac{\| \ket{\phi} \|}{\sqrt{p}}$ are known in advance, the number of calls to the unitary $U_{\cal AB}$ has the following improvement compared to Lemma~\ref{lem:phase_filter}.

\begin{equation}\label{eq:U_calls_improve}
	O\left( \frac{\| \ket{\phi}\|}{p\epsilon^2} \right) \to
	O\left(\log(\frac{1}{\sqrt{p}\epsilon}) \frac{\| \ket{\phi}\|}{\sqrt{p}\epsilon}\right).
\end{equation}

In the quantum algorithm for the path-finding problem on the welded tree circuit graph~\cite[Algorithm~2]{multi_electric}, $1/p = \mathcal{R}_{s,t}^{\rm alt} {\rm w}_s$ is a quantity that can be calculated exactly and of order $\Theta(n^2)$, and $\frac{\| \ket{\phi} \|}{\sqrt{p}} =\| \ket{{\rm p^{alt}}} \|$ is a quantity with an upper bound $D$ that can be calculated exactly and of order $\Theta(n^2)$.
Furthermore, the allowable error $\epsilon$ is set to $\epsilon = \Omega(1/n^2)$, and thus Eq.~\eqref{eq:U_calls_improve} becomes

\begin{equation}\label{eq:U_calls_improve_specific}
	O(n^7) \to O(n^4\log(n^3)).
\end{equation}

To obtain the total query complexity of the quantum algorithm~\cite[Algorithm~2]{multi_electric}, the number of calls to the unitary $U_{\cal AB}$ shown above needs to be multiplied by $\Theta(n^4\log(n/\delta))$, which is the number of repetitions of the phase estimation (discrimination) process to guarantee the overall success probability of $1-O(\delta)$.
See Ref.~\cite{multi_electric} for a more detailed analysis of the quantum algorithm, where a new multidimensional electrical network (with further applications such as traversing the welded tree graph~\cite{CCD03} and the one-dimensional random hierarchical graphs~\cite{BLH23}) is developed by defining Alternative Kirchhoff’s Law and Alternative Ohm’s Law based on the multidimensional quantum walk framework by Jeffery and Zur~\cite{multi}.

\subsection{Proof of Lemma~\ref{lem:phase_filter_improve}}\label{subsec:phase_filter_proof}
To prove Lemma~\ref{lem:phase_filter_improve}, we will also need the following lemma.

\begin{lemma}[Effective spectral gap lemma \cite{state_conversion}] \label{lem:spectral_gap}
    Define the unitary $U_{\cal AB} = (2\Pi_{\cal A} - 1)(2\Pi_{\cal B} - 1)$ acting on a Hilbert space ${\cal H}$ for projectors $\Pi_{\cal A},\Pi_{\cal B}$ onto some subspaces ${\cal A}$ and ${\cal B}$ of ${\cal H}$ respectively.
    Suppose $U_{\cal AB}$ has spectral decomposition $U_{\cal AB} = \sum_{j\in J} e^{i\theta_j} \Pi_j$, where $\theta_j \in (-\pi, \pi]$.
    For any $\epsilon \in [0,\pi)$, let $\Lambda_\epsilon := \sum_{j: |\theta_j|\leq \epsilon} \Pi_j$.
    If $\Pi_{\mathcal{B}} \ket{\phi} = \ket{\phi}$, then
    \begin{equation}
	   \| \Lambda_\epsilon(I-\Pi_{\mathcal{A}})\ket{\phi}\| \leq \frac{\epsilon}{2} \|\ket{\phi}\|.
    \end{equation}
\end{lemma}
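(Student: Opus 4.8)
The plan is to exploit the involution structure of the two reflections to rewrite the vector $(I-\Pi_{\cal A})\ket{\phi}$ in a spectrally transparent form. Write $R_{\cal A} = 2\Pi_{\cal A}-I$ and $R_{\cal B} = 2\Pi_{\cal B}-I$, so that $U_{\cal AB} = R_{\cal A}R_{\cal B}$ and both $R_{\cal A},R_{\cal B}$ are self-inverse reflections. The hypothesis $\Pi_{\cal B}\ket{\phi}=\ket{\phi}$ is then equivalent to $R_{\cal B}\ket{\phi}=\ket{\phi}$. The key observation is that, because $R_{\cal B}^2=I$, we have $R_{\cal A}=U_{\cal AB}R_{\cal B}$, and hence $R_{\cal A}\ket{\phi}=U_{\cal AB}R_{\cal B}\ket{\phi}=U_{\cal AB}\ket{\phi}$. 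Combining this with $I-\Pi_{\cal A}=\tfrac12(I-R_{\cal A})$ yields the clean identity
\[
(I-\Pi_{\cal A})\ket{\phi}=\tfrac12\,(I-U_{\cal AB})\ket{\phi}.
\]
This is the crux: it expresses the target vector entirely in terms of $U_{\cal AB}$, whose spectral projectors $\Pi_j$ we control.

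Next I would apply $\Lambda_\epsilon=\sum_{j:|\theta_j|\le\epsilon}\Pi_j$ and diagonalize. Since $U_{\cal AB}\Pi_j=e^{i\theta_j}\Pi_j$, projecting gives $\Pi_j(I-U_{\cal AB})\ket{\phi}=(1-e^{i\theta_j})\Pi_j\ket{\phi}$, and because the $\Pi_j$ are mutually orthogonal projectors, the cross terms vanish and
\[
\norm{\Lambda_\epsilon(I-\Pi_{\cal A})\ket{\phi}}^2=\tfrac14\sum_{j:|\theta_j|\le\epsilon}\bigl|1-e^{i\theta_j}\bigr|^2\norm{\Pi_j\ket{\phi}}^2 .
\]
Using $|1-e^{i\theta_j}|=2|\sin(\theta_j/2)|$ together with the elementary bound $|\sin(\theta_j/2)|\le|\theta_j|/2\le\epsilon/2$, valid for $|\theta_j|\le\epsilon$, each summand is at most $(\epsilon/2)^2\norm{\Pi_j\ket{\phi}}^2$. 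Extending the sum to all $j\in J$ and invoking completeness $\sum_{j}\norm{\Pi_j\ket{\phi}}^2=\norm{\ket{\phi}}^2$ then delivers $\norm{\Lambda_\epsilon(I-\Pi_{\cal A})\ket{\phi}}\le\tfrac{\epsilon}{2}\norm{\ket{\phi}}$, which is exactly the claimed bound.

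Since every step is an exact algebraic identity (the reflection-involution trick and the spectral expansion) or an elementary trigonometric inequality, there is no genuine obstacle to overcome. The only step requiring real insight is the opening move $R_{\cal A}=U_{\cal AB}R_{\cal B}$, which is what makes the $(I-\Pi_{\cal A})$ factor collapse into $\tfrac12(I-U_{\cal AB})$ when acting on $\ket{\phi}\in{\cal B}$; once that identity is in hand, the remainder is just Parseval's identity and the estimate $|\sin x|\le|x|$.
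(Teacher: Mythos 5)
Your proof is correct. Note that the paper does not prove this lemma at all---it imports it by citation from the state-conversion paper of Lee et al.~and uses it as a black box---so there is no in-paper proof to compare against; your argument (reducing $(I-\Pi_{\cal A})\ket{\phi}$ to $\tfrac12(I-U_{\cal AB})\ket{\phi}$ via $R_{\cal A}=U_{\cal AB}R_{\cal B}$ and then bounding $|1-e^{i\theta_j}|=2|\sin(\theta_j/2)|\leq|\theta_j|$ on the low-phase eigenspaces) is precisely the standard proof given in that reference, carried out correctly.
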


\begin{proof}[proof of Lemma~\ref{lem:phase_filter_improve}]
    Suppose $U_{\cal AB}$ has spectral decomposition $U_{\cal AB} = \sum_{j\in J} e^{i\theta_j} \Pi_j$, where $\theta_j \in (-\pi, \pi]$. From $U_{\cal AB}\ket{\varphi} =\ket{\varphi}$, we know:
    \begin{equation}\label{eq:Pi_j_varphi}
    	\Pi_j\ket{\varphi} =
    	\begin{cases}
    		\ket{\varphi}, &{\rm if}\ \theta_j =0;\\
    		0, &{\rm if}\ \theta_j \neq 0.		
    	\end{cases}
    \end{equation}
    Let $\delta = \sqrt{{\bar{p}}/{2}} \epsilon$, then $\delta \leq \sqrt{p/2}\epsilon$ as $\bar{p}$ is a lower bound on $p$.
    Since we have set $\lambda = \frac{\epsilon}{D}$ and $L\geq  \ln(\frac{2\sqrt{2}}{\sqrt{\bar{p}}\epsilon}) \frac{2D} {\epsilon}$, the odd number $L\geq\frac{2\ln(2/\delta)}{\lambda}$ satisfies the condition of Lemma~\ref{res:lem:phase_discrimination}.
    Therefore, when applying the quantum circuit $C(U_{\cal AB},\lambda,L)$ shown in Fig.~\ref{res:fig:phase_discrimination} to the state $\ket{0}\ket{\psi} = \sum_{j\in J} \ket{0} \Pi_j\ket{\psi_j}$, the final state $\sum_{j\in J} \ket{w_j} \Pi_j\ket{\psi}$ satisfies:
    \begin{equation}\label{eq:w_j_zero}
    	\begin{cases}
    		\ket{w_j} =\ket{0}, & {\rm if}\ \theta_j=0; \\
    		\left|\braket{0|w_j} \right| \leq \sqrt{p/2}\epsilon, & {\rm if}\ |\theta_j|\geq \epsilon/D.
    	\end{cases}
    \end{equation}
    We now calculate the probability of obtaining $\ket{0}$ after measuring the first qubit in the computation basis as follows:
    \begin{align}
    	p' &= \| \ket{0}\bra{0}\otimes I \sum_{j\in J} \ket{w_j} \Pi_j\ket{\psi} \|^2 \\
    	&= \| \sum_{j\in J} \braket{0|w_j} \ket{0} \Pi_j\ket{\psi} \|^2 \\
    	&= \sum_{j\in J} \left| \braket{0|w_j} \right|^2  \| \Pi_j\ket{\psi} \|^2\\
    	&= \| \Lambda_0\ket{\psi} \|^2 +\sum_{ j:|\theta_j|\in(0,\frac{\sqrt{p}\epsilon}{\|\ket{\phi}\|}] } \left| \braket{0|w_j} \right|^2  \| \Pi_j\ket{\psi} \|^2
    	+\sum_{ j:|\theta_j| > \frac{\sqrt{p}\epsilon}{\|\ket{\phi}\|} } \left| \braket{0|w_j} \right|^2  \| \Pi_j\ket{\psi} \|^2. \label{eq:p_prime_line4}
    \end{align}
    Consider the first term in Eq.~\eqref{eq:p_prime_line4}. By Lemma~\ref{lem:spectral_gap}, we have:
    \begin{equation}\label{eq:Pi_epsilon_psi}
    	\| \Lambda_\epsilon (\ket{\psi}-\sqrt{p}\ket{\varphi})\| 
    	= \| \Lambda_\epsilon (I-\Pi_{\mathcal{A}}) \ket{\phi} \|
    	\leq \frac{\epsilon}{2} \| \ket{\phi} \|,
    \end{equation}
    which implies $\Lambda_0\ket{\psi} =\sqrt{p}\Lambda_0\ket{\varphi}$ by setting $\epsilon =0$.
    From the fact that  $\Lambda_0 \ket{\varphi} =\ket{\varphi}$ by Eq.~\eqref{eq:Pi_j_varphi} and that $\ket{\varphi}$ is a normalized vector, we have:
    \begin{equation}\label{eq:part_0}
    	\| \Lambda_0\ket{\psi} \|^2 =p.
    \end{equation}
    We can also deduce that
    \begin{equation}\label{eq:varphi_psi}
    	\braket{\varphi|\psi} = \bra{\varphi} \Lambda_0 \ket{\psi} = \sqrt{p} \braket{\varphi|\varphi} =\sqrt{p}.
    \end{equation}
    For the second term in Eq.~\eqref{eq:p_prime_line4}, i.e. $|\theta_j|\in(0,\frac{\sqrt{p}\epsilon}{\|\ket{\phi}\|})$, we have $\| \Pi_j\ket{\psi} \|^2 = \| \Pi_j (\ket{\psi}-\sqrt{p}\ket{\varphi})\|^2 = \| \Pi_j (I-\Pi_{\mathcal{A}}) \ket{\phi} \|^2$, where the first equality follows from Eq.~\eqref{eq:Pi_j_varphi}.
    Together with the fact that $\left|\braket{0|w_j}\right|\leq 1$, we have:
    \begin{align}
    	0 &\leq
    	\sum_{ j:|\theta_j|\in(0,\frac{\sqrt{p}\epsilon}{\|\ket{\phi}\|}] } \left| \braket{0|w_j} \right|^2  \| \Pi_j\ket{\psi} \|^2 \\
    	&\leq \sum_{ j:|\theta_j|\in(0,\frac{\sqrt{p}\epsilon}{\|\ket{\phi}\|}] } \| \Pi_j (I-\Pi_{\mathcal{A}}) \ket{\phi} \|^2 \\
    	&= \| \Lambda_{\frac{\sqrt{p}\epsilon}{\|\ket{\phi}\|}} (I-\Pi_{\mathcal{A}}) \ket{\phi} \|
    	\leq \frac{p\epsilon^2}{4}, \label{eq:part_1_line_3}
    \end{align}
    where the last inequality follows from Eq.~\eqref{eq:Pi_epsilon_psi}.
    For the last term in Eq.~\eqref{eq:p_prime_line4}, we have $|\theta_j| > \frac{\sqrt{p}\epsilon}{\|\ket{\phi}\|} > \epsilon/D$ as $D$ is a upper bound on $\frac{\| \ket{\phi} \|}{\sqrt{p}}$.
    Therefore by the upper bound on $\left| \braket{0|w_j} \right|$ as shown in Eq.~\eqref{eq:w_j_zero}, we have:
    \begin{align}
    	0 &\leq
    	\sum_{ j:|\theta_j| > \frac{\sqrt{p}\epsilon}{\|\ket{\phi}\|} } \left| \braket{0|w_j} \right|^2  \| \Pi_j\ket{\psi} \|^2 \\
    	&\leq \left(\sqrt{p/2}\epsilon\right)^2 \sum_{ j:|\theta_j| > \frac{\sqrt{p}\epsilon}{\|\ket{\phi}\|} } \| \Pi_j\ket{\psi} \|^2
    	\leq \frac{p\epsilon^2}{2}, \label{eq:part_2_line_2}
    \end{align}
    where the last inequality uses the assumption that $\ket{\psi}$ is a normalized vector.
    By substituting Eqs.~\eqref{eq:part_0}, \eqref{eq:part_1_line_3} and \eqref{eq:part_2_line_2} into Eq.~\eqref{eq:p_prime_line4}, we conclude that $p' \in p \times [1,1+\frac{3}{4}\epsilon^2]$.
    This leads to the following lower bound on $p/p'$:
    \begin{equation}\label{eq:1_minus_epsilon}
    	p/p' \geq 1-\frac{3}{4}\epsilon^2
    	\geq 1-\epsilon^2 +\frac{\epsilon^4}{4}
    	= (1-\frac{\epsilon^2}{2})^2,
    \end{equation}
    where the second inequality uses the assumption that $\epsilon \in (0,1)$.
    We can now calculate the inner product between the post-measurement state $\ket{\psi'}$ and the target state $\ket{\varphi}$ as follows:
    \begin{align}
    	\braket{\varphi|\psi'} &=  \bra{0}\bra{\varphi} \cdot \frac{(\ket{0}\bra{0}\otimes I) \cdot C(U,\lambda,L) \cdot \ket{0}\ket{\psi}}{\sqrt{p'}}
    	\\&
    	= \frac{1}{\sqrt{p'}} \braket{\varphi|\psi}
    	=\sqrt{\frac{p}{p'}}, \label{eq:varphi_psi_line_2}
    \end{align}
    where the first equality in Eq.~\eqref{eq:varphi_psi_line_2} uses the fact that $\ket{0}\ket{\varphi}$ remains unchanged under $C(U_{\cal AB},\lambda,L)$, which can be seen from Eqs.~\eqref{eq:Pi_j_varphi} and \eqref{eq:w_j_zero}.
    The second equality in Eq.~\eqref{eq:varphi_psi_line_2} follows from Eq.~\eqref{eq:varphi_psi}.
    Since $\braket{\varphi|\psi'} \in \mathbb{R}$, we have:
    \begin{align}
    	\| \ket{\psi'} -\ket{\varphi} \|^2 &= 2\left(1-\sqrt{\frac{p}{p'}}\right) \leq \epsilon^2,
    \end{align}
    where the last inequality follows from Eq.~\eqref{eq:1_minus_epsilon}.
\end{proof}

\section{Conclusion and discussion}\label{sec:discussion}
In this paper, we studied the phase discrimination problem of deciding whether the eigenphase of a given eigenstate of a unitary $U$ is zero or not, and proposed a quantum algorithm called quantum phase discrimination (QPD) for this problem with optimal query complexity to controlled-$U$ and only one ancillary qubit.
The angles of single qubit $Y$ rotations acting on the ancillary qubit are given by a simple analytical formula, compared with other straightforward methods based on quantum signal processing (QSP) and its variations.
In light of the problem of how to bypass the computationally intensive angle-finding step in QSP has also been considered recently by Alase~\cite{without_angle}, it will be interesting to find analytical angle parameters that approximate functions other than the ``delta'' function considered in this paper.

We also discussed applications of QPD, focusing on two problems related to quantum search on graphs: (i) For the fundamental problem of spatial search on graphs, we propose the controlled intermittent quantum walks (CIQW) model inspired by the structure of QPD, and in combination with QPD, we obtain a novel quantum search algorithm; (ii) For a path-finding problem on a welded tree circuit graph by Li and Zur~\cite{multi_electric}, we improve the quantum algorithm's query complexity using QPD.
It will be exciting to find more quantum algorithms that can benefit from QPD.

\bibliographystyle{quantum}
\bibliography{ref}

\appendix

\section{The welded tree circuit graph}\label{app:circuit_graph}
The welded tree circuit graph $G$ constructed by Li and Zur is shown in Fig.~\ref{fig:welded_circuit_graph_simple}.
It consists of $n$ isomorphic graphs $G_i$ connected one by one, and each subgraph $G_i$ contains three welded tree of depth $n$ as depicted by the three diamonds $W_1,W_2,W_3$ in Fig.~\ref{fig:welded_circuit_graph_simple}.
The only vertex with degree $2$ is the starting vertex $s$, and the only vertex with degree $1$ is the target vertex $t$, and all the other vertices have degree $3$.
The goal of the path-finding problem is to find a path in $G$ that connects $s$ and $t$.

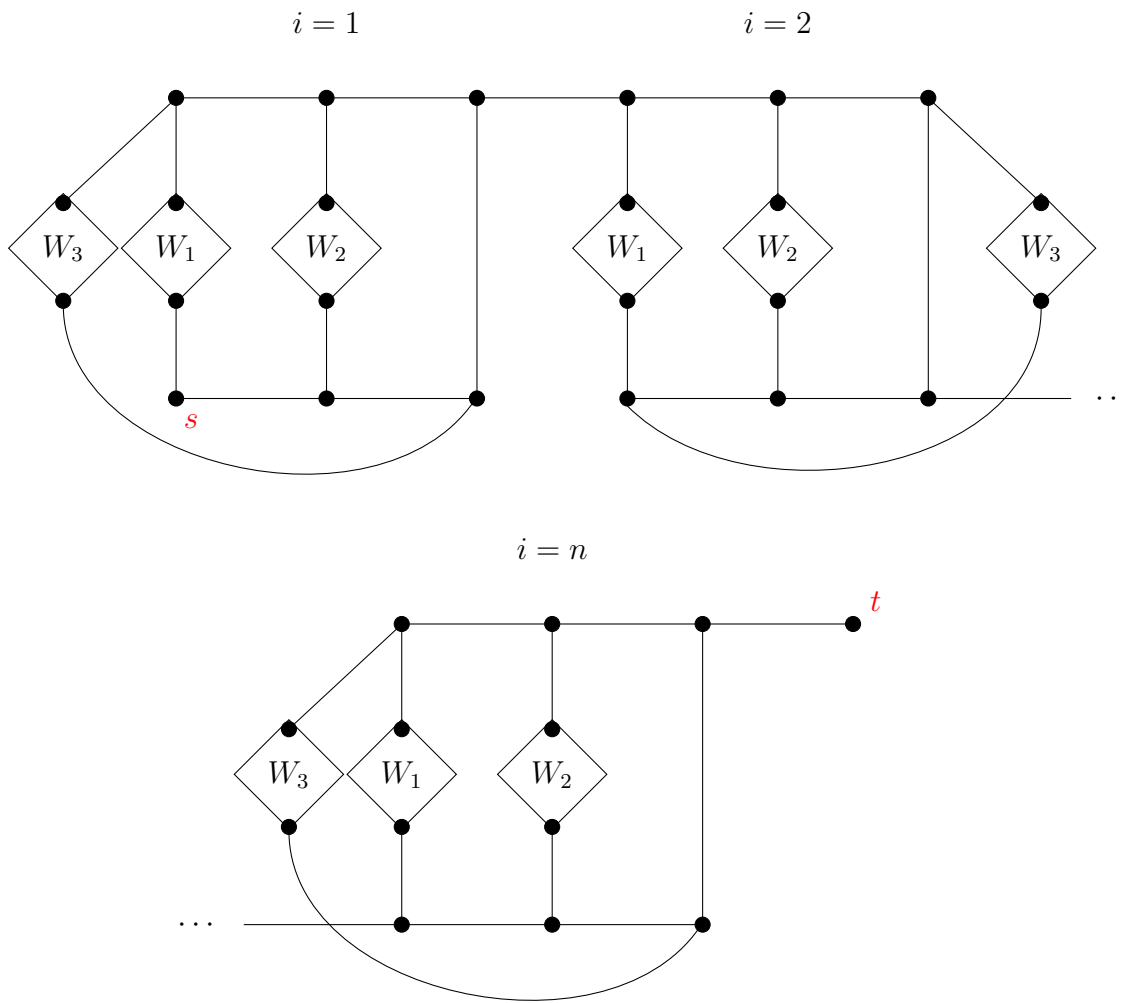
\begin{figure}[!ht]
\centering
\begin{tikzpicture}

\node at (2,5) {$i=1$};
\filldraw (0,0) circle (.1);     \node at (0.2,-0.3) {{\color{red}$s$}};
\filldraw (2,0) circle (.1);          \node at (2,-0.3) {};
\filldraw (4,0) circle (.1);     \node at (4.6,0) {};
\filldraw (0,4) circle (.1);          \node at (0,4.3) {};
\filldraw (2,4) circle (.1);     \node at (2,4.3) {};
\filldraw (4,4) circle (.1);          \node at (4,4.3) {};

\node[diamond,
  draw = black,
  minimum width = 1cm,
  minimum height = 1.3cm] (d) at (0,2) {$W_1$};
\node[diamond,
  draw = black,
  minimum width = 1cm,
  minimum height = 1.3cm] (d) at (2,2) {$W_2$};
\node[diamond,
  draw = black,
  minimum width = 1cm,
  minimum height = 1.3cm] (d) at (-1.5,2) {$W_3$};

\filldraw (0,1.3) circle (.1);         \node at (0.7,1.3) {};
\filldraw (2,1.3) circle (.1);    \node at (3,1.3) {};
\filldraw (0,2.6) circle (.1);    \node at (1,2.6) {};
\filldraw (2,2.6) circle (.1);         \node at (2.9,2.6) {};
\filldraw (-1.5,1.3) circle (.1);        \node at (-2.6,1.3) {};
\filldraw (-1.5,2.6) circle (.1);  \node at (-2.6,2.6) {};

\draw[-] (0.1,0)--(1.9,0);       \node at (1,0.3) {};
\draw[-] (0,0.1)--(0,1.2);       \node at (0.2,0.7) {};
\draw[-] (2.1,0)--(3.9,0);       \node at (3,0.3) {};
\draw[-] (2,0.1)--(2,1.2);       \node at (2.2,0.7) {};
\draw[-] (4,3.9)--(4,0.1);       \node at (4.3,2) {};
\draw[-] (0,2.7)--(0,3.9);       \node at (0.2,3.3) {};
\draw[-] (2,2.7)--(2,3.9);       \node at (2.2,3.3) {};
\draw[-] (-.05,3.95)--(-1.45,2.65);  \node at (-1,3.5) {};
\draw[-] (0.1,4)--(1.9,4);       \node at (1,4.3) {};
\draw[-] (3.9,4)--(2.1,4);       \node at (3,4.3) {};
\draw[-] (5.9,4)--(4.1,4);       \node at (5,4.3) {};
\draw[-] (-1.5,1.25) to[out=-90,in=-125] (3.95,-0.05); \node at (3.5,-1) {};

\node at (8,5) {$i=2$};
\filldraw (6,0) circle (.1);     \node at (5.7,-0.3) {};
\filldraw (8,0) circle (.1);          \node at (8,-0.3) {};
\filldraw (10,0) circle (.1);     \node at (10.6,0.3) {};
\filldraw (6,4) circle (.1);          \node at (6,4.3) {};
\filldraw (8,4) circle (.1);     \node at (8,4.3) {};
\filldraw (10,4) circle (.1);          \node at (10,4.3) {};

\node[diamond,
  draw = black,
  minimum width = 1cm,
  minimum height = 1.3cm] (d) at (6,2) {$W_1$};
\node[diamond,
  draw = black,
  minimum width = 1cm,
  minimum height = 1.3cm] (d) at (8,2) {$W_2$};
\node[diamond,
  draw = black,
  minimum width = 1cm,
  minimum height = 1.3cm] (d) at (11.5,2) {$W_3$};

\filldraw (6,1.3) circle (.1);         \node at (7,1.3) {};
\filldraw (8,1.3) circle (.1);    \node at (8.8,1.3) {};
\filldraw (6,2.6) circle (.1);    \node at (6.7,2.6) {};
\filldraw (8,2.6) circle (.1);         \node at (9,2.6) {};
\filldraw (11.5,1.3) circle (.1);        \node at (12.6,1.3) {};
\filldraw (11.5,2.6) circle (.1);  \node at (12.3,2.6) {};

\draw[-] (6.1,0)--(7.9,0);     \node at (7,0.3) {};
\draw[-] (6,1.2)--(6,0.1);     \node at (6.2,0.7) {};
\draw[-] (9.9,0)--(8.1,0);     \node at (9,0.3) {};
\draw[-] (8,1.2)--(8,0.1);     \node at (8.3,0.7) {};
\draw[-] (10,0.1)--(10,3.9);    \node at (10.3,2) {};
\draw[-] (6,3.9)--(6,2.7);     \node at (6.2,3.3) {};
\draw[-] (8,3.9)--(8,2.7);     \node at (8.2,3.3) {};
\draw[-] (11.45,2.65)--(10.05,3.95);\node at (11,3.5) {};
\draw[-] (6.1,4)--(7.9,4);     \node at (7,4.3) {};
\draw[-] (8.1,4)--(9.9,4);     \node at (9,4.3) {};
\draw[-] (11.9,0)--(10.1,0);    \node at (11.5,-0.3) {};
\draw[-] (6,-0.1) to[out=-45,in=-90] (11.5,1.2); \node at (6.5,-1) {};

\node at (12.5,0) {$\cdots$};

\node at (5,-2) {$i=n$};
\filldraw (3,-7) circle (.1);     \node at (3,-7.3) {};
\filldraw (5,-7) circle (.1);          \node at (5,-7.3) {};
\filldraw (7,-7) circle (.1);     \node at (7.6,-7) {};
\filldraw (3,-3) circle (.1);          \node at (3,-2.7) {};
\filldraw (5,-3) circle (.1);     \node at (5,-2.7) {};
\filldraw (7,-3) circle (.1);          \node at (7,-2.7) {};
\filldraw (9,-3) circle (.1);          \node at (9.3,-2.7) {};
\draw[-] (9,-3)--(7.1,-3);       \node at (8,-2.7) {};

\node[diamond,
  draw = black,
  minimum width = 1cm,
  minimum height = 1.3cm] (d) at (3,-5) {$W_1$};
\node[diamond,
  draw = black,
  minimum width = 1cm,
  minimum height = 1.3cm] (d) at (5,-5) {$W_2$};
\node[diamond,
  draw = black,
  minimum width = 1cm,
  minimum height = 1.3cm] (d) at (1.5,-5) {$W_3$};

\filldraw (3,-5.7) circle (.1);         \node at (3.7,-5.7) {};

\node at (9.3,-2.7) {{\color{red} $t$}};

\filldraw (5,-5.7) circle (.1);    \node at (6,-5.7) {};
\filldraw (3,-4.4) circle (.1);    \node at (4,-4.4) {};
\filldraw (5,-4.4) circle (.1);         \node at (5.9,-4.4) {};
\filldraw (1.5,-5.7) circle (.1);        \node at (0.4,-5.7) {};
\filldraw (1.5,-4.4) circle (.1);  \node at (0.4,-4.4) {};

\draw[-] (3.1,-7)--(4.9,-7);       \node at (4,-6.7) {};
\draw[-] (3,-6.9)--(3,-5.8);       \node at (3.2,-6.3) {};
\draw[-] (5.1,-7)--(6.9,-7);       \node at (6,-6.7) {};
\draw[-] (5,-6.9)--(5,-5.8);       \node at (5.2,-6.3) {};
\draw[-] (7,-3.1)--(7,-6.9);       \node at (7.3,-5) {};
\draw[-] (3,-4.3)--(3,-3.1);       \node at (3.2,-3.7) {};
\draw[-] (5,-4.3)--(5,-3.1);       \node at (5.2,-3.7) {};
\draw[-] (2.95,-3.05)--(1.55,-4.35);  \node at (2,-3.5) {};
\draw[-] (3.1,-3)--(4.9,-3);       \node at (4,-2.7) {};
\draw[-] (6.9,-3)--(5.1,-3);       \node at (6,-2.7) {};
\draw[-] (2.9,-7)--(0.9,-7);       \node at (1.5,-7.3) {};
\draw[-] (1.5,-5.75) to[out=-90,in=-125] (6.95,-7.05); \node at (6.5,-8) {};

\node at (0.3,-7) {$\cdots$};

\end{tikzpicture}
\caption{The welded tree circuit graph~\cite[Figure~1]{multi_electric}.}\label{fig:welded_circuit_graph_simple}
\end{figure}

\section{Other trivial methods for phase discrimination}\label{app:other_methods}

\textbf{Solving phase discrimination with quantum phase processing (QPP).} Perhaps the most relevant work to our QPD is the quantum phase processing (QPP) proposed by Wang et al.~\cite{phase_processing}. 
Their main Theorem~1 implies that for any trigonometric polynomial $F(\phi) = \sum_{j=-L}^{L}c_j e^{ij\phi}$, with $|F(\phi)|\leq 1$ for all $\phi\in\mathbb{R}$, there exists a quantum circuit $V(U,\vec{\alpha},\vec{\beta})$ with $1$ ancillary qubit and consisting of $L$ controlled-$U$, $L$ controlled-$U^\dagger$, $2L+2$ single-qubit $z$ rotations $R_z(\alpha_i)$ and $2L+1$ single-qubit $y$ rotations $R_y(\beta_i)$ (see Fig.~1 in \cite{phase_processing}), such that
\begin{equation}
	\bra{0}\otimes I \cdot V(U,\vec{\alpha},\vec{\beta})\cdot \ket{0}\ket{\psi} = F(\phi),
\end{equation}
where $U\ket{\psi} = e^{i\phi}\ket{\psi}$.
To solve the phase discrimination problem using QPP, we simply let
\begin{equation}\label{eq:F_phi}
    F(\phi) := \frac{T_{2L}\left({\cos(\frac{\phi}{2})}/{\cos(\frac{\lambda }{2})}\right)} {T_{2L}({1}/{\cos(\frac{\lambda }{2})})},
\end{equation}
which satisfies $F(0)=1$, and it is a trigonometric polynomial $\sum_{j=-L}^{L}c_j e^{ij\phi}$ with coefficients given by \cite[Eq.~(3)]{dolph_optimal_filter}
\begin{equation}
    c_j = \frac{1}{2L+1} [1+2 \sum_{m=1}^{L} \cos(\frac{2\pi mj}{2L+1}) \cdot F(\frac{2\pi m}{2L+1})].
\end{equation}
If $2L \geq \frac{\ln(2/\delta)}{\lambda/2}$, then $|F(\phi)|\leq\delta$ when $|\phi| \in[\lambda,\pi]$ (see Eq.~\eqref{eq:0_omega_delta} in Theorem~\ref{res:lem:phase_discrimination}), and thus the ancillary qubit indicates whether the eigenphase $\phi$ is zero or not with one-sided error.
Note, however, that controlled-$U^\dagger$ is needed and the angles $\vec{\theta} \in \mathbb{R}^{2L+2}, \vec{\phi}\in\mathbb{R}^{2L+1}$ are computed numerically and iteratively~\cite[Algorithm 3]{phase_processing}.

\textbf{Solving phase discrimination with eigenstate filtering.} One can also use the eigenstate filtering method~\cite{QLSS_20} to solve the phase discrimination problem.
However, this would require a non-trivial conversion (Appendix~\ref{app:filtering}) since eigenstate filtering and phase discrimination are two different problems, with the former dealing with block-encoded Hermitian matrices.
In addition, the eigenstate filtering method requires a numerical computation of the parameters in the gate sequence.

\subsection{Quantum phase discrimination using eigenstate filtering}\label{app:filtering}

We now show that the phase discrimination problem (Definition~\ref{def:phase}) can also be solved using the eigenstate filtering method of Lin and Tong~\cite{QLSS_20} with a non-trivial conversion.


A unitary matrix $U$ is said to be an $a$-qubit block encoding of a square matrix $A$, if $A = (\bra{0^a}\otimes I)U(\ket{0^a}\otimes I)$, or graphically:
\begin{equation}
	U=\begin{bmatrix}
		A & * \\
		* & *
	\end{bmatrix}.
\end{equation}

The eigenstate filtering method relies on the following result of the polynomial eigenvalue transformation via quantum signal processing~\cite[Theorem 1']{QLSS_20}, which follows from Ref.~\cite[Theorem 2]{qsvt}.
The result works for polynomials with definite parity and general square matrix $A$, but we will only use even polynomials and Hermitian matrix $H$ as stated below.

\begin{lemma}\label{lem:qsvt}
    Suppose $U_H$ is an $a$-qubit block encoding of a Hermitian matrix $H = \sum_{j=1}^{N} \lambda_j \ket{\psi_j}\bra{\psi_j}$, where $\lambda_j \in[-1,1]$.
    Consider an even polynomial $P: [-1,1] \to [-1,1]$ of degree $d$, then there exist parameters $\theta_k\in \mathbb{R}$, $k\in\{1,\dots,d\}$ that can be classically and approximately computed such that the quantum circuit shown in Fig.~\ref{fig:qsvt} is an $(a+1)$-qubit block encoding of $P(H) = \sum_{i=1}^{N} P(\lambda_j) \ket{\psi_j}\bra{\psi_j}$.
\end{lemma}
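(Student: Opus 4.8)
The plan is to derive Lemma~\ref{lem:qsvt} as a specialization of the quantum singular value transformation (QSVT) theorem of Gilyén et al.~\cite{qsvt} (packaged as \cite[Theorem~1']{QLSS_20}) to the case of a Hermitian matrix and an even target polynomial, so that the singular value transform collapses to an eigenvalue transform. First I would recall the general QSVT construction: writing $\Pi = \ket{0^a}\bra{0^a}\otimes I$ for the projector selecting the encoded block, the alternating product of projector-controlled phase shifts $e^{i\theta_k(2\Pi-I)}$ with applications of $U_H$ and $U_H^\dagger$ realizes, in its top-left block, a polynomial transform of the singular values of $H$. The single extra ancilla (so that $a+1$ qubits are projected in total) is the qubit that carries these phase shifts and on which the final postselection onto $\ket{0}$ is performed.

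The key steps, in order, are: (i) state the QSP-to-QSVT lifting, namely that for any polynomial $P$ of definite parity and degree $d$ with $|P(x)|\le 1$ on $[-1,1]$ there exist angles $\theta_1,\dots,\theta_d$ for which the alternating sequence block-encodes the singular value transform $P^{(SV)}(H)$; (ii) invoke the QSP existence theorem to produce these angles, which reduces to constructing a complementary polynomial $Q$ satisfying $P(x)^2+(1-x^2)Q(x)^2=1$ on $[-1,1]$ and then extracting the $\theta_k$ by a layer-stripping recursion — this is exactly the classically and approximately computed step flagged in the statement; (iii) specialize to Hermitian $H$. Writing the singular value decomposition of $H=\sum_j \lambda_j\ket{\psi_j}\bra{\psi_j}$ as $\sum_j |\lambda_j|\,\ket{\psi_j}\big(\mathrm{sgn}(\lambda_j)\bra{\psi_j}\big)$, the even parity of $P$ forces $P(|\lambda_j|)=P(\lambda_j)$, so the even singular value transform reduces to $P^{(SV)}(H)=\sum_j P(\lambda_j)\ket{\psi_j}\bra{\psi_j}=P(H)$.

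I would then verify that the circuit in Fig.~\ref{fig:qsvt} is precisely this alternating sequence, so that $(\bra{0^{a+1}}\otimes I)\,(\text{circuit})\,(\ket{0^{a+1}}\otimes I)=P(H)$, which is by definition the assertion that it is an $(a+1)$-qubit block encoding of $P(H)$.

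The hard part is step (ii): the \emph{existence} and classical computability of the phase angles. Establishing existence is the entire content of QSP and rests on the complementary-polynomial construction together with a unitary-completion argument, while the extraction of the $\theta_k$ is the numerically delicate step whose instability is precisely the drawback of eigenstate filtering that motivates our QPD. Since this machinery is already established in \cite[Theorem~2]{qsvt} and \cite[Theorem~1']{QLSS_20}, I would cite it rather than reprove it, leaving only the parity bookkeeping and the singular-value-to-eigenvalue reduction of step (iii) to be checked in detail.
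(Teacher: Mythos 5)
Your proposal is correct and matches the paper's treatment: the paper gives no independent proof of this lemma, stating it as a direct import of \cite[Theorem~1']{QLSS_20} (which in turn follows from \cite[Theorem~2]{qsvt}), with only a one-sentence remark that the general parity/square-matrix result is being specialized to even polynomials and Hermitian $H$. Your additional bookkeeping --- that evenness of $P$ forces $P(|\lambda_j|)=P(\lambda_j)$ so the singular value transform collapses to the eigenvalue transform $P(H)$ --- is exactly the specialization the paper relies on implicitly, and deferring the angle-existence machinery to the cited QSVT references is precisely what the paper does.
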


\begin{figure}[hbt]
	\centering
	\includegraphics[width=\textwidth]{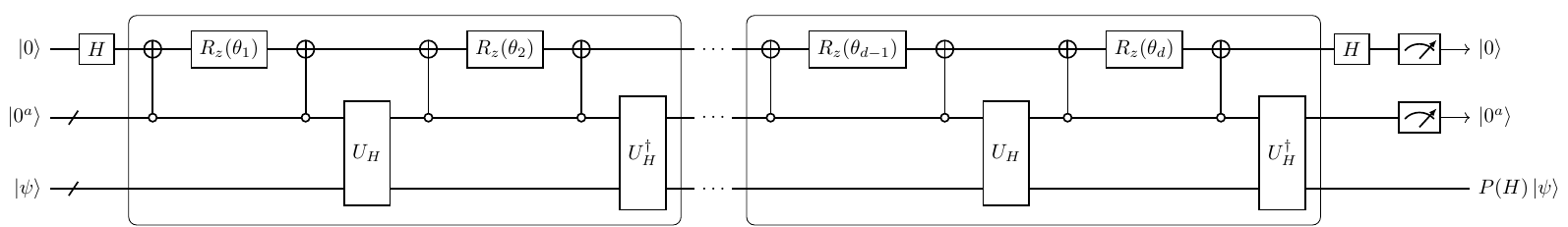}
	\caption{\label{fig:qsvt} Quantum circuit implementing an $(a+1)$-qubit block encoding of $P(H)$, where $P: [-1,1] \to [-1,1]$ is an even polynomial of degree $d$, and $U_H$ is an $a$-qubit block encoding of $H$ (see Fig.~\ref{fig:encode_U} for the quantum circuit of $U_H$ that we will use for QPD). }
\end{figure}

To filter out the non-zero eigenvalue of the Hermitian matrix $H$, Lin and Tong~\cite{QLSS_20} design the following even polynomial $R_l(x;\Delta)$ of degree $2l$, where $\Delta \in(0,1)$ is the gap between the zero and non-zero eigenvalues of $H$.
\begin{equation}\label{eq:R_l_Delta_def}
	R_l(x;\Delta) := \frac{T_l\left(-1+2\frac{x^2-\Delta^2}{1-\Delta^2} \right)}{T_l\left(-1+2\frac{-\Delta^2}{1-\Delta^2} \right)},
\end{equation}
where $T_l(x)$ is the Chebyshev polynomial of the first kind.
It is shown in Ref.~\cite[Lemma 2]{QLSS_20} that $R_l(x;\Delta)$ has the desired properties of filtering out the non-zero eigenvalues as stated in the following lemma.
Note that property 2 is an improvement on the result of ``$\left| R_l(x;\Delta) \right| \leq 2e^{-\sqrt{2}l\Delta}$ and $\Delta\in(0,1/\sqrt{12}]$'' shown in Ref.~\cite[Lemma 2]{QLSS_20}, for which we provide below an alternative proof compared to Ref.~\cite[Lemma 13]{QLSS_20}.
\begin{lemma}\label{lem:R_l_properties}
    The even polynomial $R_l(x;\Delta)$ of degree $2l$ has the following two properties:
    \begin{enumerate}
        \item $\left| R_l(x;\Delta) \right| \leq 1$ for all $\left|x\right| \leq 1$, and $R_l(0;\Delta) = 1$.
        \item $\left| R_l(x;\Delta) \right| \leq 2e^{-2l\arcsin(\Delta)}$ when $\left|x\right| \in [\Delta,1]$.
    \end{enumerate}
\end{lemma}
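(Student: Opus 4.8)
The plan is to analyze $R_l(x;\Delta)$ through the substitution $y(x) := -1 + 2\frac{x^2-\Delta^2}{1-\Delta^2}$, which maps $x$ into the argument of the outer Chebyshev polynomial. First I would record the key values $y(\Delta) = -1$, $y(1) = 1$, and $y(0) = -\frac{1+\Delta^2}{1-\Delta^2} < -1$, and note that $y$ depends on $x$ only through $x^2$ and is strictly increasing in $x^2$. Consequently, as $|x|$ runs over $[0,1]$ the argument $y$ runs monotonically over $[y(0),1]$, with $|x|\in[\Delta,1]$ corresponding exactly to $y\in[-1,1]$ and $|x|\in[0,\Delta]$ corresponding to $y\in[y(0),-1]$. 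Since the denominator of $R_l$ is precisely $T_l(y(0))$, everything reduces to comparing $|T_l(y(x))|$ with $|T_l(y(0))|$.

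For property 1, the equality $R_l(0;\Delta)=1$ is immediate since numerator and denominator coincide at $x=0$. For the bound $|R_l|\leq 1$ I would use the two branch-formulas for $T_l$: on $[-1,1]$ one has $|T_l|\leq 1$, while on $(-\infty,-1]$ one has $|T_l(z)|=\cosh(l\,\mathrm{arccosh}(-z))$, which is decreasing in $z$. Hence over the whole range $[y(0),1]$ the quantity $|T_l|$ attains its maximum at the left endpoint $y(0)$ (where it is $\geq 1$), so $|T_l(y(x))|\leq |T_l(y(0))|$ for every $|x|\leq 1$, giving $|R_l(x;\Delta)|\leq 1$.

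For property 2, restricting to $|x|\in[\Delta,1]$ forces $y\in[-1,1]$ and hence $|T_l(y)|\leq 1$, so $|R_l(x;\Delta)|\leq 1/|T_l(y(0))|$. The crux is then to evaluate the denominator. Writing $|T_l(y(0))|=\cosh(l\,\mathrm{arccosh}(\frac{1+\Delta^2}{1-\Delta^2}))$ and using $\mathrm{arccosh}(z)=\ln(z+\sqrt{z^2-1})$, I would simplify $z+\sqrt{z^2-1}=\frac{(1+\Delta)^2}{1-\Delta^2}=\frac{1+\Delta}{1-\Delta}$, so that the argument collapses to $\mathrm{arccosh}(\frac{1+\Delta^2}{1-\Delta^2})=\ln\frac{1+\Delta}{1-\Delta}=2\,\mathrm{arctanh}(\Delta)$. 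Applying $\cosh(t)\geq \frac{1}{2}e^t$ then yields $|R_l(x;\Delta)|\leq 2e^{-2l\,\mathrm{arctanh}(\Delta)}$.

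The last step, which is also where the improvement over the exponent in \cite{QLSS_20} comes from, is to replace $\mathrm{arctanh}(\Delta)$ by the smaller quantity $\arcsin(\Delta)$. I would verify $\mathrm{arctanh}(\Delta)\geq \arcsin(\Delta)$ on $[0,1)$ by noting that both vanish at $0$ and that the derivative difference $\frac{1}{1-\Delta^2}-\frac{1}{\sqrt{1-\Delta^2}}=\frac{1-\sqrt{1-\Delta^2}}{1-\Delta^2}\geq 0$. This monotonicity comparison gives $e^{-2l\,\mathrm{arctanh}(\Delta)}\leq e^{-2l\arcsin(\Delta)}$ and completes property 2. I expect no serious obstacle; the only care needed is tracking which branch-formula for $T_l$ applies on each subinterval of $y$, and the genuinely useful observation is the clean simplification of the arccosh, which turns an unwieldy spectral-gap expression into $2\,\mathrm{arctanh}(\Delta)$ and makes the comparison with $\arcsin$ transparent.
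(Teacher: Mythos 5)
Your proof is correct, and it reaches the paper's bound by a somewhat different technical route. The paper first applies the parity identity $T_l(-z)=(-1)^lT_l(z)$ and the nesting property $T_l(T_2(x))=T_{2l}(x)$ to rewrite
\begin{equation*}
R_l(x;\Delta)=\frac{T_{2l}\bigl(\sqrt{\tfrac{1-x^2}{1-\Delta^2}}\bigr)}{T_{2l}\bigl(\tfrac{1}{\sqrt{1-\Delta^2}}\bigr)},
\end{equation*}
so that all arguments live on $[0,1]\cup[1,\infty)$, and then lower-bounds the denominator by writing $\tfrac{1}{\sqrt{1-\Delta^2}}=\tfrac{1}{\cos(\arcsin\Delta)}$ and invoking the inequality $\mathrm{arccosh}(1/\cos\theta)\geq\theta$ already established earlier in the paper (below Eq.~\eqref{eq:L_lower_relax}). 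You instead stay on the original argument $y(x)=-1+2\tfrac{x^2-\Delta^2}{1-\Delta^2}$, handle the negative branch $(-\infty,-1]$ of $T_l$ directly, and compute the denominator exactly: $\mathrm{arccosh}\bigl(\tfrac{1+\Delta^2}{1-\Delta^2}\bigr)=\ln\tfrac{1+\Delta}{1-\Delta}=2\,\mathrm{arctanh}(\Delta)$, followed by the derivative comparison $\mathrm{arctanh}(\Delta)\geq\arcsin(\Delta)$. The two arguments are mathematically equivalent at the core --- your $\mathrm{arctanh}\geq\arcsin$ inequality is exactly the paper's $\mathrm{arccosh}(1/\cos\theta)\geq\theta$ evaluated at $\theta=\arcsin\Delta$ --- but your version is more self-contained (it does not reuse a lemma proved elsewhere, and avoids the nesting trick), and it has the pleasant side effect of exhibiting the sharper intermediate bound $|R_l(x;\Delta)|\leq 2e^{-2l\,\mathrm{arctanh}(\Delta)}$, which is strictly stronger than the stated estimate. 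The paper's nesting reformulation, in turn, buys a cleaner monotonicity argument for property 1, since on the positive branch $T_{2l}$ is simply increasing for arguments $\geq 1$, whereas you must track that $|T_l(z)|=\cosh(l\,\mathrm{arccosh}(-z))$ is decreasing on $(-\infty,-1]$.
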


\begin{proof}
Using the parity of Chebyshev polynomials that $T_l(-x)=(-1)^l T_l(x)$, and the nesting property that $T_m(T_n(x))=T_{mn}(x)$, and the explicit expression of $T_2(x)=2x^2-1$, the even polynomial $R_l(x;\Delta)$ of degree $2l$ can be transformed as follows:
\begin{equation}\label{eq:R_l_x_transform}
	R_l(x;\Delta) 
	= \frac{T_l\left(1-2\frac{x^2-\Delta^2}{1-\Delta^2} \right)}{T_l\left(1-2\frac{-\Delta^2}{1-\Delta^2} \right)}
	= \frac{T_l\left(2\frac{1-x^2}{1-\Delta^2}-1 \right)}{T_l\left(2\frac{1}{1-\Delta^2}-1 \right)}
	= \frac{T_{2l}\left(\sqrt{\frac{1-x^2}{1-\Delta^2}}\right)}{T_{2l}\left(\frac{1}{\sqrt{1-\Delta^2}}\right)}.
\end{equation}

When $\left|x\right| \in [0,\Delta]$, it is easy to see that $\sqrt{\frac{1-x^2}{1-\Delta^2}} \in \left[1, \frac{1}{\sqrt{1-\Delta^2}}\right]$.
Since $T_{2l}(1)=1$ and $T_{2l}(x)$ is strictly monotonically increasing for $x \geq 1$, the numerator $T_{2l}\left(\sqrt{\frac{1-x^2}{1-\Delta^2}}\right) \in \left[1, T_{2l}\left(\frac{1}{\sqrt{1-\Delta^2}}\right)\right]$ and the denominator is greater than $1$.

When $\left|x\right| \in [\Delta,1]$, using the fact that $\left| T_l(x) \right| \leq 1$ for all $\left| x \right| \leq 1$, we can see that the absolute value of the numerator $T_{2l}\left(\sqrt{\frac{1-x^2}{1-\Delta^2}}\right)$ does not exceed $1$.
Since $R_l(0;\Delta) = 1$ trivially holds, we have shown property 1.
To show property 2, it suffices to prove that the denominator is greater than $\frac{1}{2} e^{2l\arcsin(\Delta)}$ as shown below.

From the argument below Eq.~\eqref{eq:L_lower_relax}, it can be seen that ${\rm arccosh}(1/\cos(\theta)) \geq \theta$ for $\theta\in[0,\pi/2)$.
Using the fact that $\cosh(x) = (e^x+e^{-x})/2 > e^x/2$ for $x\in \mathbb{R}$, and the fact that $\cosh(x)$ is strictly monotonically increasing for $x \geq 0$, and the definition of $T_l(x) = \cosh(l\cdot{\rm arccosh}(x))$ for $|x| \geq 1$, it is easy to see that $T_l(1/\cos(\theta)) > \frac{1}{2}e^{l\theta}$ for $\theta\in[0,\pi/2)$.
Therefore, the denominator $T_{2l}\left(\frac{1}{\sqrt{1-\Delta^2}}\right) > \frac{1}{2} e^{2l\arcsin(\Delta)}$ for $\Delta \in [0,1)$.
\end{proof}

We propose the following lemma that gives a way to map the eigenphase $\phi$ of $U\ket{\psi} = e^{i\phi} \ket{\psi}$ to the eigenvalue $\sin(\phi)$ of a block encoded Hermitian $H$ such that $H\ket{\psi} = \sin(\phi) \ket{\psi}$.

\begin{lemma}\label{lem:encode_H}
Suppose the unitary matrix $U$ in the phase discrimination problem has the following spectral decomposition:
\begin{equation}
    U =\sum_{j=1}^{N} e^{i\phi_j}\ket{\psi_j}\bra{\psi_j},
\end{equation}
then the quantum circuit $U_H$ shown in Fig.~\ref{fig:encode_U} is a $1$-qubit block encoding of the Hermitian matrix:
\begin{equation}\label{eq:encode_H}
    H :=\sum_{i=j}^{N} \sin(\phi_j) \ket{\psi_j}\bra{\psi_j}.
\end{equation}
\end{lemma}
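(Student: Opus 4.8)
The plan is to reduce the construction of $U_H$ to a textbook linear-combination-of-unitaries (LCU) on a single ancilla qubit, after rewriting $H$ as a balanced combination of $U$ and $U^\dagger$. Since $U$ and $U^\dagger$ share the eigenbasis $\{\ket{\psi_j}\}$ and $\sin(\phi_j)=\frac{e^{i\phi_j}-e^{-i\phi_j}}{2i}$, the target matrix is simply
\begin{equation}
H=\sum_{j=1}^{N}\sin(\phi_j)\ket{\psi_j}\bra{\psi_j}=\frac{U-U^\dagger}{2i},
\end{equation}
which is manifestly Hermitian. The two coefficients $\frac{1}{2i}$ and $-\frac{1}{2i}$ have absolute values summing to exactly $1$, so one ancilla qubit is enough and no subnormalization factor is introduced, giving a genuine $1$-qubit block encoding.

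The construction I would use absorbs the phases $\frac{1}{2i}=\tfrac12 e^{-i\pi/2}$ and $-\frac{1}{2i}=\tfrac12 e^{i\pi/2}$ into the select operation. Concretely, set $\mathrm{SEL}:=\ket{0}\bra{0}\otimes(-iU)+\ket{1}\bra{1}\otimes(iU^\dagger)$, and take both the prepare and the unprepare operation to be a Hadamard gate $\mathrm{Had}$ on the ancilla. The circuit of Fig.~\ref{fig:encode_U} is then $U_H=(\mathrm{Had}\otimes I)\,\mathrm{SEL}\,(\mathrm{Had}\otimes I)$, using one controlled-$U$ and one controlled-$U^\dagger$ together with single-qubit gates on the ancilla, consistent with the ``controlled-$U$, controlled-$U^\dagger$'' oracle form attributed to eigenstate filtering in Table~\ref{tab:summary}.

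The verification is a direct computation on each eigenvector, carried out by linearity. Feeding $\ket{0}\ket{\psi_j}$ through the first Hadamard gives $\tfrac{1}{\sqrt2}(\ket{0}+\ket{1})\ket{\psi_j}$; applying $\mathrm{SEL}$ yields $\tfrac{1}{\sqrt2}(-ie^{i\phi_j}\ket{0}+ie^{-i\phi_j}\ket{1})\ket{\psi_j}$; and the second Hadamard followed by $\bra{0}\otimes I$ extracts the amplitude $\tfrac{-i}{2}(e^{i\phi_j}-e^{-i\phi_j})=\sin(\phi_j)$. Summing over $j$ gives $(\bra{0}\otimes I)\,U_H\,(\ket{0}\otimes I)=H$, which is exactly the block-encoding condition. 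An equivalent presentation is the Hadamard test for the imaginary part of $U$, which one may prefer if the figure places the phase as a single-qubit $R_y(\tfrac{\pi}{2})$ (up to global phase) after the select operator rather than inside it; both layouts produce the same $\ket{0}\bra{0}$ block.

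Since this is a standard LCU, there is no deep obstacle; the only thing demanding care is the phase bookkeeping that makes the extracted block the \emph{imaginary} part $\sin(\phi_j)$ (hence real and Hermitian, with eigenvalues in $[-1,1]$ as required by Lemma~\ref{lem:qsvt}) rather than the real part $\cos(\phi_j)$, and confirming the block encoding is exact with no rescaling. I would also flag, as a remark rather than part of this lemma, that the map $\phi\mapsto\sin\phi$ is not injective on $(-\pi,\pi]$ — in particular $\phi=\pi$ also gives eigenvalue $0$ — so the downstream conversion to phase discrimination via the filtering polynomial $R_l(\,\cdot\,;\Delta)$ must handle eigenphases near $\pm\pi$ separately; this does not affect the correctness of the block-encoding statement proved here.
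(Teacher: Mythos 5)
Your proof is correct and takes essentially the same route as the paper's: both realize $H = (U-U^\dagger)/(2i)$ via a one-ancilla LCU with Hadamards sandwiching a select operation built from controlled-$U$, controlled-$U^\dagger$ and $\pm i$ phases (the paper places the phases in an explicit $R_z$ gate and verifies by multiplying the five block matrices, while you absorb them into the select and check the $\ket{0}\bra{0}$ block eigenvector-by-eigenvector --- the same computation, since the two circuits differ only by an $X$-conjugation on the ancilla that the Hadamard sandwich renders irrelevant to the top-left block). Your closing remark that $\sin$ is not injective on $(-\pi,\pi]$ is a genuinely good catch: it does not affect this lemma, but it does expose a gap in the paper's subsequent theorem in the same appendix, whose proof asserts that $|\phi|\in[\lambda,\pi]$ implies $|\sin(\phi)|\in[\sin(\lambda),1]$, which fails for eigenphases in $(\pi-\lambda,\pi]$.
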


\begin{figure}[hbt]
	\centering
	\includegraphics[width=0.6\textwidth]{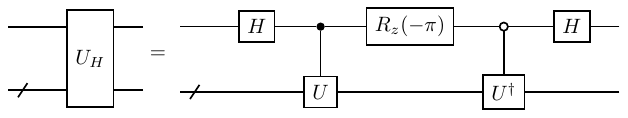}
	\caption{\label{fig:encode_U} A quantum circuit $U_H$ that implements a $1$-qubit block encoding of the Hermitian matrix $H$ (Eq.~\eqref{eq:encode_H}), where $H$ with a square represents a Hadamard gate and $R_z(\theta) ={\rm diag}(e^{-i\theta/2},e^{i\theta/2})$. }
\end{figure}

\begin{proof}
The matrix expression of the quantum circuit $U_H$ shown in Fig.~\ref{fig:encode_U} can be calculated as follows:
\begin{align}
	U_H &=\frac{1}{2}\begin{bmatrix}
		I & I \\
		I & -I
	\end{bmatrix}
	\cdot\begin{bmatrix}
		U^\dagger & 0 \\
		0 & I
	\end{bmatrix}
	\cdot\begin{bmatrix}
		iI & 0 \\
		0 & -iI
	\end{bmatrix}
	\cdot\begin{bmatrix}
		I & 0 \\
		0 & U
	\end{bmatrix}
	\cdot\begin{bmatrix}
		I & I \\
		I & -I
	\end{bmatrix}\\    
    &=\frac{1}{2}\begin{bmatrix}
		i(U^\dagger-U) & i(U+U^\dagger)\\
		i(U+U^\dagger) & i(U^\dagger-U)
	\end{bmatrix}.
\end{align}
By the Euler formula $\sin(\phi) = \frac{i}{2}(e^{-i\phi} -e^{i\phi})$, we have $\frac{i}{2}(U^\dagger-U) =\sum_{i=j}^{N} \sin(\phi_j) \ket{\psi_j}\bra{\psi_j}$, and thus $U_H$ is a $1$-qubit block encoding of the Hermitian matrix $H$ (Eq.~\eqref{eq:encode_H}).
\end{proof}

We can now show the following theorem that solves the phase discrimination problem using the eigenstate filtering method.

\begin{theorem}
    Suppose $U\ket{\psi} =e^{i\phi} \ket{\psi}$, where $\phi \in (-\pi, \pi]$.
    Replace $U_H$ in the quantum circuit shown in Fig.~\ref{fig:qsvt} with the quantum circuit shown in Fig.~\ref{fig:encode_U}.
    Suppose even number $d \geq \frac{\ln(2/\delta)}{\lambda}$, then there exist parameters $\theta_k\in \mathbb{R}$, $k\in\{1,\dots,d\}$ that can be classically and approximately computed such that the probability $p$ of obtaining $\ket{0^2}$ when measuring the first two ancillary qubits satisfies: $p=1$ when $\phi =0$; and $p\leq \delta^2$ when $|\phi| \in [\lambda,\pi]$.
\end{theorem}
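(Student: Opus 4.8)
The plan is to reduce the phase discrimination problem to the eigenstate filtering problem handled by the three preceding lemmas, and then read off the acceptance probability of the resulting QSVT circuit on the eigenstate $\ket{\psi}$. First I would invoke Lemma~\ref{lem:encode_H}: the circuit of Fig.~\ref{fig:encode_U} is a $1$-qubit block encoding of $H=\sum_j \sin(\phi_j)\proj{\psi_j}$, so the input eigenstate satisfies $H\ket{\psi}=\sin(\phi)\ket{\psi}$. Feeding this $U_H$ into the QSVT circuit of Fig.~\ref{fig:qsvt} (Lemma~\ref{lem:qsvt}) with the even filtering polynomial $P=R_{d/2}(\cdot\,;\Delta)$ of degree $d$ from Lemma~\ref{lem:R_l_properties}, I obtain a $2$-qubit block encoding of $R_{d/2}(H;\Delta)$. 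Since $\ket{\psi}$ is an eigenstate, applying the circuit to $\ket{0^2}\ket{\psi}$ leaves amplitude $R_{d/2}(\sin\phi;\Delta)$ on $\ket{0^2}\ket{\psi}$, so measuring the two ancillas returns $\ket{0^2}$ with probability exactly
\begin{equation*}
p=\bigl|R_{d/2}(\sin\phi;\Delta)\bigr|^2 .
\end{equation*}

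With the encoding fixed, the two cases follow from the two properties of $R_l$. I would set $\Delta=\sin(\lambda)$ and $l=d/2$ (an integer since $d$ is even). When $\phi=0$ we have $\sin\phi=0$, so property~1 of Lemma~\ref{lem:R_l_properties} gives $R_{d/2}(0;\Delta)=1$ and hence $p=1$. When $|\phi|\in[\lambda,\pi]$ is such that $|\sin\phi|\ge\Delta$, property~2 gives $|R_{d/2}(\sin\phi;\Delta)|\le 2e^{-d\arcsin(\Delta)}=2e^{-d\lambda}$, using $\arcsin(\sin\lambda)=\lambda$; this is at most $\delta$ precisely when $d\ge \frac{\ln(2/\delta)}{\lambda}$, matching the hypothesis, so $p\le\delta^2$. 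The parameters $\theta_k$ are exactly the phase factors furnished (numerically) by Lemma~\ref{lem:qsvt} for the polynomial $R_{d/2}(\cdot\,;\Delta)$.

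The step I expect to be delicate is precisely the clause ``$|\sin\phi|\ge\Delta$'' invoked above: the $\sin$-encoding is two-to-one on $(0,\pi]$ and sends $\phi=\pi$ back to $0$, so $|\sin\phi|\ge\sin\lambda$ holds only for $|\phi|\in[\lambda,\pi-\lambda]$ and breaks down as $\phi\to\pi$ (indeed $p\to1$ at $\phi=\pi$). To genuinely cover the full interval $[\lambda,\pi]$ claimed in the statement I would either restrict the discrimination range away from $\pi$, or replace Fig.~\ref{fig:encode_U} by a block encoding whose eigenvalue is a \emph{monotone} function of $|\phi|$ on $[0,\pi]$, such as $\sin^2(\phi/2)=\tfrac{1-\cos\phi}{2}$ (built from a Hadamard-test encoding of $\cos\phi$); this puts the target eigenvalue at $0$ with a gap $\sin^2(\lambda/2)$ holding across all of $[\lambda,\pi]$. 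The cost is that property~2 then contributes $\arcsin\!\big(\sin^2(\lambda/2)\big)=\Theta(\lambda^2)$ in the exponent rather than $\lambda$, degrading the degree bound to $d=O\!\big(\tfrac{1}{\lambda^2}\ln\tfrac1\delta\big)$. Reconciling the clean $1/\lambda$ scaling with coverage of the whole range $[\lambda,\pi]$ is therefore the crux, and is ultimately why this filtering route is less favorable than the $\cos(\phi/2)$-based QPD of Theorem~\ref{res:lem:phase_discrimination}, whose eigenvalue map is both monotone on $[0,\pi]$ and linearly gapped near $\phi=0$.
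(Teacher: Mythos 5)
Your construction is exactly the paper's: the same three lemmas (the $\sin\phi$ block encoding of Lemma~\ref{lem:encode_H}, QSVT via Lemma~\ref{lem:qsvt}, and the filtering polynomial of Lemma~\ref{lem:R_l_properties}), the same choices $\Delta=\sin(\lambda)$ and $l=d/2$, and the same acceptance probability $p=\left|R_{d/2}(\sin\phi;\Delta)\right|^2$ on the eigenstate.

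However, the ``delicate step'' you flag is not merely delicate --- it is a genuine error in the paper's own proof. The paper asserts, without comment, that $|\phi|\in[\lambda,\pi]$ implies $|\sin(\phi)|\in[\Delta,1]$, and this is false for $|\phi|\in(\pi-\lambda,\pi]$: the map $\phi\mapsto|\sin\phi|$ is not monotone on $[\lambda,\pi]$, and at $\phi=\pi$ the encoded eigenvalue is $\sin\pi=0$, giving $R_{d/2}(0;\Delta)=1$ and hence $p=1$, in direct contradiction with the claimed $p\leq\delta^2$. So the theorem as stated fails on $(\pi-\lambda,\pi]$, and your diagnosis --- that with this encoding the conclusion can only hold on $[\lambda,\pi-\lambda]$ --- is correct, whereas the paper's proof silently glosses over it. Your two repairs are also sound: restricting the discrimination range is the cheap fix, while replacing Fig.~\ref{fig:encode_U} by a block encoding of $\tfrac{1}{2}\left(I-\tfrac{U+U^\dagger}{2}\right)$, whose eigenvalues $\sin^2(\phi_j/2)$ are monotone in $|\phi_j|$, restores the full range $[\lambda,\pi]$ but with gap $\sin^2(\lambda/2)=\Theta(\lambda^2)$ and hence degree $O\!\left(\lambda^{-2}\log(1/\delta)\right)$ by property 2 of Lemma~\ref{lem:R_l_properties}. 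Your closing observation is the right moral: it is precisely the monotone, linearly gapped eigenvalue map $\phi\mapsto\cos(\phi/2)$ underlying QPD (Theorem~\ref{res:lem:phase_discrimination}) that avoids this trade-off, which is a substantive reason --- beyond the numerical-versus-analytical angle issue emphasized in the paper --- why the filtering route is inferior to the paper's main construction.
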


\begin{proof}
Write the even number $d$ as $d=2l$, then $l \geq \frac{\ln(2/\delta)}{2\lambda}$.
Let $\Delta := \sin(\lambda)$.
Suppose $|\phi| \in [\lambda, \pi]$, then $\left| \sin(\phi) \right| \in [\Delta,1]$.
From property 2 of $R_l(x;\Delta)$ shown in Lemma~\ref{lem:R_l_properties}, we have $\left|R_l(\sin(\phi);\Delta)\right| \leq 2 e^{-2l\lambda} \leq \delta$.
Since $R_l(0;\Delta) =1$ by property 1 in Lemma~\ref{lem:R_l_properties}, we have $\left|R_l(\sin(\phi);\Delta)\right| = 1$ when $\phi = 0$.

Let $P(x) := R_l(x;\Delta)$.
From property 1 of $R_l(x;\Delta)$ shown in Lemma~\ref{lem:R_l_properties}, we know $P(x) : [-1,1] \to [-1,1]$ and it is an even polynomial of degree $d=2l$.

From Lemma~\ref{lem:encode_H}, we know the quantum circuit $U_H$ shown in Fig.~\ref{fig:encode_U} is a $1$-qubit block encoding of $H =\sum_{i=j}^{N} \sin(\phi_j) \ket{\psi_j}\bra{\psi_j}$.

By Lemma~\ref{lem:qsvt}, there exist parameters $\theta_k\in \mathbb{R}$, $k\in\{1,\dots,d\}$ that can be classically and approximately computed such that the quantum circuit shown in Fig.~\ref{fig:qsvt} is a $2$-qubit block encoding of $P(H) = R_l(H; \Delta) = \sum_{j=1}^{N} R_l(\sin(\phi_j);\Delta) \ket{\psi_j}\bra{\psi_j}$.
Thus, the probability $p$ of obtaining $\ket{0^2}$ when measuring the first two ancillary qubits of the quantum circuit shown in Fig.~\ref{fig:qsvt} is
\begin{equation}
    p = \| P(H) \ket{\psi} \|^2 = |R_l(\sin(\phi);\Delta)|^2,
\end{equation}
which satisfies: $p=1$ when $\phi =0$; and $p\leq \delta^2$ when $|\phi| \in [\lambda,\pi]$.
\end{proof}

\section{Hitting time of the $n$-cycle graph}\label{app:n_cycle}

Suppose the vertex set of the $n$-cycle graph is $V=\{0,1,\dots,n-1\}$.
Then its adjacency matrix $A$ can be expressed as:
\begin{equation}
    A = \sum_{j=0}^{n-1} \ket{j} \big(\bra{j-1}+\bra{j+1} \big),
\end{equation}
where the addition operation is modulo $n$, i.e. $-1 \equiv n-1 ({\rm mod}\ n)$ and $n \equiv 0 ({\rm mod}\ n)$.
Let $\ket{v_k} := \sum_{j=0}^{n-1} \omega^{jk} \ket{j}$, where $\omega := \exp(\frac{2\pi i}{n})$.
Then it can be verified that $\ket{v_k}$ is the eigenvalue of $A$ whose eigenvalue is $2 \cos(\frac{2\pi k}{n})$.
Since $\braket{v_k | v_{k'}} =0$ for different $k,k' \in \{0,1,\dots,n-1\}$, the set of eigenvalues of $A$ is $\{2 \cos(\frac{2\pi k}{n}) : k\in\{0,\dots,n-1\} \}$.

Since the degree of each vertex in the $n$-cycle graph is equal to $2$, assuming that the transition probability from every vertex to its two adjacent vertices are both $\frac{1}{2}$, then the Markov chain on the $n$-cycle graph is:
\begin{equation}
    P = \frac{1}{2} A.
\end{equation}
Therefore, the spectral gap $\delta$ of $P$, i.e. the minimum distance between the $1$ eigenvalue and the remaining eigenvalues, is as follows:
\begin{equation}
	\delta = 1-\cos(\frac{2\pi}{n}) = 2\sin^2(\frac{\pi}{n}) = \Theta(\frac{1}{n^2}).
\end{equation}

Let $M \subseteq V$ be the set of marked vertices.
Suppose there is only one marked vertex on the $n$-cycle, then without loss of generality, we can assume $M = \{0\}$.
Let $\ket{\pi}$ be the stationary distribution of $P$, which is $\ket{\pi} = \frac{1}{n} \sum_{j=0}^{n-1} \ket{j}$ on the $n$-cycle graph.
Let $\ket{\pi_M}$ be the vector obtained from $\ket{\pi}$ by deleting entries indexed from $M$.
Here, $\ket{\pi_M} = \frac{1}{n} \sum_{j=1}^{n-1} \ket{j}$.
Let $P_M$ be the matrix obtained from $P$ by deleting its rows and columns indexed from $M$, which is $P_M = \frac{1}{2} \sum_{j=1}^{n-2} \ket{j}\bra{j+1} + \ket{j+1}\bra{j}$ on the $n$-cycle graph.
Let $\ket{\vec{1}} := \sum_{j=1}^{n-1} \ket{j}$ be the vector whose all coordinates equal to one.
The hitting time of $P$ can be calculated as follows~\cite[Eq. (1)]{Szegedy_03}.
\begin{equation}
	{\rm HT} = \bra{\pi_M} (I-P_M)^{-1} \ket{\vec{1}}.
\end{equation}

By the inverse formula of a tridiagonal Toeplitz matrix~\cite{Toeplitz}, we have:
\begin{equation}
	(I-P_M)^{-1} =
    \frac{2}{n} \sum_{i=1}^{n-1} \sum_{j=i}^{n-1} i(n-j)  \ket{i}\bra{j} 
    + \frac{2}{n} \sum_{i=2}^{n-1} \sum_{j=1}^{i} (n-i)j \ket{i}\bra{j}.
\end{equation}
Thus the hitting time of the $n$-cycle graph with one marked vertex can be calculated as follows:
\begin{align}
	{\rm HT} &= \frac{1}{n} \left[ \frac{2}{n} \sum_{i=1}^{n-1} \sum_{j=i}^{n-1} i(n-j)   
    + \frac{2}{n} \sum_{i=2}^{n-1} \sum_{j=1}^{i} (n-i)j \right] \\
    &= \frac{1}{n^2} \left[ \sum_{i=1}^{n-1}i (n+1-i)(n+i)   
    + \sum_{i=2}^{n-1} (n-i) (i+1)i \right] \\
    &= \frac{1}{n^2} \left[ - 2(n-1) + \sum_{i=1}^{n-1} n(n+2)i + ni^2 -2i^3 \right] \\
    &= \frac{1}{n^2} \left[ - 2(n-1) + n(n+2)\frac{n(n-1)}{2} + n\frac{(n-1)n(2n-1)}{6} -2\frac{(n-1)^2n^2}{4} \right] \\
    &= \frac{n^2}{3} + n - \frac{4}{3} - \frac{2}{n} + \frac{2}{n^2} \\
    &= \Theta(n^2).
\end{align}

\section{Proof of Lemma~\ref{lem:recur_known} }\label{app:1}

Let $\bar{\varphi}_0 := \arcsin(\sqrt{p_M}) \in(0,\pi/2]$, and $\bar{\varphi}_i := 3^i \bar{\varphi}_0$.
Since $\cup_{k\geq 0} [\pi/6,\pi/2]/3^k =(0,\pi/2]$, there exists an integer $t\geq 0$ such that $\bar{\varphi}_t = 3^t\bar{\varphi}_0 \in [\pi/6,\pi/2]$.
Note that Lemma~\ref{lem:recur_known} holds trivially when $t=0$, in which case we simply measure the initial state $\ket{\pi}$ and $\| \Pi_M \ket{\pi} \| =\sqrt{p_M} \geq \sin(\frac{\pi}{6}) =\frac{1}{2}$.
We can therefore assume $t\geq 1$, or equivalently $\bar{\varphi}_0 \leq \pi/6$.

Denote by $\ket{\phi_i} := A_i\ket{\pi}\ket{0^i}$ the middle state of applying $A_i$ to the initial state $\ket{\pi}\ket{0^i}$.
Let $\sin(\varphi_i) := \| \Pi_M\otimes I_2^{\otimes i} \ket{\phi_i} \|$ be the
length of the projection of $\ket{\phi_i}$ to the subspace spanned by the marked vertices.
We will later prove the following claim.

\begin{claim}\label{claim:e_i_upper}
Denote by $e_i := |\sin(\varphi_i) -\sin(\bar{\varphi}_i)|$ the deviation from the idea success amplitude $\sin(\bar{\varphi}_i)$.
Let $\tilde{e}_i$ be defined by the following recurrence relation:
\begin{align}
    \tilde{e}_0 &= 0, \,\mathrm{and} \\
    \tilde{e}_{i+1} &= 4\beta_{i+1} \bar{\varphi}_i +3\tilde{e}_i. \label{eq:e_tilde_i}
\end{align}
Then $e_i \leq \tilde{e}_i$ for $i \in\{0,\dots, t\}$ and $\varphi_i \leq \pi/4$ for $i \in\{0,\dots, t-1\}$.
\end{claim}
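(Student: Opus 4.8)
The plan is to prove both assertions of Claim~\ref{claim:e_i_upper} simultaneously by induction on $i$. For the base case $i=0$ we have $A_0=I$, so $\ket{\phi_0}=\ket{\pi}$ and $\sin(\varphi_0)=\|\Pi_M\ket{\pi}\|=\sqrt{p_M}=\sin(\bar{\varphi}_0)$; hence $e_0=0=\tilde{e}_0$, and since $t\geq 1$ forces $\bar{\varphi}_0\leq\pi/6$, also $\varphi_0=\bar{\varphi}_0\leq\pi/4$. For the inductive step I would fix $i\leq t-1$, assume $e_i\leq\tilde{e}_i$ and $\varphi_i\leq\pi/4$, and compare the actual operator $A_{i+1}$ of Algorithm~\ref{alg:recur_known} with an idealized operator $A_{i+1}^{\mathrm{id}}$ obtained by replacing the approximate reflection $R(\beta_{i+1})$ in Step~4 with the exact reflection $\mathrm{ref}(\pi)$. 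Writing $\bar{\varphi}_{i+1}=3\bar{\varphi}_i$ and applying the triangle inequality,
\[
e_{i+1}\leq \bigl|\sin(\varphi_{i+1})-\sin(3\varphi_i)\bigr| + \bigl|\sin(3\varphi_i)-\sin(3\bar{\varphi}_i)\bigr|,
\]
reduces the problem to bounding an approximate-reflection error (first term) and an ideal-amplification error (second term).

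For the second term I would first note that, with the exact reflection, Steps~1--5 implement one round of amplitude amplification: $A_{i+1}^{\mathrm{id}}$ acts on $\ket{\pi}\ket{0^{i+1}}$ as the Grover iterate $(2\ket{\phi_i}\bra{\phi_i}-I)(I-2\Pi_M)$ applied to $\ket{\phi_i}$, rotating the success angle from $\varphi_i$ to $3\varphi_i$; since $3\varphi_i\leq 3\pi/4$ keeps $\sin(3\varphi_i)>0$, the idealized amplitude is exactly $\sin(3\varphi_i)$. The clean factor $3$ then comes from the identity $\sin(3\theta)=\sin\theta\,(3-4\sin^2\theta)$: writing $s=\sin\varphi_i$, $\bar{s}=\sin\bar{\varphi}_i$ gives $\sin(3\varphi_i)-\sin(3\bar{\varphi}_i)=(s-\bar{s})\bigl[3-4(s^2+s\bar{s}+\bar{s}^2)\bigr]$, and because $\varphi_i,\bar{\varphi}_i\leq\pi/4$ forces $s^2,\bar{s}^2\leq\tfrac12$, the bracket has absolute value at most $3$. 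Hence $|\sin(3\varphi_i)-\sin(3\bar{\varphi}_i)|\leq 3e_i\leq 3\tilde{e}_i$ by the induction hypothesis. (Here $\bar{\varphi}_i\leq\pi/6$ for $i\leq t-1$ is automatic from $\bar{\varphi}_t\leq\pi/2$, so the threshold $\bar{\varphi}_i\leq\pi/4$ is free.)

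For the first term I would invoke Lemma~\ref{lem:approx_ref}: since $A_{i+1}$ and $A_{i+1}^{\mathrm{id}}$ differ only through Step~4 and $A_i$ is unitary, the distance $\|\ket{\phi_{i+1}}-\ket{\phi_{i+1}^{\mathrm{id}}}\|$ equals the error incurred by $R(\beta_{i+1})$ on the state entering Step~4, which is at most $\beta_{i+1}$ times the norm of that state's component orthogonal to $\ket{\pi}$ in the $\ket{0^i}$ sector. A short computation (the overlap of the pre-Step~4 state with $\ket{\pi}\ket{0^i}$ is $\bra{\phi_i}(I-2\Pi_M)\ket{\phi_i}=\cos(2\varphi_i)$) shows this orthogonal component has norm at most $|\sin(2\varphi_i)|\leq 2\sin\varphi_i\leq 2(\bar{\varphi}_i+e_i)$, so the first term is at most $2\beta_{i+1}\bar{\varphi}_i+2\beta_{i+1}e_i\leq 4\beta_{i+1}\bar{\varphi}_i$ once the auxiliary bound $e_i\leq\bar{\varphi}_i$ is used. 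Combining the two estimates yields $e_{i+1}\leq 4\beta_{i+1}\bar{\varphi}_i+3\tilde{e}_i=\tilde{e}_{i+1}$. To propagate $\varphi_{i+1}\leq\pi/4$ when $i+1\leq t-1$, I would use $\sin\varphi_{i+1}\leq\sin\bar{\varphi}_{i+1}+e_{i+1}\leq\tfrac12+\tilde{e}_{i+1}$ together with the explicit smallness of $\tilde{e}_{i+1}$ forced by the choice $\beta_{i+1}=\tfrac{9}{2\pi^3}\gamma/(i+1)^2$.

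I expect the main obstacle to be the first term: converting the operator guarantee of Lemma~\ref{lem:approx_ref} into an amplitude error with the precise constant $4$ requires careful bookkeeping of the ancilla registers $\mathcal{K}_1,\dots,\mathcal{K}_{i+1}$ (to confirm the reflection error is localized to the $\ket{0^i}$ sector with fresh ancilla $\ket{0}$) and of the auxiliary inequality $e_i\leq\bar{\varphi}_i$. The genuinely delicate point is the interdependence of the two inductive assertions: the hypothesis $\varphi_i\leq\pi/4$ is exactly what licenses the factor-$3$ bracket estimate, whereas re-establishing $\varphi_{i+1}\leq\pi/4$ relies on the freshly proved bound $e_{i+1}\leq\tilde{e}_{i+1}$ being small, so both statements must be carried through the induction together and the smallness of the $\tilde{e}_i$ controlled via the summable choice of the $\beta_i$.
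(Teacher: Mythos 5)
Your proposal is correct and follows essentially the same route as the paper's proof: a joint induction on $e_i \leq \tilde{e}_i$ and $\varphi_i \leq \pi/4$, the same triangle-inequality split of $e_{i+1}$, the same localization of the approximate-reflection error to the component orthogonal to $\ket{\pi}\ket{0^i}$ (of norm $|\sin(2\varphi_i)|$, giving the paper's Claim~\ref{claim:varphi_i1_3}), the factor-$3$ Lipschitz bound on $\sin(3\cdot)$ (which the paper proves via the mean value theorem, you via the equivalent cubic factorization), and the use of the bound $\tilde{e}_i \leq \gamma\bar{\varphi}_i/\pi$ (the paper's Claim~\ref{claim:e_i_tilde}) both for the auxiliary inequality $e_i \leq \bar{\varphi}_i$ and for propagating $\varphi_{i+1}\leq\pi/4$. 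The ancilla bookkeeping you flag as the main obstacle is exactly what the paper's Facts~\ref{fact:E_i_phi_i1} and \ref{fact:E_i_psi} supply, and your sketch of it is sound.
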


\begin{claim}\label{claim:e_i_tilde}
    $\tilde{e}_i \leq \gamma \bar{\varphi}_i/\pi$ for $\beta_i = \frac{9}{2\pi^3}\gamma/i^2$.
\end{claim}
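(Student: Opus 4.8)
The plan is to avoid a direct induction on the target inequality $\tilde{e}_i \leq \gamma\bar{\varphi}_i/\pi$, which does not close: substituting this hypothesis into the recurrence \eqref{eq:e_tilde_i} would require $4\beta_{i+1}\bar{\varphi}_i + 3\gamma\bar{\varphi}_i/\pi \leq \gamma\bar{\varphi}_{i+1}/\pi = 3\gamma\bar{\varphi}_i/\pi$, which fails since $4\beta_{i+1}\bar{\varphi}_i > 0$. The remedy is to track the \emph{normalized} deviation $d_i := \tilde{e}_i/\bar{\varphi}_i$, exploiting the geometric growth $\bar{\varphi}_{i+1} = 3\bar{\varphi}_i$ that follows from $\bar{\varphi}_i = 3^i\bar{\varphi}_0$.

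First I would divide the recurrence $\tilde{e}_{i+1} = 4\beta_{i+1}\bar{\varphi}_i + 3\tilde{e}_i$ through by $\bar{\varphi}_{i+1} = 3\bar{\varphi}_i$, obtaining the much simpler additive recurrence $d_{i+1} = d_i + \tfrac{4}{3}\beta_{i+1}$ with $d_0 = \tilde{e}_0/\bar{\varphi}_0 = 0$. Summing this telescoping relation immediately yields the closed form $d_i = \tfrac{4}{3}\sum_{k=1}^{i}\beta_k$, so the entire claim reduces to a bound on a partial sum of the weights $\beta_k$.

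Next I would substitute $\beta_k = \frac{9}{2\pi^3}\gamma/k^2$ and control the partial sum by the full Basel series, $\sum_{k=1}^{i} 1/k^2 \leq \sum_{k=1}^{\infty} 1/k^2 = \pi^2/6$. This gives $\sum_{k=1}^{i}\beta_k \leq \frac{9\gamma}{2\pi^3}\cdot\frac{\pi^2}{6} = \frac{3\gamma}{4\pi}$, hence $d_i \leq \frac{4}{3}\cdot\frac{3\gamma}{4\pi} = \frac{\gamma}{\pi}$, and multiplying back by $\bar{\varphi}_i$ recovers $\tilde{e}_i = d_i\bar{\varphi}_i \leq \gamma\bar{\varphi}_i/\pi$, which is exactly the claim.

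I do not expect a serious obstacle once the normalization is in place, since the remaining computation is routine. The one point that genuinely matters is the choice of schedule $\beta_i \propto 1/i^2$: these weights are summable (a uniform $\beta$ would make $d_i$ grow linearly in $i$ and break the uniform bound), and the prefactor $\frac{9}{2\pi^3}$ is tuned precisely so that $\frac{4}{3}\cdot\frac{9}{2\pi^3}\cdot\frac{\pi^2}{6} = \frac{1}{\pi}$, making the Basel bound saturate the target constant $\gamma/\pi$. Thus the proof is best viewed not as an induction but as an explicit summation whose convergence is guaranteed by $\sum 1/k^2 < \infty$.
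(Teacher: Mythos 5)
Your proof is correct and takes essentially the same route as the paper: the paper normalizes by setting $u_i := \tilde{e}_i/(\gamma\bar{\varphi}_i)$, derives the same additive recurrence $u_{i+1} = u_i + \frac{4\beta_{i+1}}{3\gamma}$, sums it, and bounds the partial sum by the Basel series $\sum_{n=1}^{\infty} 1/n^2 = \pi^2/6$. Your normalization $d_i = \tilde{e}_i/\bar{\varphi}_i$ differs from the paper's only by the constant factor $\gamma$, so the two arguments are identical in substance.
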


\begin{proof}
    Let $u_i := \tilde{e}_i/(\gamma \bar{\varphi}_i)$.
    Substituting $\tilde{e}_i = \gamma \bar{\varphi}_i u_i$ to Eq.~\eqref{eq:e_tilde_i}, and using $\bar{\varphi}_{i+1} =3\bar{\varphi}_i$, and finally dividing both sides by $3\gamma \bar{\varphi}_i$, we obtain $u_{i+1} = u_i +\frac{4\beta_{i+1}}{3\gamma}$.
    Since $u_0 =0$, we have:
    \begin{align}
        u_i &= \frac{4}{3\gamma}\sum_{j=1}^i \beta_j
        = \frac{4}{3\gamma}\sum_{j=1}^i \frac{9\gamma}{2\pi^3 j^2}
        = \frac{6}{\pi^3} \sum_{j=1}^i\frac{1}{j^2} \label{eq:u_i_line_1}\\
        &\leq \frac{6}{\pi^3} \frac{\pi^2}{6}
        =\frac{1}{\pi}, \label{eq:u_i_line_2}
    \end{align}
    where we have used the fact that $\sum_{n=1}^{\infty}\frac{1}{n^2} = \frac{\pi^2}{6}$.
    Thus $\tilde{e}_i = \gamma \bar{\varphi}_i u_i \leq \gamma \bar{\varphi}_i/\pi$.
\end{proof}

Combining Claims~\ref{claim:e_i_upper},\ref{claim:e_i_tilde} with $\bar{\varphi}_t \leq \pi/2$, we have $e_t \leq \gamma/2$.
Thus, the final success amplitude $\| \Pi_{M}\otimes I_2^{\otimes t} \cdot A_t \cdot \ket{\pi}\ket{0^t} \| = \sin(\varphi_t)$ is greater than $\sin(\bar{\varphi}_t) -e_t \geq \frac{1}{2} -\frac{\gamma}{2}$, which proves Eq.~\eqref{eq:known_success} in Lemma~\ref{lem:recur_known}.

We now calculate the cost $C(i)$ of $A_i$.
Recall that the cost of the oracle $\mathrm{ref}(\mathcal{M}^\perp)$ is $c_2$, and the cost of the approximate reflection $R(\beta)$ is $c_1 \log(\frac{1}{\beta})$.
From the recursive definition of $A_i$ shown in Algorithm~\ref{alg:recur_known}, we obtain the recurrence relation of $C(i)$ as follows:
\begin{align}
    C(0) &= 0, \ \mathrm{and} \\
    C(i) &= 3C(i-1) +c_1 \log(\frac{1}{\beta_i}) +c_2.
\end{align}
Therefore we can calculate the total cost $C=C(t)$ as follows:
\begin{align}
	C(t) &= \sum_{i=1}^{t} 3^{t-i} \left( c_1 \log(\frac{1}{\beta_i}) +c_2 \right) \label{eq:Ct_line1}\\
	&= 3^t \left[ \Big( c_1 \log(\frac{2\pi^3}{9}) +c_1 \log(\frac{1}{\gamma}) +c_2 \Big) \sum_{i=1}^{t} 3^{-i} +2c_1 \sum_{i=1}^{t} \frac{\log(i)}{3^i} \right] \label{eq:Ct_line2}\\
	&\leq 3^t \left[1.4 c_1 +\frac{c_1}{2} \log(\frac{1}{\gamma}) +c_1 +\frac{c_2}{2}  \right] \label{eq:Ct_line3}\\
	&\leq \frac{\pi}{2 \sqrt{p_M}} \left[c_1 (2.4+\frac{1}{2}\log(\frac{1}{\gamma})) +\frac{c_2}{2} \right] \label{eq:Ct_line4},
\end{align}
which results in Eq.~\eqref{eq:known_cost} in Lemma~\ref{lem:recur_known}.

Above, we expand $\log({1}/{\beta_i}) = \log({2\pi^3 i^2}/{(9\gamma)}) = \log({2\pi^3}/{9}) +\log({1}/{\gamma}) +2\log(i)$ in Eq.~\eqref{eq:Ct_line2}.
Formula~\eqref{eq:Ct_line3} holds by the following three facts: $\sum_{i=1}^{t} 3^{-i}  =\frac{1}{2}(1-1/3^t)$, and $\log(2\pi^3/9)/2 <1.4$, and $\sum_{i=1}^{t} {\log(i)}/{3^i} \leq s:= \sum_{i=1}^{t} {i}/{3^i} = (1-{t}/{3^t})/2$, where the last equality follows from $s -{s}/{3} = {1}/{3} - {t}/{3^{t+1}}$.
Formula~\eqref{eq:Ct_line4} uses $3^t \arcsin(\sqrt{p_M}) \leq \pi/2$ and the fact that $\arcsin(x) \geq x$ for $x\in[0,1]$.

\subsection{Proof of Claim~\ref{claim:e_i_upper}}
Denote by $R_i$ the combined effect of steps 3,4,5 of $A_i$ shown in Algorithm~\ref{alg:recur_known}, which approximates the ideal reflection:
\begin{align}
	\mathrm{ref}(\phi_{i-1}) &:= A_{i-1}(2\ket{\pi,0^{i-1}}\bra{\pi,0^{i-1}}-I) A_{i-1}^\dagger \\
	&= 2\ket{\phi_{i-1}}\bra{\phi_{i-1}}-I.
\end{align}
Denote their difference by $E_i := R_i -\mathrm{ref}(\phi_{i-1}) \otimes I_{\mathcal{K}_i}$, we have the following two facts.

\begin{fact}\label{fact:E_i_phi_i1}
    $E_i \ket{\phi_{i-1},0}=0$.
\end{fact}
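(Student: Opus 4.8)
The plan is to evaluate $R_i\ket{\phi_{i-1},0}$ directly by pushing the state through steps 3, 4, 5 of Algorithm~\ref{alg:recur_known} (recalling that $R_i$ applies step 3 first, then step 4, then step 5), and then to note that the ideal reflection $\mathrm{ref}(\phi_{i-1})$ fixes its own defining vector $\ket{\phi_{i-1}}$. Since both $R_i$ and $\mathrm{ref}(\phi_{i-1})\otimes I_{\mathcal{K}_i}$ leave $\ket{\phi_{i-1},0}$ invariant, their difference $E_i$ annihilates it, which is exactly the claim.

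First I would apply step 3, which acts as $A_{i-1}^\dagger$ on $\mathcal{H}\otimes\mathcal{K}_1\otimes\cdots\otimes\mathcal{K}_{i-1}$. By the definition $\ket{\phi_{i-1}} = A_{i-1}\ket{\pi,0^{i-1}}$, this returns the state to $\ket{\pi,0^{i-1}}\ket{0}_{\mathcal{K}_i}$. Next, in step 4 the control register $\mathcal{K}_1\otimes\cdots\otimes\mathcal{K}_{i-1}$ holds exactly $\ket{0^{i-1}}$, so the conditioned $R(\beta_i)$ fires on $\mathcal{H}\otimes\mathcal{K}_i$, acting on $\ket{\pi}\ket{0}$. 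The crucial input here is the exact one-sided fixed-point property of the approximate reflection, Eq.~\eqref{eq:R_0_pi}: since $\ket{\pi}$ is the zero-eigenphase eigenstate, $R(\beta_i)\ket{\pi}\ket{0}=\ket{\pi}\ket{0}$. Simultaneously $(2\ket{0^{i-1}}\bra{0^{i-1}}-I)$ has $\ket{0^{i-1}}$ as a $+1$-eigenvector, so step 4 leaves the whole state unchanged. Finally step 5 applies $A_{i-1}$, restoring $\ket{\phi_{i-1}}\ket{0}_{\mathcal{K}_i}=\ket{\phi_{i-1},0}$. Hence $R_i\ket{\phi_{i-1},0}=\ket{\phi_{i-1},0}$, while $\mathrm{ref}(\phi_{i-1})=2\ket{\phi_{i-1}}\bra{\phi_{i-1}}-I$ gives $(\mathrm{ref}(\phi_{i-1})\otimes I_{\mathcal{K}_i})\ket{\phi_{i-1},0}=\ket{\phi_{i-1},0}$, so subtracting yields $E_i\ket{\phi_{i-1},0}=0$.

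There is no real obstacle here: the argument is a direct bookkeeping computation. The only point requiring care is that the exactness hinges on the one-sided-error guarantee of $R(\beta_i)$ in Eq.~\eqref{eq:R_0_pi} rather than on its merely approximate behavior elsewhere—had $R(\beta_i)$ perturbed $\ket{\pi}\ket{0}$, the fact would fail. One should also reconcile the register ordering between Lemma~\ref{lem:approx_ref}, where the ancilla is written first, and Algorithm~\ref{alg:recur_known}, where $\mathcal{H}$ precedes $\mathcal{K}_i$; this is a harmless relabeling that does not affect the invariance.
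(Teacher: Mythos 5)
Your proof is correct and follows essentially the same route as the paper's: both reduce the claim to showing $R_i\ket{\phi_{i-1},0}=\ket{\phi_{i-1},0}$ by conjugating back to $\ket{\pi}\ket{0^i}$ via $A_{i-1}^\dagger$, invoking the exact one-sided fixed-point property $R(\beta_i)\ket{0}\ket{\pi}=\ket{0}\ket{\pi}$ of Eq.~\eqref{eq:R_0_pi}, and then restoring with $A_{i-1}$. Your additional remarks (the $+1$-eigenvector role of $2\ket{0^{i-1}}\bra{0^{i-1}}-I$ and the register-ordering reconciliation) are details the paper leaves implicit, but they do not change the argument.
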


\begin{proof}
    It suffices to show that $R_i \ket{\phi_{i-1},0} =\ket{\phi_{i-1},0}$.
    From $\ket{\phi_{i-1}} = A_{i-1} \ket{\pi}\ket{0^{i-1}}$ and $R_i =A_{i-1} \otimes I_{\mathcal{K}_i} \cdot (4.) \cdot A_{i-1}^\dagger \otimes I_{\mathcal{K}_i}$, we have $R_i \ket{\phi_{i-1},0} = A_{i-1}\otimes I_{\mathcal{K}_i} \cdot (4.) \ket{\pi}\ket{0^{i}}$.
    Since $R(\beta_i)$ leaves $\ket{\pi}\ket{0} \in \mathcal{H}\otimes\mathcal{K}_i$ unchanged by Eq.~\eqref{eq:R_0_pi} in Lemma~\ref{lem:approx_ref}, step~4 leaves $\ket{\pi}\ket{0^i}$ unchanged, and thus $R_i \ket{\phi_{i-1},0} = \ket{\phi_{i-1},0}$.
\end{proof}

\begin{fact}\label{fact:E_i_psi}
    Suppose $\ket{\psi} \in \mathcal{H} \otimes \mathcal{K}_1 \otimes \cdots\otimes\mathcal{K}_{i-1}$ satisfies $\ket{\psi} \perp \ket{\phi_{i-1}}$.
    Then $\| E_i \ket{\psi,0} \| \leq \beta_i$.
\end{fact}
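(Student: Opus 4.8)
The plan is to exploit the factorization $R_i = (A_{i-1}\otimes I_{\mathcal{K}_i})\cdot S_4\cdot (A_{i-1}^\dagger\otimes I_{\mathcal{K}_i})$, where $S_4$ denotes step~4 of Algorithm~\ref{alg:recur_known}, together with the analogous factorization $\mathrm{ref}(\phi_{i-1})\otimes I_{\mathcal{K}_i} = (A_{i-1}\otimes I_{\mathcal{K}_i})\cdot[(2\ket{\pi,0^{i-1}}\bra{\pi,0^{i-1}}-I)\otimes I_{\mathcal{K}_i}]\cdot(A_{i-1}^\dagger\otimes I_{\mathcal{K}_i})$. Since $A_{i-1}$ is unitary and hence norm-preserving, setting $\ket{\chi} := A_{i-1}^\dagger\ket{\psi}$ reduces the claim to bounding $\| (S_4 - (2\ket{\pi,0^{i-1}}\bra{\pi,0^{i-1}}-I)\otimes I_{\mathcal{K}_i})\ket{\chi,0} \|$, where the hypothesis $\ket{\psi}\perp\ket{\phi_{i-1}}=A_{i-1}\ket{\pi,0^{i-1}}$ transfers to $\ket{\chi}\perp\ket{\pi,0^{i-1}}$.

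Next I would split $\ket{\chi}$ according to the control register $\mathcal{K}_1\otimes\cdots\otimes\mathcal{K}_{i-1}$, writing $\ket{\chi}=\ket{\chi_0}\ket{0^{i-1}}+\ket{\chi_\perp}$ with $\ket{\chi_0}\in\mathcal{H}$ and $\ket{\chi_\perp}$ supported on the orthogonal complement of $\ket{0^{i-1}}$ in that register. The orthogonality $\ket{\chi}\perp\ket{\pi,0^{i-1}}$ forces $\ket{\chi_0}\perp\ket{\pi}$. The key observation is that on the subspace where $\mathcal{K}_1\cdots\mathcal{K}_{i-1}$ is orthogonal to $\ket{0^{i-1}}$, both operators act as $-I$: for $S_4$ the conditioned $R(\beta_i)$ is not triggered and the factor $(2\ket{0^{i-1}}\bra{0^{i-1}}-I)$ contributes $-1$, while $2\ket{\pi,0^{i-1}}\bra{\pi,0^{i-1}}-I$ also contributes $-1$ because $\braket{0^{i-1}|\cdot}=0$ there. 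Hence the difference annihilates the $\ket{\chi_\perp,0}$ component entirely, and only the $\ket{\chi_0}\ket{0^{i-1}}\ket{0}$ component can contribute to the error.

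On that surviving component, $S_4$ applies $R(\beta_i)$ to $\mathcal{H}\otimes\mathcal{K}_i$ (with a trivial $+1$ on $\ket{0^{i-1}}$), whereas the ideal reflection applies $\mathrm{ref}(\pi)=2\ket{\pi}\bra{\pi}-I$ to $\mathcal{H}$, which equals $-I$ on $\ket{\chi_0}$ since $\ket{\chi_0}\perp\ket{\pi}$. The resulting discrepancy is exactly $(R(\beta_i)+I)\ket{0}\ket{\chi_0}$, and the one-sided-error estimate~\eqref{eq:R_0_psi} of Lemma~\ref{lem:approx_ref}, applied by linearity to the vector $\ket{\chi_0}$ which is orthogonal to $\ket{\pi}$, yields $\|(R(\beta_i)+I)\ket{0}\ket{\chi_0}\|\leq\beta_i\|\ket{\chi_0}\|\leq\beta_i$, the last inequality using $\|\ket{\chi_0}\|\leq\|\ket{\chi}\|=\|\ket{\psi}\|=1$. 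Combining the two pieces gives the desired bound $\|E_i\ket{\psi,0}\|\leq\beta_i$.

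I expect the main obstacle to be purely organizational: carefully tracking how the conditioning on $\ket{0^{i-1}}$ and the two reflection factors act across the four-fold tensor product $\mathcal{H}\otimes\mathcal{K}_1\otimes\cdots\otimes\mathcal{K}_{i-1}\otimes\mathcal{K}_i$, and verifying that the two $-I$ actions genuinely cancel on the off-diagonal control block so that the entire error localizes to the $\ket{0^{i-1}}$ block. Once that localization is established, the estimate is an immediate consequence of Lemma~\ref{lem:approx_ref}, with no further computation required.
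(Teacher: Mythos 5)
Your proposal is correct and takes essentially the same approach as the paper's proof: conjugate by $A_{i-1}$ (using unitarity) to reduce to comparing step~4 with the reflection about $\ket{\pi,0^{i-1}}$, split the state according to whether the control registers $\mathcal{K}_1\otimes\cdots\otimes\mathcal{K}_{i-1}$ hold $\ket{0^{i-1}}$, note that both operators act as $-I$ on the complementary block, and bound the surviving block via Eq.~\eqref{eq:R_0_psi} of Lemma~\ref{lem:approx_ref} applied to the (sub-normalized) component orthogonal to $\ket{\pi}$. No gaps.
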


\begin{proof}
    Since $\ket{\psi} \perp \ket{\phi_{i-1}}$, we know $\ket{\tilde{\psi}} := A_{i-1}^\dagger \ket{\psi}$ is perpendicular to $A_{i-1}^\dagger \ket{\phi_{i-1}} = \ket{\pi} \ket{0^{i-1}}$.
    It suffices to show that the distance between $(4.)\ket{\tilde{\psi},0}$ and $(2\ket{\pi,0^{i-1}}\bra{\pi,0^{i-1}}-I)\otimes I_{\mathcal{K}_i} \ket{\tilde{\psi},0}=-\ket{\tilde{\psi},0}$ are bounded above by $\beta_i$.
    Decompose $\ket{\tilde{\psi}}$ into two parts conditioned on whether the basis states in $\mathcal{K}_1 \otimes \cdots\otimes\mathcal{K}_{i-1}$ is $\ket{0^{i-1}}$, i.e. $\ket{\tilde{\psi}} =\ket{\tilde{\psi}_0} \ket{0^{i-1}}+\ket{\tilde{\psi}_1}$.
    Denote by $\widetilde{R}(\beta_i)$ the operation that applies $R(\beta_i)$ to $\mathcal{H}\otimes\mathcal{K}_i$ and applies the identity transformation to $\mathcal{K}_1 \otimes \cdots\otimes\mathcal{K}_{i-1}$.
    From the definition of step~4, we have $ (4.) \ket{\tilde{\psi},0} = \widetilde{R}(\beta_i)\ket{\tilde{\psi}_0}\ket{0^{i}} -\ket{\tilde{\psi}_1}\ket{0}$.
    Since $\ket{\tilde{\psi}} \perp \ket{\pi} \ket{0^{i-1}}$, we have $\ket{\tilde{\psi}_0} \perp \ket{\pi}$, and thus $\| ((4.)+I) \ket{\tilde{\psi},0} \| = \| (\widetilde{R}(\beta_i)+I) \ket{\tilde{\psi}_0}\ket{0^{i}} \| \leq \beta_i$, where the last inequality follows from Eq.~\eqref{eq:R_0_psi} in Lemma~\ref{lem:approx_ref}.
\end{proof}

Using Facts~\ref{fact:E_i_phi_i1},\ref{fact:E_i_psi}, we can show the following claim.

\begin{claim}\label{claim:varphi_i1_3}
    $\left|\, \sin(\varphi_{i+1}) - |\sin(3\varphi_i)|\, \right| \leq \beta_{i+1} |\sin(2\varphi_i)|$.
\end{claim}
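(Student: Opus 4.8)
The plan is to track how one more recursive level of $A_{i+1}$ transforms the success amplitude, cleanly separating the ideal Grover rotation from the reflection error $E_{i+1}$. Unfolding Algorithm~\ref{alg:recur_known} at level $i+1$ applied to $\ket{\pi}\ket{0^{i+1}}$, steps~1--2 produce $(e^{i\pi\Pi_M}\otimes I)\ket{\phi_i}\ket{0}$, and the combined steps~3--5 (which are exactly $R_{i+1}$) then give $\ket{\phi_{i+1}} = R_{i+1}(e^{i\pi\Pi_M}\otimes I)\ket{\phi_i}\ket{0}$, where $\ket{\phi_i}\ket{0}$ abbreviates $\ket{\phi_i}\otimes\ket{0}_{\mathcal{K}_{i+1}}$ and $e^{i\pi\Pi_M} = I - 2\Pi_M$ is the reflection $\mathrm{ref}(\mathcal{M}^\perp)$ about the unmarked subspace. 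Substituting $R_{i+1} = \mathrm{ref}(\phi_i)\otimes I_{\mathcal{K}_{i+1}} + E_{i+1}$ splits the state into an ideal term $\ket{\phi_{i+1}^{\rm id}} := (\mathrm{ref}(\phi_i)\otimes I)(\mathrm{ref}(\mathcal{M}^\perp)\otimes I)\ket{\phi_i}\ket{0}$ and an error term $\ket{e} := E_{i+1}(\mathrm{ref}(\mathcal{M}^\perp)\otimes I)\ket{\phi_i}\ket{0}$. Because $\Pi_M\otimes I$ is a contraction, the reverse triangle inequality yields $\bigl|\,\|\Pi_M\otimes I\ket{\phi_{i+1}}\| - \|\Pi_M\otimes I\ket{\phi_{i+1}^{\rm id}}\|\,\bigr| \leq \|\ket{e}\|$, so it suffices to compute the ideal projection exactly and to bound $\|\ket{e}\|$.

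For the ideal term I would restrict to the two-dimensional plane spanned by $(I-\Pi_M)\ket{\phi_i}$ and $\Pi_M\ket{\phi_i}$, which contains $\ket{\phi_i}$ and is invariant under both $\mathrm{ref}(\mathcal{M}^\perp)$ and $\mathrm{ref}(\phi_i)$. In an orthonormal basis $\{\ket{U},\ket{V}\}$ of this plane (normalized unmarked and marked parts), $\ket{\phi_i} = \cos\varphi_i\ket{U} + \sin\varphi_i\ket{V}$, while the two reflections act as $\mathrm{diag}(1,-1)$ and $2\ket{\phi_i}\bra{\phi_i}-I$; their product is rotation by $2\varphi_i$. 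Applying it to $\ket{\phi_i}$ gives $\cos(3\varphi_i)\ket{U} + \sin(3\varphi_i)\ket{V}$, so $\|\Pi_M\otimes I\ket{\phi_{i+1}^{\rm id}}\| = |\sin(3\varphi_i)|$, matching the target quantity on the left of the claim.

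The remaining step, which I expect to be the main obstacle, is bounding $\|\ket{e}\|$ by $\beta_{i+1}|\sin(2\varphi_i)|$, and this is where the two facts about $E_{i+1}$ must be applied carefully. I would decompose the reflected vector $(\mathrm{ref}(\mathcal{M}^\perp)\otimes I)\ket{\phi_i}\ket{0}$ relative to $\ket{\phi_i}\ket{0}$: its overlap is $\bra{\phi_i}(I-2\Pi_M)\ket{\phi_i} = 1 - 2\sin^2\varphi_i = \cos(2\varphi_i)$, so the vector equals $\cos(2\varphi_i)\ket{\phi_i}\ket{0} + |\sin(2\varphi_i)|\,\ket{\chi}\ket{0}$ for a unit vector $\ket{\chi}\perp\ket{\phi_i}$, with the $\mathcal{K}_{i+1}$ register staying in $\ket{0}$ since $\mathrm{ref}(\mathcal{M}^\perp)$ acts only on $\mathcal{H}$. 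Applying $E_{i+1}$, the parallel part is annihilated by Fact~\ref{fact:E_i_phi_i1}, while the perpendicular part satisfies $\|E_{i+1}\ket{\chi}\ket{0}\|\leq\beta_{i+1}$ by Fact~\ref{fact:E_i_psi}; hence $\|\ket{e}\| = |\sin(2\varphi_i)|\cdot\|E_{i+1}\ket{\chi}\ket{0}\| \leq \beta_{i+1}|\sin(2\varphi_i)|$. Combined with the reverse triangle inequality above, this proves the claim. The delicate points are verifying that $\ket{\chi}$ is genuinely orthogonal to $\ket{\phi_i}$ so that Fact~\ref{fact:E_i_psi} is applicable, and checking the invariance of the two-dimensional plane; the absolute value $|\sin(3\varphi_i)|$ on the left already absorbs any sign ambiguity arising from the choice $\varphi_i\in[0,\pi/2]$.
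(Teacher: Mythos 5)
Your proposal is correct and follows essentially the same route as the paper's proof: both split $\ket{\phi_{i+1}} = R_{i+1}\,\mathrm{ref}(\mathcal{M}^\perp)\ket{\phi_i,0}$ into the ideal Grover-rotated state plus an error term via $E_{i+1} = R_{i+1} - \mathrm{ref}(\phi_i)\otimes I$, compute the ideal marked amplitude $|\sin(3\varphi_i)|$ in the two-dimensional marked/unmarked plane, and bound the error by writing $\mathrm{ref}(\mathcal{M}^\perp)\ket{\phi_i,0} = \cos(2\varphi_i)\ket{\phi_i,0} + |\sin(2\varphi_i)|\ket{\chi,0}$ with $\ket{\chi}\perp\ket{\phi_i}$ (your $\ket{\chi}$ is the paper's $-\ket{\phi_i^\perp}$), so that Fact~\ref{fact:E_i_phi_i1} kills the parallel part and Fact~\ref{fact:E_i_psi} bounds the rest by $\beta_{i+1}$. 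The final reverse-triangle-inequality step after projecting with $\Pi_M$ is likewise identical to the paper's Eqs.~\eqref{eq:claim_3_line2}--\eqref{eq:claim_3_line3}.
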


\begin{proof}
Decompose $\ket{\phi_i} = A_{i}\ket{\pi}\ket{0^i}$, i.e. the state obtained after step~1 in $A_{i+1}$, into two parts using projections $\Pi_M$ and $(I-\Pi_M)$ acting on $\mathcal{H}$, so that $\ket{\phi_i} = \sin(\varphi_i) \ket{\mu_i} +\cos(\varphi_i) \ket{\mu_i^\perp}$.

\begin{figure}[hbt]
	\centering
	\includegraphics[width=0.5\textwidth]{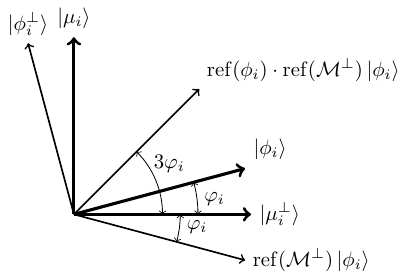}
	\caption{\label{fig:phi_subspace} Decomposing $\ket{\phi_i}$ into two parts using projections $\Pi_M$ and $(I-\Pi_M)$ acting on $\mathcal{H}$, so that $\ket{\phi_i} = \sin(\varphi_i) \ket{\mu_i} +\cos(\varphi_i) \ket{\mu_i^\perp}$.  }
\end{figure}

Let $\ket{\phi_i^\bot} := \cos(\varphi_i)\ket{\mu_i} -\sin(\varphi_i)\ket{\mu_i^\bot}$.
From Fig.~\ref{fig:phi_subspace}, we have:
\begin{equation}\label{eq:later_use_1}
	\mathrm{ref}(\mathcal{M}^\bot) \ket{\phi_i} = \cos(2\varphi_i)\ket{\phi_i} -\sin(2\varphi_i)\ket{\phi_i^\bot},
\end{equation}
which is the state obtained after steps 1 and 2 in $A_{i+1}$.
Technically, $\mathrm{ref}(\mathcal{M}^\perp)$ in \cref{eq:later_use_1} should be $\mathrm{ref}(\mathcal{M}^\perp) \otimes I_{\mathcal{K}_1} \otimes \cdots \otimes I_{\mathcal{K}_i}$, but for simplicity we use the former notation here and after.

From Fig.~\ref{fig:phi_subspace}, we also have:
\begin{equation}\label{eq:later_use_2}
	\mathrm{ref}(\phi_i) \cdot \mathrm{ref}(\mathcal{M}^\perp) \ket{\phi_i} = \sin(3\varphi_i)\ket{\mu_i} +\cos(3\varphi_i) \ket{\mu_i^\bot},
\end{equation}
which is the state we would get after the whole $A_{i+1}$ if steps 3,4 and 5 are replaced with ideal reflection $\mathrm{ref}(\phi_i)$.

From the definition $E_{i+1} = R_{i+1} -\mathrm{ref}(\phi_{i}) \otimes I_{\mathcal{K}_{i+1}}$, we can now write the actual state obtained after the whole $A_{i+1}$, i.e. $\ket{\phi_{i+1}} = R_{i+1}\cdot \mathrm{ref}(\mathcal{M}^\perp) \ket{\phi_i,0}$, as follows:
\begin{align}
	\ket{\phi_{i+1}} &= \mathrm{ref}(\phi_i)\otimes I_{\mathcal{K}_{i+1}} \cdot \mathrm{ref}(\mathcal{M}^\perp) \ket{\phi_i,0} 
	+E_{i+1}\cdot\mathrm{ref}(\mathcal{M}^\perp) \ket{\phi_i,0} \\
	&= \sin(3\varphi_i)\ket{\mu_i,0} +\cos(3\varphi_i) \ket{\mu_i^\bot,0}
	+E_{i+1}(\cos(2\varphi_i)\ket{\phi_i,0} -\sin(2\varphi_i)\ket{\phi_i^\bot,0}) \label{eq:phi_i1_line2}\\
	&= \sin(3\varphi_i)\ket{\mu_i,0} +\cos(3\varphi_i) \ket{\mu_i^\bot,0} +\sin(2\varphi_i) E_{i+1}\ket{\phi_i^\bot,0}, \label{eq:phi_i1_line3}
\end{align}
where we have used Eqs.~\eqref{eq:later_use_1} and \eqref{eq:later_use_2} in Eq.~\eqref{eq:phi_i1_line2}, and Eq.~\eqref{eq:phi_i1_line3} follows from Fact~\ref{fact:E_i_phi_i1}.
Applying $\Pi_M$ to both sides of Eq.~\eqref{eq:phi_i1_line3}, and using the triangle inequality $|\, \|\vec{x}\| -\|\vec{y}\|\, | \leq \|\vec{x} -\vec{y}\|$, we obtain:
\begin{align}
	\left|\, \sin(\varphi_{i+1}) - |\sin(3\varphi_i)|\, \right| 
	&\leq \| \sin(2\varphi_i) \Pi_M E_{i+1}\ket{\phi_i^\bot,0} \| \\
    &\leq \| E_{i+1}\ket{\phi_i^\bot,0} \| \cdot |\sin(2\varphi_i)| \label{eq:claim_3_line2}\\
	&\leq \beta_{i+1} |\sin(2\varphi_i)|, \label{eq:claim_3_line3}
\end{align}
where we have used Fact~\ref{fact:E_i_psi} in Eq.~\eqref{eq:claim_3_line3}.
\end{proof}

We will also need the following two facts.

\begin{fact}\label{fact:sin_2A}
    $|\sin(2A)-\sin(2B)| \leq 2|\sin(A)-\sin(B)|$ for $A,B\in[0,\pi/2]$.
\end{fact}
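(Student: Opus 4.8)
The plan is to collapse the statement to a single pointwise trigonometric comparison via the sum-to-product identities, and then to control the two sides according to the location of $A+B$ in $[0,\pi]$. Writing $\sin 2A-\sin 2B=2\cos(A+B)\sin(A-B)$ and $\sin A-\sin B=2\cos\tfrac{A+B}{2}\sin\tfrac{A-B}{2}$, and expanding $\sin(A-B)=2\sin\tfrac{A-B}{2}\cos\tfrac{A-B}{2}$, the inequality for $A\neq B$ becomes equivalent, after cancelling the common nonnegative factor $4\bigl|\sin\tfrac{A-B}{2}\bigr|$, to
\begin{equation*}
\bigl|\cos(A+B)\bigr|\,\cos\tfrac{A-B}{2}\ \le\ \cos\tfrac{A+B}{2},
\end{equation*}
where I use that $\tfrac{A+B}{2},\tfrac{A-B}{2}\in[-\tfrac{\pi}{2},\tfrac{\pi}{2}]$ so that $\cos\tfrac{A-B}{2}\ge 0$ and $\cos\tfrac{A+B}{2}\ge 0$. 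The diagonal $A=B$ is trivial, both sides of the original inequality vanishing. Thus the whole problem over $[0,\pi/2]^2$ reduces to this one comparison.

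First I would dispatch the region $A+B\le\tfrac{\pi}{2}$: there $\cos(A+B)\ge 0$ and, since cosine is decreasing on $[0,\pi/2]$ with $A+B\ge\tfrac{A+B}{2}$, we get $\cos(A+B)\le\cos\tfrac{A+B}{2}$, which together with $\cos\tfrac{A-B}{2}\le 1$ closes this case. For $\tfrac{\pi}{2}<A+B$ I set $s:=A+B$ and write $|\cos s|=\cos(\pi-s)$ with $\pi-s\in[0,\tfrac{\pi}{2})$; the target becomes $\cos(\pi-s)\cos\tfrac{A-B}{2}\le\cos\tfrac{s}{2}$, and bounding $\cos\tfrac{A-B}{2}\le 1$ it suffices that $\pi-s\ge\tfrac{s}{2}$, i.e. $s\le\tfrac{2\pi}{3}$, once more by monotonicity of cosine. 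Hence the clean one-line estimates already establish the Fact on the entire band $A+B\le\tfrac{2\pi}{3}$, in particular on $[0,\tfrac{\pi}{3}]^2$.

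The remaining corner $A+B\in(\tfrac{2\pi}{3},\pi]$ with both angles large is where I expect the essential difficulty to lie. Here the crude step $\cos\tfrac{A-B}{2}\le 1$ is too lossy: as $s\to\pi$ the right-hand side $\cos\tfrac{s}{2}\to 0$ while $|\cos s|\to 1$, so any successful argument has to extract genuine cancellation from the surviving factor $\cos\tfrac{A-B}{2}$ and exploit the geometric coupling that, on the square, $A+B$ close to $\pi$ forces $|A-B|\le\pi-(A+B)$ to be small. A sharper lens on precisely this obstacle is the Lipschitz reformulation: since $\sin$ is a strictly increasing bijection $[0,\pi/2]\to[0,1]$, the Fact is equivalent to $h(u):=2u\sqrt{1-u^2}$ being $2$-Lipschitz on $[0,1]$, and $h'(u)=2(1-2u^2)/\sqrt{1-u^2}$ is comfortably controlled away from $u=1$ but degenerates as $u\to 1$ (equivalently $A\to\tfrac{\pi}{2}$). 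Pinning down the increment of $h$ near this endpoint against $2|u-v|$ is the step I expect to absorb essentially all of the work and the one most likely to demand the additional structure available in the surrounding argument (where the relevant angles are a priori at most $\pi/4$) rather than a single monotonicity estimate.
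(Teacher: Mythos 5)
Your two completed cases are correct, and your refusal to wave through the corner $A+B\in(\tfrac{2\pi}{3},\pi]$ is vindicated in the strongest possible way: the Fact as stated is \emph{false} there, so the gap in your proposal cannot be filled by any argument. Take $A=\tfrac{\pi}{2}$, $B=\tfrac{\pi}{2}-\epsilon$: then $|\sin(2A)-\sin(2B)|=\sin(2\epsilon)\approx 2\epsilon$, while $2|\sin A-\sin B|=2(1-\cos\epsilon)\approx\epsilon^2$ (for $\epsilon=0.1$ this is $0.199$ versus $0.010$). In your Lipschitz reformulation, this is the statement that $h(u)=2u\sqrt{1-u^2}$ is \emph{not} $2$-Lipschitz at the right endpoint: $|h(1)-h(1-\eta)|\approx 2\sqrt{2\eta}\gg 2\eta$. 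The degeneration of $h'$ at $u=1$ that you flagged is not a technical obstacle to be overcome but a genuine failure of the inequality.

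The paper's own proof commits exactly the error you were circling around: it sets $f(s)=s\sqrt{1-s^2}$, shows $f'$ is decreasing with $f'(0)=1$, and invokes the mean value theorem to conclude $|f(s)-f(s')|\le|s-s'|$; but the mean value theorem needs the two-sided bound $|f'|\le 1$, and $f'(s)=(1-2s^2)/\sqrt{1-s^2}\to-\infty$ as $s\to 1^-$. The statement and the surrounding argument survive only because Fact~\ref{fact:sin_2A} is invoked solely in Eq.~\eqref{eq:e_i1_line4}, with $A=\varphi_i\le\pi/4$ and $B=\bar{\varphi}_i\le\pi/6$, so that $A+B\le\tfrac{5\pi}{12}<\tfrac{\pi}{2}$ --- squarely inside your first and easiest case. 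So your proposal, unlike the paper's proof, actually establishes everything the paper needs, and by a different route (sum-to-product identities plus monotonicity of cosine, rather than the $\sin$-substitution plus mean value theorem); your route also exposes the sharp band: for $A=B=s/2$ with $s>\tfrac{2\pi}{3}$ the reduced inequality $|\cos s|\le\cos\tfrac{s}{2}$ fails, hence by continuity the Fact fails at nearby off-diagonal points as well. The correct repair is to restate the Fact for $A,B\in[0,\pi/4]$ --- then $s,s'\in[0,1/\sqrt{2}]$, where $f'(s)\in[0,1]$, and the paper's mean-value argument becomes valid, exactly as in its treatment of Fact~\ref{fact:sin_3A} --- or to state it under your hypothesis $A+B\le\tfrac{2\pi}{3}$.
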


\begin{proof}
    Write $\sin(2x) = 2\sin(x)\sqrt{1-\sin^2(x)}$ as a function of $\sin(x)$.
    Consider the function $f(s) = s\sqrt{1-s^2}$. Its derivative $f'(s) = \sqrt{1-s^2} - \frac{s^2}{\sqrt{1-s^2}}$ is decreasing for $s\in[0,1]$ and therefore $f'(s) \leq f'(0) =1$.
    Using Lagrange's mean value theorem, we have $|f(s)-f(s')| \leq |s-s'|$ for $s,s'\in[0,1]$, from which Fact~\ref{fact:sin_2A} follows.
\end{proof}

\begin{fact}\label{fact:sin_3A}
    $|\sin(3A)-\sin(3B)| \leq 3|\sin(A)-\sin(B)|$ for $A,B\in[0,\pi/4]$.
\end{fact}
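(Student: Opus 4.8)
The plan is to mirror the strategy used for Fact~\ref{fact:sin_2A}: reduce the trigonometric inequality to a one-dimensional Lipschitz bound for a polynomial in the variable $s = \sin(x)$. The key identity here is the triple-angle formula $\sin(3x) = 3\sin(x) - 4\sin^3(x)$, which lets me write $\sin(3x) = g(\sin(x))$ for the cubic $g(s) := 3s - 4s^3$. The inequality then reads $|g(\sin A) - g(\sin B)| \leq 3|\sin A - \sin B|$, so it suffices to show that $g$ is $3$-Lipschitz on the range of $\sin$ over the allowed angles.

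First I would record the range: for $A,B\in[0,\pi/4]$ the values $s=\sin A$ and $s'=\sin B$ lie in $[0,1/\sqrt{2}]$, since $\sin$ is increasing on $[0,\pi/2]$ and $\sin(\pi/4)=1/\sqrt{2}$. Next I would differentiate, obtaining $g'(s) = 3 - 12s^2$, and bound it on this interval. As $s^2$ ranges over $[0,1/2]$, the quantity $12s^2$ ranges over $[0,6]$, so $g'(s)$ ranges over $[-3,3]$; in particular $|g'(s)| \leq 3$ throughout $[0,1/\sqrt{2}]$, with the extreme values attained exactly at $s=0$ and $s=1/\sqrt{2}$. Applying Lagrange's mean value theorem to $g$ on the subinterval with endpoints $s,s'$ then yields $|g(s)-g(s')| \leq 3|s-s'|$, which is precisely the claim after substituting back $s=\sin A$, $s'=\sin B$.

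The only genuinely delicate point — really a sanity check rather than an obstacle — is confirming that the restriction to $[0,\pi/4]$ (as opposed to the $[0,\pi/2]$ appearing in Fact~\ref{fact:sin_2A}) is exactly what makes the derivative bound hold. Indeed, were $s$ allowed to approach $1$, one would have $g'(1) = 3 - 12 = -9$, so the constant $3$ would be violated; the hypothesis $A,B\le \pi/4$ caps $s$ at $1/\sqrt{2}$, which is the precise threshold $s^2 = 1/2$ at which $|g'|$ first reaches $3$. I would note this explicitly so that the reader sees the angle restriction is tight for the stated Lipschitz constant. No further estimates are needed, and the argument is self-contained and elementary.
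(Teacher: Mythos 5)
Your proof is correct and follows essentially the same route as the paper's: both reduce the claim via the identity $\sin(3x)=3\sin(x)-4\sin^3(x)$ to a Lipschitz bound for $s\mapsto 3s-4s^3$ on $[0,1/\sqrt{2}]$, using $|3-12s^2|\leq 3$ there and Lagrange's mean value theorem. Your added remark that the constant $3$ would fail if $s$ could approach $1$ (where the derivative is $-9$) is a nice observation explaining why the hypothesis $A,B\in[0,\pi/4]$ is exactly what is needed, but the argument itself is the paper's.
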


\begin{proof}
    Write $\sin(3x) = 3\sin(x) -4\sin^3(x)$ as a function of $\sin(x)$.
    Consider the function $f(s) = 3s -4s^3$.
    Its derivative $f'(s) = 3-12s^2 \in [-3,3]$ for $s\in[0,1/\sqrt{2}]$.
    Using Lagrange's mean value theorem, we have $|f(s)-f(s')| \leq 3|s-s'|$ for $s\in[0,1/\sqrt{2}]$, from which Fact~\ref{fact:sin_3A} follows.
\end{proof}

We can now show by induction that $e_i \leq \tilde{e}_i$ for $i \in\{0,\dots, t\}$ and $\varphi_i \leq \pi/4$ for $i \in\{0,\dots, t-1\}$.

\noindent\textbf{Base step:} Since $A_0=I$, we have $\sin(\varphi_0) = \|\Pi_M\ket{\pi}\|= \sin(\bar{\varphi}_0)$.
Thus $\varphi_0 = \bar{\varphi}_0 \leq \pi/6$ and $e_0 =0 =\tilde{e}_0$.

\noindent\textbf{Inductive step:} Assuming $e_i \leq \tilde{e}_i$ and $\varphi_i \leq \pi/4$, where $i \leq t-1$.
We first show $e_{i+1} \leq \tilde{e}_{i+1}$ as follows:
\begin{align}
	e_{i+1} &\leq |\sin(\varphi_{i+1}) -\sin(3\varphi_i)| +|\sin(3\varphi_i) -\sin(\bar{\varphi}_{i+1})| \label{eq:e_i1_line1}\\
	&\leq \beta_{i+1}|\sin(2\varphi_i)| +|\sin(3\varphi_i) -\sin(\bar{\varphi}_{i+1})| \label{eq:e_i1_line2}\\
	&\leq \beta_{i+1}(\sin(2\bar{\varphi}_i) +|\sin(2\varphi_i) -\sin(2\bar{\varphi}_i) |) +|\sin(3\varphi_i) -\sin(3\bar{\varphi}_{i})| \label{eq:e_i1_line3}\\
	&\leq \beta_{i+1} (2\bar{\varphi}_i +2e_i) +3e_i \label{eq:e_i1_line4}\\
	&\leq \beta_{i+1} (2\bar{\varphi}_i +2\tilde{e}_i) +3\tilde{e}_i \label{eq:e_i1_line5}\\
	&\leq 4\beta_{i+1} \bar{\varphi}_i +3\tilde{e}_i \label{eq:e_i1_line6}\\
	&= \tilde{e}_{i+1}, \label{eq:e_i1_line7}
\end{align}
where Eq.~\eqref{eq:e_i1_line1} follows from the triangle inequality, and Eq.~\eqref{eq:e_i1_line2} follows from Claim~\ref{claim:varphi_i1_3} and the fact that $|\sin(3\varphi_i)| =\sin(3\varphi_i)$, which holds by the induction hypothesis $\varphi_i \leq \pi/4$.
In Eq.~\eqref{eq:e_i1_line3} we have used the triangle inequality, and $2\bar{\varphi}_i \leq 2\bar{\varphi}_t/3 \leq \pi/3$ so that $|\sin(2\bar{\varphi}_i)| =\sin(2\bar{\varphi}_i)$, and $\bar{\varphi}_{i+1} =3\bar{\varphi}_i$.
Equation~\eqref{eq:e_i1_line4} follows from the fact that $\sin(A)\leq A$ for $A\geq 0$, and Fact~\ref{fact:sin_2A}, Fact~\ref{fact:sin_3A} combined with the induction hypothesis $\varphi_i \leq \pi/4$ and the fact that $\bar{\varphi}_i \leq \pi/6$.
In Eq.~\eqref{eq:e_i1_line5} we have used the induction hypothesis $e_i \leq \tilde{e}_i$.
In Eq.~\eqref{eq:e_i1_line6} we have used $\tilde{e}_i \leq \gamma \bar{\varphi}_i/\pi \leq \bar{\varphi}_i$, which follows from Claim~\ref{claim:e_i_tilde} and $\gamma \in (0,1)$.
Equation~\eqref{eq:e_i1_line7} follows from the recurrence relation of $\tilde{e}_i$ defined by Eq.~\eqref{eq:e_tilde_i}.

We then show that $\varphi_{i+1} \leq \pi/4$ if $i\leq t-2$, using the following inequalities:
\begin{align}
	\sin(\varphi_{i+1}) &\leq \sin(\bar{\varphi}_{i+1}) +\tilde{e}_{i+1} \label{sin_varphi_i1_line1}\\
	&\leq \sin(\bar{\varphi}_{i+1}) +\gamma \bar{\varphi}_{i+1}/\pi \label{sin_varphi_i1_line2}\\
	&\leq \sin(\frac{\pi}{6}) +\frac{1}{6} \label{sin_varphi_i1_line3}\\
	&< \frac{1}{\sqrt{2}} =\sin(\frac{\pi}{4}) \label{sin_varphi_i1_line4},
\end{align}
where Eq.~\eqref{sin_varphi_i1_line1} follows from the definition $e_{i+1} = |\sin(\varphi_{i+1}) -\sin(\bar{\varphi}_{i+1})|$, and $e_{i+1} \leq \tilde{e}_{i+1}$ by Eq.~\eqref{eq:e_i1_line7}.
Equation~\eqref{sin_varphi_i1_line2} follows from Claim~\ref{claim:e_i_tilde}.
Equation~\eqref{sin_varphi_i1_line3} follows from $\bar{\varphi}_{i+1} \leq \bar{\varphi}_{t-1} \leq \pi/6$, since $i\leq t-2$.
The inequality in Eq.~\eqref{sin_varphi_i1_line4} can be verified numerically.

\begin{remark}\label{remark:why_pi_4}
    If we let $t$ be an integer such that $\bar{\varphi}_t \in [\pi/4,3\pi/4]$, which is the case in Ref.~\cite{MNRS}, Eq.~\eqref{sin_varphi_i1_line3} will become $\sin(\pi/4)+\gamma/4$.
    Thus, we cannot guarantee $\varphi_{t-1} \leq \pi/4$, which is needed in Eq.~\eqref{eq:e_i1_line4} to show that $e_t \leq \tilde{e}_t$.
\end{remark}

\section{Proof of Lemma~\ref{lem:recur_unknown}}\label{app:2}
Denote by $\ket{\psi_i} \in \mathcal{H} \otimes \mathcal{K}_1 \otimes \cdots\otimes\mathcal{K}_{i}$ the middle state obtained after Line~7 in the $i$-th loop in Algorithm~\ref{alg:recur_unkonwn}, where the state in $\mathcal{K}_{i+1}, \cdots,\mathcal{K}_{t_\mathrm{max}}$ remains $\ket{0}$.
Denote by $\delta_i := \| \ket{\psi_i} -\ket{\phi_i}\|$ the distance between $\ket{\psi_i}$ and $\ket{\phi_i} = A_i\ket{\pi}\ket{0^i}$.
Let $t$ be the integer such that $3^t\arcsin(\sqrt{p_M}) \in[\pi/2,\pi/6]$. By Lemma~\ref{lem:recur_known} we know $\| \Pi_M \otimes I_2^{\otimes t} \cdot \ket{\phi_t} \| \geq \frac{1}{2}(1-\gamma)$.
We will later prove the following claim.

\begin{claim}\label{claim:delta}
    $\delta_t \leq \frac{\pi}{12} +\frac{3\gamma}{4}$.
\end{claim}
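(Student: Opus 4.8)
The plan is to control $\delta_i=\|\ket{\psi_i}-\ket{\phi_i}\|$ by a one-step recursion in the loop index $i$ and then sum a geometric series. The actual state $\ket{\psi_i}$ differs from the ideal amplified state $\ket{\phi_i}=A_i\ket{\pi}\ket{0^i}$ for two reasons: the failed measurement in Line~7 deletes the marked component of the running state before the next application of $A$, and the registers $\mathcal{K}_1,\dots,\mathcal{K}_{i-1}$ are not re-initialised between loops. I would isolate the new error produced in passing from loop $i-1$ to loop $i$ by comparing the actual transition with the exact recursion $\ket{\phi_i}=R_i\,\mathrm{ref}(\mathcal{M}^\perp)\ket{\phi_{i-1}}\ket{0}$ already extracted in the proof of Claim~\ref{claim:varphi_i1_3}; after the norm-preserving operators $R_i$ and $\mathrm{ref}(\mathcal{M}^\perp)$ are matched on the two sides, only the discrepancy between the two inputs survives, giving a bound of the form $\delta_i\le \delta_{i-1}+r_i$.

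The heart of the estimate is that $r_i$ is proportional to a marked amplitude at an earlier loop. A failed projective measurement on a state whose marked weight is $\sin^2(\varphi_j)$ moves that state, in norm, by at most $\sin(\varphi_j)$, and this deviation is afterwards transported unchanged by unitaries; the non-reset of the ancillas feeds in the analogous error one amplification level down. I would then pass from the actual amplitudes to the ideal ones using Claims~\ref{claim:e_i_upper} and~\ref{claim:e_i_tilde}, namely $\sin(\varphi_j)\le \sin(\bar\varphi_j)+e_j\le \bar\varphi_j+\tilde e_j$ with $\tilde e_j\le \gamma\bar\varphi_j/\pi$, and absorb the reflection inaccuracy through the bound $\|(R(\beta_i)+I)\ket{0}\ket{\cdot}\|\le\beta_i$ of Lemma~\ref{lem:approx_ref}, with $\beta_i=\frac{9}{2\pi^3}\gamma/i^2$. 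Unwinding the recursion yields $\delta_t\le \sum_j \sin(\bar\varphi_j)+(\gamma\text{-dependent terms})$, where the sum runs only up to $j=t-2$: at the first loop the input is the pristine $\ket{\pi}$, so no measurement error is present yet, and the last disturbance that can reach $\ket{\phi_t}$ enters two amplification steps earlier.

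The summation is then immediate from the geometric law $\bar\varphi_j=3^{j}\bar\varphi_0$. Since $\bar\varphi_t\le \pi/2$ forces $\bar\varphi_{t-2}=\bar\varphi_t/9\le \pi/18$, and $\sin x\le x$, the measurement contribution is
\begin{equation*}
\sum_{j=0}^{t-2}\sin(\bar\varphi_j)\;\le\;\sum_{j=0}^{t-2}\bar\varphi_j\;\le\;\bar\varphi_{t-2}\sum_{k\ge 0}3^{-k}\;=\;\tfrac{3}{2}\,\bar\varphi_{t-2}\;\le\;\tfrac{3}{2}\cdot\tfrac{\pi}{18}\;=\;\tfrac{\pi}{12},
\end{equation*}
which is exactly the leading term of the claim. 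It is essential that $\bar\varphi_t$ is placed in $[\pi/6,\pi/2]$ rather than higher up: the factor $1/9$ gained by stopping the sum at $t-2$ is precisely what sharpens the naive estimate $\pi/4$ to $\pi/12$. The remaining $\gamma$-dependent contributions---the amplitude deviation $e_t\le\tilde e_t\le\gamma/2$ at the final level together with the convergent series $\sum_i\beta_i$ of reflection errors---are collected into the residual $\tfrac{3\gamma}{4}$, giving $\delta_t\le \tfrac{\pi}{12}+\tfrac{3\gamma}{4}$.

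I expect the one-step recursion $\delta_i\le\delta_{i-1}+r_i$ to be the main obstacle. The delicate part is making the cancellation of $R_i$ and $\mathrm{ref}(\mathcal{M}^\perp)$ rigorous while carrying along the unreset ancilla registers $\mathcal{K}_1,\dots,\mathcal{K}_i$ and the renormalisation that follows each failed measurement, so that the surviving error is genuinely the small removed amplitude $\sin(\bar\varphi_{j})$ rather than a quantity that is itself blown up by the subsequent amplification. Once this recursion is in place with $r_i\lesssim \sin(\bar\varphi_{j(i)})+(\gamma\text{-term})$ and the correct range of $j$, the geometric summation above closes the proof.
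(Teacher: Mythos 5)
Your skeleton starts from the same cancellation the paper uses: matching $R_i$ and $\mathrm{ref}(\mathcal{M}^\perp)$ on both sides (equivalently, unitarity of $A_i$) does give the exact identity $\delta_i = \| \ket{v_{i-1}^\perp} - \ket{\pi,0^{i-1}} \|$, and your closing arithmetic ($\tfrac{3}{2}\bar{\varphi}_{t-2} \le \pi/12$, residual $3\gamma/4$) reproduces the paper's numbers. But the step you yourself flag as the main obstacle is a genuine gap, and it cannot be closed in the form you propose. Your recursion $\delta_i \le \delta_{i-1} + r_i$ with $r_i \lesssim \sin(\bar{\varphi}_j)$ (the marked amplitude removed by the failed measurement, ``transported unchanged'') does not follow from any valid triangle inequality here. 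If you route from the post-measurement state $\ket{v_{i-1}^\perp}$ to the comparison input $\ket{\pi,0^{i-1}}$ through the pre-measurement state $\ket{\psi_{i-1}}$, you must pay \emph{both} the measurement disturbance $\approx \sin(\bar{\varphi}_{i-1})$ \emph{and} the distance $\| \ket{\phi_{i-1}} - \ket{\pi,0^{i-1}} \| \approx \bar{\varphi}_{i-1}$ needed to come back from the amplified ideal state to the initial state; the per-step cost doubles to $\approx 2\bar{\varphi}_{i-1}$ and the sum only yields $\approx \bar{\varphi}_t \le \pi/2$, not $\pi/12$. The point is that $\ket{v_{i-1}^\perp}$ and $\ket{\pi,0^{i-1}}$ are both \emph{close to the unmarked direction} while $\ket{\psi_{i-1}}$ and $\ket{\phi_{i-1}}$ are far from it, so any accounting that passes through the pre-measurement states wastes exactly the cancellation you need; your sketch never exploits it.

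The paper closes this with a structurally different recursion. It compares $\ket{v_i^\perp}$ directly with the chain of ideal unmarked directions, $\ket{v_i^\perp} \to \ket{\mu_i^\perp} \to \ket{\mu_{i-1}^\perp,0} \to \cdots \to \ket{\mu_0^\perp,0^{i-1}} \to \ket{\pi,0^i}$: the fresh error per loop is only the chord-versus-arc bound $\| \ket{\mu_0^\perp} - \ket{\pi} \| \le \varphi_0$ (Fact~\ref{fact:delta_i1_3}) plus the reflection-error drifts $\| \ket{\mu_{k}^\perp} - \ket{\mu_{k-1}^\perp,0} \| \le 4\beta_k 3^k \varphi_0$ (Fact~\ref{fact:delta_i1_2}), but relating the two \emph{renormalized} unmarked components costs $\| \ket{v_i^\perp} - \ket{\mu_i^\perp} \| \le 2\delta_i/\cos(\varphi_i) < 3\delta_i$ (Fact~\ref{fact:delta_i1_1}, which requires $\varphi_i \le \pi/4$ from Claim~\ref{claim:e_i_upper}). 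The resulting recursion is $\delta_{i+1} \le 3\delta_i + 4\varphi_0 \sum_{k\le i} 3^k\beta_k + \varphi_0$ --- with an amplification factor $3$ that your recursion lacks --- and unwinding it gives $\varphi_0 \sum_{k=0}^{t-2} 3^k \le \bar{\varphi}_t/6 \le \pi/12$ together with $2\bar{\varphi}_t \sum_k \beta_k \le 3\gamma/4$. Note the numerical coincidence that hides your gap: $\varphi_0 \sum_{k=0}^{t-2} 3^k = \sum_{k=0}^{t-2} \bar{\varphi}_k$, so ``fresh error $\varphi_0$ amplified by $3^k$'' sums to exactly the same quantity as your unamplified ``error $\bar{\varphi}_k$ per step''; you arrived at the right number by an accounting that is not established. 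A secondary correction: the $3\gamma/4$ term comes entirely from the $\beta_k$ reflection errors; the amplitude deviation $e_t \le \tilde{e}_t \le \gamma/2$ of Claims~\ref{claim:e_i_upper} and~\ref{claim:e_i_tilde} concerns the ideal state $\ket{\phi_t}$ and enters the success-probability bound of Lemma~\ref{lem:recur_unknown} separately, so it should not be ``collected'' into $\delta_t$.
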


Thus, from Claim~\ref{claim:delta} and the triangle inequality, we can lower bound the success amplitude of $\ket{\psi_t}$ as follows:
\begin{align}
	\| \Pi_M \ket{\psi_t} \| 
	&\geq \| \Pi_M \ket{\phi_t} \| - \| \Pi_M (\ket{\psi_t}-\ket{\phi_t}) \| \\
	&\geq \frac{1}{2}(1-\gamma) -(\frac{\pi}{12} +\frac{3\gamma}{4}) \\
	&= \frac{1}{2} -\frac{\pi}{12}-\frac{5\gamma}{4},
\end{align}
which results in the lower bound on success probability stated in Lemma~\ref{lem:recur_unknown}.

We now calculate the cost $C$ of the search process $S$ shown by Algorithm~\ref{alg:recur_unkonwn} as follows:
\begin{align}
	C &= \sum_{i=1}^{t_\mathrm{max}} (C(i) +c_2) \label{eq:C_line1}\\
	&\leq \sum_{i=1}^{t_\mathrm{max}} 3^i \left[2.4 c_1 +\frac{c_1}{2} \log(\frac{1}{\gamma}) +\frac{c_2}{2}  \right] +c_2 t_\mathrm{max} \label{eq:C_line2}\\
	&\leq \frac{3\pi}{4\sqrt{\varepsilon}} \left[2.4 c_1 +\frac{c_1}{2} \log(\frac{1}{\gamma}) +\frac{c_2}{2}  \right] +c_2 \log(\frac{\pi}{2\sqrt{\varepsilon}}), \label{eq:C_line3}
\end{align}
which results in Eq.~\eqref{eq:unknown_cost} in Lemma~\ref{lem:recur_unknown}.
In Eq.~\eqref{eq:C_line2} we have used the upper bound on the cost $C(i)$ of $A_i$ shown by Eq.~\eqref{eq:Ct_line3}.
In Eq.~\eqref{eq:C_line3} we have used the fact that $\sum_{i=1}^{t} 3^i =3\cdot \frac{3^t-1}{3-1} \leq 3/2\cdot 3^t$, and $3^{t_\mathrm{max}} \leq \pi/(2\arcsin(\sqrt{\varepsilon})) \leq {\pi}/{(2\sqrt{\varepsilon})}$.

\subsection{Proof of Claim~\ref{claim:delta}}

Decompose $\ket{\psi_i} \in \mathcal{H} \otimes \mathcal{K}_1 \otimes \cdots\otimes\mathcal{K}_{i}$, i.e. the middle state obtained after Line~7 in the $i$-th loop in Algorithm~\ref{alg:recur_unkonwn}, into two parts using projections $\Pi_M$ and $(I-\Pi_M)$ acting on $\mathcal{H}$, so that $\ket{\psi_i} = \sin(\theta_i) \ket{v_i} +\cos(\theta_i)\ket{v_i^\perp}$.
From the binary measurement shown by Line~5 in the $(i-1)$-th loop in Algorithm~\ref{alg:recur_unkonwn}, we have $\ket{\psi_i} = A_i \ket{v_{i-1}^\perp,0}$ for $i\geq 1$, and the base case is $\ket{v_0^\bot} = \ket{\pi}$. Recall that $\ket{\phi_i} = A_i\ket{\pi}\ket{0^i} = \sin(\varphi_i) \ket{\mu_i} +\cos(\varphi_i) \ket{\mu_i^\perp}$.
Thus, the distance between $\ket{\psi_{i+1}}$ and $\ket{\phi_{i+1}}$ for $i \in \{1,\dots,t-1\}$ can be bounded above as shown below:
\begin{align}
	\delta_{i+1} &= \| \ket{v_{i}^\bot} -\ket{\pi,0^{i}} \| \label{eq:delta_i1_line1}\\
	&\leq \| \ket{v_{i}^\bot} -\ket{\mu_i^\bot} \| 
	+\sum_{k=1}^{i} \| \ket{\mu_{k}^\bot} -\ket{\mu_{k-1}^\bot,0} \|
	+\| \ket{\mu_0^\bot} -\ket{\pi} \| \label{eq:delta_i1_line2}\\
	&\leq 3\delta_i +4\varphi_0 \sum_{k=1}^{i} 3^{k}\beta_{k} +\varphi_0, \label{eq:delta_i1_line3}
\end{align}
where Eq.~\eqref{eq:delta_i1_line3} uses the Facts~\ref{fact:delta_i1_1}, \ref{fact:delta_i1_2}, \ref{fact:delta_i1_3} shown below.

Using the recurrence relation of $\delta_i$ shown by Eq.~\eqref{eq:delta_i1_line3}, we can write out the first four terms as follows:
\begin{align}
	\delta_1 &= \| \ket{v_0^\bot} -\ket{\pi}\| = 0 \\
	\delta_2 &\leq 4\varphi_0(3\beta_1) +\varphi_0 \\
	\delta_3 &\leq 4\varphi_0[3^2\beta_1 +(3\beta_1 +3^2\beta_2)] +(3+1)\varphi_0 \\
	\delta_4 &\leq 4\varphi_0[3^3\beta_1 +3(3\beta_1 +3^2\beta_2) +(3\beta_1 +3^2\beta_2 +3^3\beta_3)] +(3^2 +3 +1)\varphi_0
\end{align}

It is now clear that
\begin{align}
	\delta_t &\leq 4\varphi_0 \sum_{k=1}^{t-1}\beta_{k} \sum_{j=t-1}^{k} 3^j
    +\varphi_0 \sum_{k=0}^{t-2}3^k \label{eq:delta_t_line1}\\
    &\leq 4\varphi_0 \sum_{k=1}^{t-1}\beta_{k} \cdot \frac{3^t}{2} + \varphi_0 \cdot \frac{3^{t-1}}{2}   \label{eq:delta_t_line2}\\
    &= 2 \bar{\varphi}_t \sum_{k=1}^{t-1}\beta_{k} +  \frac{\bar{\varphi}_t}{6} \label{eq:delta_t_line3}\\
    &\leq 2\cdot\frac{\pi}{2} \cdot \frac{3\gamma}{4\pi} +\frac{\pi}{12} \label{eq:delta_t_line4}\\
    &= \frac{\pi}{12} +\frac{3\gamma}{4}, \label{eq:delta_t_line5}
\end{align}
which proves Claim~\ref{claim:delta}.
We have used $\sum_{j=k}^{t-1} 3^j = 3^k \cdot \frac{3^{t-k}-1}{3-1} < \frac{3^t}{2}$ and $\sum_{k=0}^{t-2}3^k = \frac{3^{t-1}-1}{3-1} < \frac{3^{t-1}}{2}$ in Eq.~\eqref{eq:delta_t_line2}.
Equation~\eqref{eq:delta_t_line3} follows from $\bar{\varphi}_t = 3^t\varphi_0$.
In Eq.~\eqref{eq:delta_t_line4} we have used $\bar{\varphi}_t \leq \pi/2$, and $\frac{4}{3\gamma} \sum_{j=1}^{\infty} \beta_j \leq \frac{1}{\pi}$ which follows from Eqs.~\eqref{eq:u_i_line_1}, \eqref{eq:u_i_line_2}.

\begin{fact}\label{fact:delta_i1_1}
    $\| \ket{\mu_i^\perp} - \ket{v_i^\perp}\| < 3 \delta_i$ for $i \in \{1,\dots,t-1\}$.
\end{fact}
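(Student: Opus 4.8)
The plan is to realize the unit vectors $\ket{\mu_i^\perp}$ and $\ket{v_i^\perp}$ as the \emph{normalizations} of the unmarked projections of $\ket{\phi_i}$ and $\ket{\psi_i}$, and then transfer the bound $\delta_i = \|\ket{\psi_i}-\ket{\phi_i}\|$ from the vectors themselves to their normalizations. Write $Q := (I-\Pi_M)\otimes I_{\mathcal{K}_1}\otimes\cdots\otimes I_{\mathcal{K}_i}$ for the projection onto the unmarked subspace. From the decompositions $\ket{\phi_i} = \sin(\varphi_i)\ket{\mu_i}+\cos(\varphi_i)\ket{\mu_i^\perp}$ and $\ket{\psi_i} = \sin(\theta_i)\ket{v_i}+\cos(\theta_i)\ket{v_i^\perp}$ (used in the proof of Claim~\ref{claim:delta}), applying $Q$ kills the marked parts and leaves $Q\ket{\phi_i} = \cos(\varphi_i)\ket{\mu_i^\perp}$ and $Q\ket{\psi_i} = \cos(\theta_i)\ket{v_i^\perp}$, so $\ket{\mu_i^\perp}$ and $\ket{v_i^\perp}$ are precisely these projected vectors rescaled to unit length.

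First I would record the contractivity step: since $Q$ is a projection (operator norm $\le 1$),
\begin{equation}
\| Q\ket{\phi_i} - Q\ket{\psi_i} \| \le \| \ket{\phi_i} - \ket{\psi_i} \| = \delta_i .
\end{equation}
Next I would invoke the elementary normalization inequality, that for nonzero $\ket{x},\ket{y}$ one has $\big\|\,\ket{x}/\|\ket{x}\| - \ket{y}/\|\ket{y}\|\,\big\| \le 2\|\ket{x}-\ket{y}\|/\|\ket{x}\|$; this follows by writing the difference as $\tfrac{1}{\|\ket{x}\|}(\ket{x}-\ket{y}) + \ket{y}\big(\tfrac{1}{\|\ket{x}\|}-\tfrac{1}{\|\ket{y}\|}\big)$ and bounding each summand, the second via the reverse triangle inequality $\big|\,\|\ket{y}\|-\|\ket{x}\|\,\big| \le \|\ket{x}-\ket{y}\|$. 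Applying it with $\ket{x}=Q\ket{\phi_i}$ (so $\|\ket{x}\|=\cos(\varphi_i)$) and $\ket{y}=Q\ket{\psi_i}$ gives
\begin{equation}
\| \ket{\mu_i^\perp} - \ket{v_i^\perp} \| \le \frac{2\,\delta_i}{\cos(\varphi_i)} .
\end{equation}

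To close the estimate I would use the spectral information already proved: by Claim~\ref{claim:e_i_upper}, $\varphi_i \le \pi/4$ for $i\in\{0,\dots,t-1\}$, which is exactly the index range of this Fact, so $\cos(\varphi_i)\ge \cos(\pi/4)=1/\sqrt{2}$ and hence $\| \ket{\mu_i^\perp} - \ket{v_i^\perp} \| \le 2\sqrt{2}\,\delta_i$. Since $2\sqrt{2}<3$, this yields the claimed bound $<3\delta_i$ (the strict sign being merely the slack $2\sqrt{2}<3$; the degenerate case $\delta_i=0$, where both sides vanish, is harmless downstream).

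The step I expect to need the most care is controlling the denominator $\cos(\varphi_i)$: the normalization inequality is only useful once $\cos(\varphi_i)$ is bounded away from $0$, which is precisely what the earlier bound $\varphi_i\le\pi/4$ supplies. I would also briefly check that $\ket{v_i^\perp}$ is well-defined, i.e.\ $\cos(\theta_i)>0$, which follows from $\big|\cos(\theta_i)-\cos(\varphi_i)\big| \le \|Q(\ket{\psi_i}-\ket{\phi_i})\| \le \delta_i$ together with $\cos(\varphi_i)\ge 1/\sqrt{2}$. Everything else is routine and requires no further estimates.
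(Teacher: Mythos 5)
Your proposal is correct and is essentially the paper's own proof in different packaging: unwinding your normalization inequality with $x=Q\ket{\phi_i}$, $y=Q\ket{\psi_i}$ reproduces exactly the paper's identity $\cos(\varphi_i)(\ket{\mu_i^\perp}-\ket{v_i^\perp}) = (\cos(\theta_i)-\cos(\varphi_i))\ket{v_i^\perp} - \Pi_M^\bot\ket{\xi_i}$, followed by the same three ingredients (projection contractivity, the reverse triangle inequality giving $|\cos(\theta_i)-\cos(\varphi_i)|\leq\delta_i$, and $\varphi_i\leq\pi/4$ from Claim~\ref{claim:e_i_upper} together with $2\sqrt{2}<3$). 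As a side remark, your direction of the cosine bound, $\cos(\varphi_i)\geq 1/\sqrt{2}$, is the correct one; the paper's text contains a typo ($\leq$) at that step.
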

\begin{proof}
Since the distance between $\ket{\psi_i}$ and $\ket{\phi_i} = \sin(\varphi_i) \ket{\mu_i} +\cos(\varphi_i) \ket{\mu_i^\perp}$ is $\delta_i$, we can write $\ket{\psi_i}$ as:
\begin{equation}\label{eq:psi_i_first}
    \ket{\psi_i} = \sin(\varphi_i) \ket{\mu_i} +\cos(\varphi_i) \ket{\mu_i^\perp} +\ket{\xi_i},
\end{equation}
where $\| \ket{\xi_i} \| \leq \delta_i$.
Comparing Eq.~\eqref{eq:psi_i_first} with the other expression of $\ket{\psi_i}$:
\begin{equation}\label{eq:psi_i_second}
    \ket{\psi_i} = \sin(\theta_i) \ket{v_i} +\cos(\theta_i)\ket{v_i^\perp},
\end{equation}
and applying $\Pi_M^\bot$ (technically it should be $(I-\Pi_M)\otimes I_{\mathcal{K}_1} \otimes \cdots \otimes I_{\mathcal{K}_i}$, but we use the former notation for simplicity) to both of Eq.~\eqref{eq:psi_i_first} and Eq.~\eqref{eq:psi_i_second}, we obtain:
\begin{equation}\label{eq:cos_v_u}
	\cos(\theta_i) \ket{v_i^\perp} = \cos(\varphi_i) \ket{\mu_i^\perp} +\Pi_M^\bot \ket{\xi_i}.
\end{equation}
Using the triangle inequality $|\, \|\vec{x}\| -\|\vec{y}\|\, | \leq \|\vec{x} -\vec{y}\|$, we know:
\begin{equation}\label{eq:cos_theta_phi}
	|\cos(\theta_i) -\cos(\varphi_i)| \leq \delta_i.
\end{equation}
From Eq.~\eqref{eq:cos_v_u} we also know:
\begin{equation}\label{eq:cos_mu_v}
	\cos(\varphi_i) (\ket{\mu_i^\perp}-\ket{v_i^\perp}) =(\cos(\theta_i) -\cos(\varphi_i))\ket{v_i^\perp} -\Pi_M^\bot \ket{\xi_i}.
\end{equation}
Combing Eq.~\eqref{eq:cos_theta_phi} with Eq.~\eqref{eq:cos_mu_v}, we have:
\begin{equation}\label{eq:cos_mu_v_delta}
	\cos(\varphi_i) \| \ket{\mu_i^\perp} - \ket{v_i^\perp}\| \leq 2\delta_i.
\end{equation}
From Claim~\ref{claim:e_i_upper} we have $\varphi_i \leq \pi/4$ for $i \in\{0,\dots, t-1\}$, and therefore $\cos(\varphi_i) \leq 1/\sqrt{2}$.
Since $2\sqrt{2} < 3$, we have now proven Fact~\ref{fact:delta_i1_1}.
\end{proof}

\begin{fact}\label{fact:delta_i1_2}
    $\| \ket{\mu_{i+1}^\bot} -\ket{\mu_i^\bot,0}\| < 4\beta_{i+1} 3^{i+1} \varphi_0$ for $i \in \{0,\dots,t-2\}$.
\end{fact}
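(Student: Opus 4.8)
The plan is to extract, from the expression for $\ket{\phi_{i+1}}$ already obtained in the proof of Claim~\ref{claim:varphi_i1_3}, an explicit relation between $\ket{\mu_{i+1}^\bot}$ and $\ket{\mu_i^\bot,0}$, and then bound their distance. Recall that there we derived
\begin{equation*}
\ket{\phi_{i+1}} = \sin(3\varphi_i)\ket{\mu_i,0} +\cos(3\varphi_i)\ket{\mu_i^\bot,0} +\sin(2\varphi_i)E_{i+1}\ket{\phi_i^\bot,0}.
\end{equation*}
First I would apply the projection $(I-\Pi_M)$ onto the unmarked subspace to both sides. On the left, since $\ket{\phi_{i+1}} = \sin(\varphi_{i+1})\ket{\mu_{i+1}} + \cos(\varphi_{i+1})\ket{\mu_{i+1}^\bot}$ with $\Pi_M\ket{\mu_{i+1}}=\ket{\mu_{i+1}}$ and $\Pi_M\ket{\mu_{i+1}^\bot}=0$, only $\cos(\varphi_{i+1})\ket{\mu_{i+1}^\bot}$ survives; on the right, $(I-\Pi_M)\ket{\mu_i,0}=0$ and $(I-\Pi_M)\ket{\mu_i^\bot,0}=\ket{\mu_i^\bot,0}$. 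This yields
\begin{equation*}
\cos(\varphi_{i+1})\ket{\mu_{i+1}^\bot} = \cos(3\varphi_i)\ket{\mu_i^\bot,0} + \ket{E}, \qquad \ket{E}:=\sin(2\varphi_i)(I-\Pi_M)E_{i+1}\ket{\phi_i^\bot,0}.
\end{equation*}
Since $\ket{\phi_i^\bot}\perp\ket{\phi_i}$, Fact~\ref{fact:E_i_psi} together with $\norm{I-\Pi_M}\leq 1$ and $\sin(2\varphi_i)\geq 0$ (as $\varphi_i\leq\pi/4$) gives $\norm{\ket{E}}\leq\sin(2\varphi_i)\beta_{i+1}$.

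Writing $c:=\cos(\varphi_{i+1})$, $d:=\cos(3\varphi_i)$, $\ket{a}:=\ket{\mu_{i+1}^\bot}$, $\ket{b}:=\ket{\mu_i^\bot,0}$ (both unit vectors), I would next establish the elementary estimate $\norm{\ket{a}-\ket{b}}\leq 2\norm{\ket{E}}/c$, valid whenever $c>0$ and $d\geq 0$. Indeed, taking norms in $c\ket{a}=d\ket{b}+\ket{E}$ gives $|c-d|\leq\norm{\ket{E}}$ (this is where $d\geq 0$ enters), and then the triangle inequality applied to $\ket{a}-\ket{b}=\tfrac{d-c}{c}\ket{b}+\tfrac{1}{c}\ket{E}$ finishes it. For the hypothesis $c\geq 1/\sqrt{2}$ I would invoke $\varphi_{i+1}\leq\pi/4$, which holds by Claim~\ref{claim:e_i_upper} since $i+1\leq t-1$.

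Finally I would collect constants. Using $\sin(2\varphi_i)\leq 2\sin(\varphi_i)\leq 2(\sin\bar{\varphi}_i+e_i)\leq 2\bar{\varphi}_i(1+\gamma/\pi)\leq 2\bar{\varphi}_i(1+\tfrac{1}{\pi})$, where the middle step uses $|\sin\varphi_i-\sin\bar{\varphi}_i|=e_i\leq\tilde{e}_i\leq\gamma\bar{\varphi}_i/\pi$ (Claims~\ref{claim:e_i_upper} and \ref{claim:e_i_tilde}), $\sin\bar{\varphi}_i\leq\bar{\varphi}_i$, and $\gamma<1$, together with $c\geq 1/\sqrt{2}$, I obtain
\begin{equation*}
\norm{\ket{\mu_{i+1}^\bot}-\ket{\mu_i^\bot,0}}\leq \frac{2\sin(2\varphi_i)}{\cos(\varphi_{i+1})}\beta_{i+1}\leq 4\sqrt{2}\,(1+\tfrac{1}{\pi})\,\bar{\varphi}_i\,\beta_{i+1} < 12\,\bar{\varphi}_i\,\beta_{i+1},
\end{equation*}
and since $\bar{\varphi}_i=3^i\varphi_0$ the right-hand side equals $4\cdot 3^{i+1}\varphi_0\,\beta_{i+1}$, the strict inequality coming from $4\sqrt{2}(1+1/\pi)\approx 7.46<12$. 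The step I expect to be the main obstacle is verifying the hypothesis $d=\cos(3\varphi_i)\geq 0$, equivalently $3\varphi_i\leq\pi/2$: without it the norm bound $|c-d|\leq\norm{\ket{E}}$ fails. This is precisely where the restriction $i\leq t-2$ is used, giving $\bar{\varphi}_i\leq\bar{\varphi}_{t-2}=3^{-2}\bar{\varphi}_t\leq\pi/18$; then $\sin\varphi_i\leq(1+\tfrac{1}{\pi})\bar{\varphi}_i\leq(1+\tfrac{1}{\pi})\pi/18<\sin(\pi/6)$ combined with $\varphi_i\leq\pi/4$ forces $\varphi_i\leq\pi/6$, so that $\cos(3\varphi_i)\geq 0$ as required.
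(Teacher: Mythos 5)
Your proof is correct and follows essentially the same route as the paper's: project Eq.~\eqref{eq:phi_i1_line3} onto the unmarked subspace to get $\cos(\varphi_{i+1})\ket{\mu_{i+1}^\bot} = \cos(3\varphi_i)\ket{\mu_i^\bot,0} + \ket{E}$, bound $\|\ket{E}\|$ via Fact~\ref{fact:E_i_psi}, and use the reverse triangle inequality together with $\cos(\varphi_{i+1})\geq 1/\sqrt{2}$ (from $\varphi_{i+1}\leq\pi/4$, Claim~\ref{claim:e_i_upper}) to obtain the stated bound. Your write-up is in fact slightly more careful than the paper's: you explicitly verify the hypothesis $\cos(3\varphi_i)\geq 0$ (by showing $\varphi_i\leq\pi/6$ when $i\leq t-2$, using $\bar{\varphi}_i\leq\bar{\varphi}_t/9\leq\pi/18$ and Claims~\ref{claim:e_i_upper} and~\ref{claim:e_i_tilde}), a sign condition that the paper's inequality~\eqref{eq:cos_phi_i_1} relies on implicitly without justification.
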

\begin{proof}
Recall from Eq.~\eqref{eq:phi_i1_line3} that
\begin{align}\label{eq:phi_i1_first}
	\ket{\phi_{i+1}} &= \sin(3\varphi_i)\ket{\mu_i,0} +\cos(3\varphi_i) \ket{\mu_i^\bot,0} +\ket{\omega_{i+1}},
\end{align}
where $\ket{\omega_{i+1}} :=\sin(2\varphi_i) E_{i+1}\ket{\phi_i^\bot,0}$.
Recall from Eqs.~\eqref{eq:claim_3_line2}, \eqref{eq:claim_3_line3} that $\| \ket{\omega_{i+1}} \| \leq \beta_{i+1} |\sin(2\varphi_i)|$, and from Eqs.~\eqref{eq:e_i1_line2}-\eqref{eq:e_i1_line6} that $\beta_{i+1} |\sin(2\varphi_i)| \leq 4\beta_{i+1} \bar{\varphi}_i$ for $i \leq t-1$.
Thus $\| \ket{\omega_{i+1}} \| \leq 4\beta_{i+1} 3^i \varphi_0$, since $\bar{\varphi}_0 =\varphi_0$.
Comparing Eq.~\eqref{eq:phi_i1_first} with the other expression of $\ket{\phi_{i+1}}$:
\begin{equation}\label{eq:phi_i1_second}
    \ket{\phi_{i+1}} = \sin(\varphi_{i+1})\ket{\mu_{i+1}} +\cos(\varphi_{i+1})\ket{\mu_{i+1}^\bot},
\end{equation}
and applying $\Pi_M^\bot$ to both Eq.~\eqref{eq:phi_i1_first} and Eq.~\eqref{eq:phi_i1_second}, we obtain:
\begin{equation}\label{eq:cos_phi_v_u}
	\cos(\varphi_{i+1}) \ket{\mu_{i+1}^\bot} = \cos(3\varphi_{i})\ket{\mu_{i}^\bot,0} +\Pi_M^\bot \ket{\omega_{i+1}}.
\end{equation}
Using the triangle inequality $|\, \|\vec{x}\| -\|\vec{y}\|\, | \leq \|\vec{x} -\vec{y}\|$, we know:
\begin{equation}\label{eq:cos_phi_i_1}
	|\cos(\varphi_{i+1}) -\cos(3\varphi_i)| \leq \| \ket{\omega_{i+1}} \|.
\end{equation}
From Eq.~\eqref{eq:cos_phi_v_u} we also know:
\begin{equation}\label{eq:cos_mu_i_1}
	\cos(\varphi_{i+1})(\ket{\mu_{i+1}^\bot} -\ket{\mu_i^\bot,0}) = (\cos(3\varphi_i) -\cos(\varphi_{i+1}))\ket{\mu_i^\perp} +\Pi_M^\bot \ket{\omega_{i+1}}.
\end{equation}
Combing Eq.~\eqref{eq:cos_phi_i_1} with Eq.~\eqref{eq:cos_mu_i_1}, we have:
\begin{equation}
	\cos(\varphi_{i+1}) \| \ket{\mu_{i+1}^\bot} -\ket{\mu_i^\bot,0}\| \leq  2\cdot 4\beta_{i+1} 3^i \varphi_0.
\end{equation}
From Claim~\ref{claim:e_i_upper} we have $\varphi_i \leq \pi/4$ for $i \in\{0,\dots, t-1\}$, and therefore $\cos(\varphi_{i+1}) \leq 1/\sqrt{2}$ for $i\leq t-2$.
Since $2\sqrt{2} < 3$, we have now proven Fact~\ref{fact:delta_i1_2}.
\end{proof}

\begin{fact}\label{fact:delta_i1_3}
    $\| \ket{\mu_0^\perp} -\ket{\pi} \| \leq \varphi_0$.
\end{fact}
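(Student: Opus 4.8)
The plan is to reduce this to a two-dimensional planar geometry computation and then apply the elementary bound $\sin(x) \leq x$. First I would recall that the base case $i=0$ gives $A_0 = I$, so $\ket{\phi_0} = \ket{\pi}$, and the decomposition introduced in the proof of Claim~\ref{claim:varphi_i1_3} specializes to $\ket{\pi} = \sin(\varphi_0)\ket{\mu_0} + \cos(\varphi_0)\ket{\mu_0^\perp}$, where $\ket{\mu_0} := \Pi_M\ket{\pi}/\sin(\varphi_0)$ and $\ket{\mu_0^\perp} := (I-\Pi_M)\ket{\pi}/\cos(\varphi_0)$ are the normalized marked and unmarked components, and $\varphi_0 = \bar{\varphi}_0 = \arcsin(\sqrt{p_M}) \in (0,\pi/2]$. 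The key point is that, by these definitions, the two coefficients $\sin(\varphi_0)$ and $\cos(\varphi_0)$ are real and nonnegative, so the decomposition holds with no hidden phases.

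Next I would substitute this decomposition into the difference $\ket{\mu_0^\perp} - \ket{\pi}$, obtaining $\ket{\mu_0^\perp} - \ket{\pi} = (1-\cos(\varphi_0))\ket{\mu_0^\perp} - \sin(\varphi_0)\ket{\mu_0}$. Since $\ket{\mu_0}$ lies in the marked subspace and $\ket{\mu_0^\perp}$ in its orthogonal complement, these two vectors are orthonormal, so the squared norm splits as a sum of squares:
\begin{equation*}
\| \ket{\mu_0^\perp} - \ket{\pi} \|^2 = (1-\cos(\varphi_0))^2 + \sin^2(\varphi_0) = 2(1-\cos(\varphi_0)) = 4\sin^2(\varphi_0/2),
\end{equation*}
using $\sin^2 + \cos^2 = 1$ and the half-angle identity $1-\cos(\theta) = 2\sin^2(\theta/2)$.

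Finally, taking square roots gives $\| \ket{\mu_0^\perp} - \ket{\pi} \| = 2\sin(\varphi_0/2)$ (the absolute value is unnecessary because $\varphi_0/2 \in (0,\pi/4]$, so $\sin(\varphi_0/2) > 0$), and the desired bound follows immediately from $\sin(x) \leq x$ for $x \geq 0$ applied at $x = \varphi_0/2$, yielding $2\sin(\varphi_0/2) \leq \varphi_0$. I do not anticipate any real obstacle here: the statement is a purely two-dimensional fact about the angle $\varphi_0$ between $\ket{\pi}$ and the unmarked subspace. The only point requiring a little care is ensuring the decomposition coefficients are genuinely real and nonnegative so that the orthogonal split of the squared norm is clean; once that is secured, the remainder is a one-line trigonometric estimate.
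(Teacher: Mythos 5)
Your proof is correct and is in substance the same as the paper's: the paper simply notes that $\varphi_0$ is the angle between the unit vectors $\ket{\mu_0^\perp}$ and $\ket{\pi}$ and invokes ``chord length $\leq$ arc length,'' which is precisely the inequality $\| \ket{\mu_0^\perp} -\ket{\pi} \| = 2\sin(\varphi_0/2) \leq \varphi_0$ that you derive explicitly from the orthogonal decomposition. Your version just makes the underlying algebra of that geometric fact explicit, which is fine.
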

\begin{proof}
    Since $\varphi_0$ is the angle between the unit vectors $\ket{\mu_0^\perp}$ and $\ket{\pi}$ by Fig.~\ref{fig:phi_subspace}, using the geometric fact that the chord length is less than the arc length, we have $\| \ket{\mu_0^\perp} -\ket{\pi} \| \leq \bar{\varphi}_0$.
\end{proof}

\end{document}